\newcommand{\ifabs}[2]{#2}
\newcommand{\todo}[1]{\typeout{TODO: \the\inputlineno: #1}\textbf{[[[ #1 ]]]}}
\newcommand{\concept}[1]{\textbf{#1}}
\newtheorem{theorem}{Theorem}
\newtheorem{lemma}[theorem]{Lemma}
\newtheorem{corollary}[theorem]{Corollary}
\newtheorem{definition}[theorem]{Definition}
\newtheorem{Claim}[theorem]{Claim}
\newtheorem{proposition}[theorem]{Proposition}
\newcommand{\newloglike}[2]{\newcommand{#1}{\mathop{\rm #2}\nolimits}}
\newloglike{\sgn}{sgn}
\newcommand{\nul}[1]{{\it et al.\/}}
\newenvironment{proof}{\noindent{\bf Proof: }}{\nopagebreak\rule{1 ex}{0.8 em}\medskip}
\newcommand{\family}[1]{\mathcal{{#1}}}
\newcommand{\PM}{\mathrm{PM}}
\newcommand{\ANN}{\mathrm{ANN}}
\newcommand{\RO}{\mathrm{Butterfly\mbox{-}RO}}
\newcommand{\tabincell}[2]{\begin{tabular}{@{}#1@{}}#2\end{tabular}}
\begin{document}

\title{Certificates in Data Structures}
\author{Yaoyu Wang~\thanks{State Key Laboratory for Novel Software Technology, Nanjing University, China.}~\thanks{Email: \texttt{yaoyu.wang.nju@gmail.com}.}
\and Yitong Yin\footnotemark[1]~\thanks{Supported by NSFC grants 61272081, 61003023 and 61321491. Email: \texttt{yinyt@nju.edu.cn}.}  }

\date{}

\maketitle

\begin{abstract}
We study certificates in static data structures.
In the cell-probe model, certificates are the cell probes which can uniquely identify the answer to the query. As a natural notion of nondeterministic cell probes, lower bounds for  certificates in data structures immediately imply deterministic cell-probe lower bounds.
In spite of this extra power brought by  nondeterminism,
we prove that two widely used tools for cell-probe lower bounds: richness lemma of Miltersen~\textit{et al.}~\cite{miltersen1998data} and direct-sum richness lemma of P\v{a}tra\c{s}cu and Thorup~\cite{patrascu2010higher}, both hold for certificates in data structures with even better parameters.
Applying these lemmas and adopting existing reductions, we obtain certificate lower bounds for a variety of static data structure problems.
These certificate lower bounds are at least as good as the highest known cell-probe lower bounds for the respective problems.
In particular, for  approximate near neighbor (ANN) problem in Hamming distance,
our lower bound improves  the state of the art.
When the space is strictly linear, our lower bound for ANN in $d$-dimensional Hamming space becomes $t=\Omega(d)$, which along with the recent breakthrough for polynomial evaluation of Larsen~\cite{larsen2012higher}, are the only two $t=\Omega(d)$ lower bounds ever proved for any problems in the cell-probe model.

\end{abstract}

\section{Introduction}\label{section-intro}



In static data structure problems, a database is preprocessed to form a table according to certain encoding scheme, and upon each query to the database, an algorithm (decision tree) answers the query by adaptively probing the table cells. The complexity of this process is captured by the cell-probe model for static data structures. Solutions in this model are called cell-probing schemes.

The cell-probe model plays a central role in studying data structure lower bounds.
The existing cell-probe lower bounds for static data structure problems can be classified into the following three categories according to the techniques they use and the highest possible lower bounds supported by these techniques:
\begin{itemize}
\item Lower bounds implied by asymmetric communication complexity: Classic techniques introduced in the seminal work of Miltersen~\textit{et al.}~\cite{miltersen1998data} see a cell-probing scheme as a communication protocol between the query algorithm and the table, and the cell-probe lower bounds are implied by the asymmetric communication complexity lower bounds which are proved by the \emph{richness lemma} or round eliminations. In the usual setting that both query and data items are points from a $d$-dimensional space, the highest time lower bound that can be proved in this way is $t=\Omega\left(\frac{d}{\log s}\right)$ with a table of  $s$ cells. This bound is a barrier for the technique, because a matching upper bound can always be achieved by communication protocols.

\item Lower bounds proved by self-reduction using direct-sum properties:
The seminal works of P\v{a}tra\c{s}cu and Thorup~\cite{patrascu06pred,patrascu2010higher} introduce a very smart idea of many-to-one self-reductions, using which and by exploiting the direct-sum nature of problems, higher lower bounds can be proved for a near-linear space. The highest lower bounds that can be proved in this way is $t=\Omega\left({d}/{\log \frac{sw}{n}}\right)$ with a table of  $s$ cells each containing $w$ bits. Such lower bounds grow differently with near-linear space and polynomial space, which is indistinguishable in the communication model. 


\item Higher lower bounds for linear space: A recent breakthrough of Larsen~\cite{larsen2012higher} uses a technique refined from the cell sampling technique of Panigrahy \textit{et al.}~\cite{panigrahy2008geometric,panigrahy2010lower} to prove an even higher lower bound for the polynomial evaluation problem. This lower bound behaves as $t=\Omega(d)$ when the space is strictly linear. This separates for the first time between the cell-probe complexity with linear and near-linear spaces, and also achieves the highest cell-probe lower bound ever known for any data structure problems.

\end{itemize}

In this paper, we consider an even stronger model: \emph{certificates} in static data structures. A query to a database is said to have certificate of size $t$ if the answer to the query can be uniquely identified by the contents of $t$ cells in the table. This very natural notion represents the nondeterministic computation in cell-probe model and is certainly a lower bound to the complexity of deterministic cell-probing schemes.
This nondeterministic model has been explicitly considered before in a previous work~\cite{yin2010cell} of one of the authors of the current paper. 

Surprisingly, in spite of the seemingly extra power brought by the nondeterminism,
the highest cell-probe lower bound to date is in fact a certificate lower bound~\cite{larsen2012higher}.
Indeed, we  conjecture that for typical data structure problems, especially those hard problems, \emph{the complexity of certifying the answer should  dominate that of computing the answer}.\footnote{Interestingly, the only known exception to this conjecture is the predecessor search problem whose cell-probe complexity is a mild super-constant while the queries can be easily certified with constant cells in a sorted table.} This belief has been partially justified in~\cite{yin2010cell} by showing that a random static data structure problem is hard nondeterministically. 
In this paper, we further support this conjecture by showing that several mainstream techniques for cell-probe lower bounds in fact can imply as good or even higher certificate lower bounds.

\subsection{Our contributions}
We make the following contributions:
\begin{enumerate}
\item 
We prove a richness lemma for certificates in data structures, which improves the classic richness lemma for asymmetric communication complexity of Miltersen~\textit{et al.}~\cite{miltersen1998data} in two ways: (1) when applied to prove data structure lower bounds, our richness lemma implies lower bounds for a stronger \emph{nondeterministic} model; and (2) our richness lemma achieves better parameters than the classic richness lemma and may imply higher lower bounds.




\item 
We give a scheme for proving certificate lower bounds using a similar direct-sum based self-reduction of P\v{a}tra\c{s}cu and Thorup~\cite{patrascu2010higher}. 
The certificate lower bounds obtained from our scheme is at least as good as before when the space is near-linear or polynomial.
And for strictly linear space, our technique may support superior lower bounds, which was impossible for the direct-sum based techniques before. 

\item 
By applying these techniques, adopting the existing reductions, and modifying the reductions in the communication model to be model-independent, we prove certificate lower bounds for a variety of static data structure problems, listed in Table~\ref{table-results}. All these certificate lower bounds are at least as good as the highest known cell-probe lower bounds for the respective problems. And for approximate near neighbor (ANN), our $t=\Omega\left(d/\log\frac{sw}{nd}\right)$ lower bound improves the state of the art. When the space $sw=O(nd)$ is strictly linear, our lower bound for ANN becomes $t=\Omega(d)$, which along with the recent breakthrough for polynomial evaluation~\cite{larsen2012higher}, are the only two $t=\Omega(d)$ lower bounds ever proved for any problems in the cell-probe model.
\end{enumerate}

\begin{table}[tbt]
\begin{tabular}{|c|c|c|}
\hline
problem & \tabincell{c}{certificate lower bound\\ proved here} & \tabincell{c}{highest known\\ cell-probe lower bound}\\
\hline
bit-vector retrieval & $t=\Omega\left(\frac{m\log n}{\log s}\right)$& not known\\
\hline
lopsided set disjointness (LSD) & $t=\Omega\left(\frac{m\log n}{\log s}\right)$ & $t=\Omega\left(\frac{m\log n}{\log s}\right)$~\cite{miltersen1998data,patrascu06eps2n,patrascu11structures}\\
\hline
\tabincell{c}{ approximate near neighbor (ANN) \\ in Hamming space} & $t=\Omega\left(d/\log\frac{sw}{nd}\right)^\diamond$ & $t=\Omega\left(d/\log\frac{sw}{n}\right)^{\star}$~\cite{patrascu2010higher,panigrahy2008geometric} \\
\hline
partial match (PM) & $t=\Omega\left(d/\log\frac{sw}{n}\right)^\star$ & $t=\Omega\left(d/\log\frac{sw}{n}\right)^{\star}$~\cite{patrascu2010higher,panigrahy2008geometric} \\
\hline
3-ANN in $\ell_\infty$ & $t=\Omega\left(d/\log\frac{sw}{n}\right)^\star$ & $t=\Omega\left(d/\log\frac{sw}{n}\right)^{\star}$~\cite{patrascu2010higher} \\
\hline
\tabincell{c}{ reachability oracle \\ 2D stabbing \\ 4D range reporting } & $t=\Omega\left(\log n/\log\frac{sw}{n}\right)^\star$ & $t=\Omega\left(\log n/\log\frac{sw}{n}\right)^{\star}$~\cite{patrascu11structures} \\
\hline
2D range counting & $t=\Omega\left(\log n/\log\frac{sw}{n}\right)^\star$ & $t=\Omega\left(\log n/\log\frac{sw}{n}\right)^{\star}$~\cite{patrascu2007lower,patrascu11structures}\\
\hline
approximate distance oracle & $t=\Omega\left(\frac{\log n}{\alpha \log(s\log n/n)}\right)^\star$ & $t=\Omega\left(\frac{\log n}{\alpha \log(s\log n/n)}\right)^{\star}$~\cite{sommer2009distance}\\
\hline
\end{tabular}
\begin{tabular}{rl}
$\star$: & lower bound which grows differently with near-linear and polynomial space;\\
$\diamond$: & lower bound which grows differently with linear, near-linear, and polynomial space.
\end{tabular}
\caption{Certificate lower bounds proved in this paper.}\label{table-results}
\end{table}

\subsection{Related work}
The richness lemma, along with the round elimination lemma, for asymmetric communication complexity was introduced in~\cite{miltersen1998data}. The richness lemma was later widely used, for example in~\cite{borodin1999lower,barkol2000tighter,jayram2003cell,liu2004strong}, to prove lower bounds for high dimensional geometric problems, e.g.~nearest neighbor search. In~\cite{patrascu06eps2n,patrascu11structures}, a generalized version of richness lemma was proved to imply lower bounds for (Monte Carlo) randomized data structures.
A direct-sum richness theorem was first proved in the conference version of~\cite{patrascu2010higher}. Similar but less involved many-to-one reductions were used in~\cite{patrascu11structures} and~\cite{sommer2009distance} for proving lower bounds for certain graph oracles.



The idea of cell sampling was implicitly used in~\cite{patrascu2007lower} and independently in~\cite{panigrahy2008geometric}. This novel technique was later fully developed in~\cite{panigrahy2010lower} for high dimensional geometric problems and in~\cite{larsen2012higher} for polynomial evaluation. The lower bound in~\cite{larsen2012higher} actually holds for nondeterministic cell probes, i.e.~certificates.
The nondeterministic cell-probe complexity was studied for dynamic data structure problems in~\cite{husfeldt1998hardness} and for static data structure problems in\cite{yin2010cell}.

\section{Certificates in data structures}
A data structure problem is a function $f:X\times Y\rightarrow Z$ with two domains $X$ and $Y$. We call each $x\in X$ a \concept{query} and each $y\in Y$ a \concept{database}, and $f(x,y)\in Z$ specifies the result of query $x$ on database $y$. A code $T:Y\to\Sigma^s$ with an alphabet $\Sigma=\{0,1\}^w$ transforms each database $y\in Y$ to a \concept{table} $T_y=T(y)$ of  $s$ \concept{cells} each containing $w$ bits.
We use $[s]=\{1,2,\ldots,s\}$ to denote the set of indices of cells, and for each $i\in[s]$, we use $T_y(i)$ to denote the content of the $i$-th cell of table $T_y$.

A data structure problem is said to have \concept{$(s,w,t)$-certificates}, if  any database can be stored in a table of $s$ cells each containing $w$ bits, so that the result of each query can be uniquely determined by contents of at most $t$ cells. Formally, we have the following definition.
\begin{definition}\label{definition-certificate}
A data structure problem $f:X\times Y\rightarrow Z$ is said to have \concept{$(s,w,t)$-certificates},
if there exists a code $T:Y\rightarrow\Sigma^s$ with an alphabet $\Sigma=\{0,1\}^w$, such that for any query $x\in X$ and any database $y\in Y$, there exists a subset $P\subseteq[s]$ of cells with $|P|=t$, such that for any database $y'\in Y$, we have $f(x,y')=f(x,y)$ if $T_{y'}(i)=T_y(i)$ for all $i\in P$. 
\end{definition}

Because certificates represent nondeterministic computation in data structures, it is obvious that it has stronger computational power than cell-probing schemes.

\begin{proposition}\label{lemma-certificate-vs-probe}
For any data structure problem $f$, if there is a cell-probing scheme storing every database in $s$ cells each containing $w$ bits and answering every query within $t$ cell-probes, then $f$ has $(s,w,t)$-certificates.
\end{proposition}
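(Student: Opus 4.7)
The plan is to show that the set of cells probed by the deterministic decision tree on input $(x,y)$ already constitutes a certificate in the sense of Definition~\ref{definition-certificate}. Fix a cell-probing scheme with table encoding $T:Y\to\Sigma^s$ and, for each query $x\in X$, a decision tree $\mathcal{D}_x$ of depth at most $t$ whose internal nodes are labeled by cell indices and whose branches are indexed by the $w$-bit contents read from the probed cell.

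The first step is to run $\mathcal{D}_x$ on database $y$ and collect the set $P\subseteq[s]$ of cell indices that are probed along the executed root-to-leaf path; by assumption $|P|\le t$. If $|P|<t$, I will pad $P$ by an arbitrary set of extra indices to make $|P|=t$, as required by Definition~\ref{definition-certificate}; this padding is harmless since enlarging a certifying set preserves the certifying property.

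The second step is the correctness argument. Let $y'\in Y$ be any database with $T_{y'}(i)=T_y(i)$ for all $i\in P$. I claim that $\mathcal{D}_x$, when run on $y'$, traverses exactly the same root-to-leaf path as on $y$. This is a straightforward induction on the depth of the node visited: the first cell index $i_1$ probed depends only on $x$, and since $i_1\in P$ we have $T_{y'}(i_1)=T_y(i_1)$, so the next branch taken is identical; inductively, the $k$-th cell index probed depends only on $x$ and on the contents read from cells $i_1,\dots,i_{k-1}$, all of which lie in $P$ and therefore agree on $y$ and $y'$. Hence $\mathcal{D}_x$ reaches the same leaf, outputting the same answer, so $f(x,y')=f(x,y)$.

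There is no real obstacle here; the statement is essentially the observation that a deterministic cell-probing algorithm is a special case of a nondeterministic one (the probes it actually performs serve as the nondeterministic witness). The only mild subtlety is the exact-size condition $|P|=t$ in the definition, which is handled by the padding step above.
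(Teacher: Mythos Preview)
Your proof is correct. The paper does not actually give a proof of this proposition; it simply states it as obvious, prefacing it with the remark that ``certificates represent nondeterministic computation in data structures, [so] it is obvious that [they have] stronger computational power than cell-probing schemes.'' Your argument is exactly the natural way to make this precise: the cells touched by the deterministic decision tree form a certificate, verified by the standard induction showing the computation path is determined by the contents of those cells. The padding observation to meet the exact-size requirement $|P|=t$ is correct and is the only point that needs any comment at all.
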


\newcommand{\CertificateFormulation}{
Data structure certificates can be equivalently formulated as proof systems as well as certificates in decision trees of partial functions.

\paragraph*{As proof systems.}
In a previous work~\cite{yin2010cell}, an equivalent  formulation of data structure certificates as proof systems is used.
A data structure problem $f:X\times Y\rightarrow Z$ has $(s,w,t)$-certificates if and only if there exist a code $T:Y\rightarrow\Sigma^s$ with an alphabet $\Sigma=\{0,1\}^w$ and
a verifier $V:\{0,1\}^*\to Z\cup\{\bot\}$ where $\bot$ is a special symbol not in $Z$ indicating the failure of verification, so that for any query $x\in X$ and any database $y\in Y$, the followings are satisfied:
\begin{itemize}
\item Completeness: $\exists P\subseteq[s]$ with $|P|=t$ such that $V(x,\langle i,T_y(i)\rangle_{i\in P})=f(x,y)$;
\item Soundness: $\forall P'\subseteq[s]$ with $|P'|=t$, $V(x,\langle i,T_y(i)\rangle_{i\in P'})\in\{f(x,y),\bot\}$;
\end{itemize}
where $\langle i,T_y(i)\rangle_{i\in P}$ denotes the sequence of pairs $\langle i,T_y(i)\rangle$ for all $i\in P$.

\paragraph*{As certificates in decision trees.}
Certificate is a well-known notion is studies of decision tree complexity (see~\cite{buhrman2002complexity} for a survey).
A certificate in a Boolean function $h:\{0,1\}^n\to \{0,1\}$  for an input $x\in\{0,1\}^n$ is a subset $i_1,i_2,\ldots,i_t\in[n]$ of $t$ bits in $x$ such that for every $x'\in\{0,1\}^n$ satisfying that $x'(i_j)=x(i_j)$ for all $1\le j\le t$, it holds that $h(x)=h(x')$. And the certificate complexity of $h$, denoted by $C(h)$, is the minimum number of bits in a certificate in the worst-case of input $x$. The certificates and certificate complexity $C(h)$ can be naturally generalized to partial function $h:\Sigma^s\to Z$ with non-Boolean domain $\Sigma$ and range $Z$.

Given a data structure problem $f:X\times Y\rightarrow Z$, and a code $T:Y\rightarrow\Sigma^s$ with an alphabet $\Sigma=\{0,1\}^w$, for each query $x\in X$, the function $f$ can be naturally transformed into a partial function $f^T_x:\Sigma^s\to Z$ so that $f^T_x(T_y)=f(x,y)$ for every database $y\in Y$ and $f_x^T$ is not defined elsewhere. It is easy to verify that a data structure problem $f:X\times Y\rightarrow Z$ has $(s,w,t)$-certificates if and only if there exists a code $T:Y\rightarrow\Sigma^s$ with an alphabet $\Sigma=\{0,1\}^w$ such that $\max_{x\in X}C(f^T_x)\le t$, where $C(f^T_x)$ is the certificate complexity of the partial function $f^T_x:\Sigma^s\to Z$.
}

\ifabs{
In Appendix~\ref{appendix-certificate-formulation}, we state the equivalent formulations of
data structure certificates as proof systems and certificates in decision trees of partial functions.
}{
\CertificateFormulation
}





\section{The richness lemma}\label{section-characterization}
From now on, we focus on the decision problems where the output is either 0 or 1.
A data structure problem $f:X\times Y\rightarrow\{0,1\}$ can be naturally treated as an $|X|\times |Y|$ matrix whose rows are indexed by queries $x\in X$ and columns are indexed by data $y\in Y$.
The entry at the $x$-th row and $y$-th column is $f(x,y)$.
For $\xi\in\{0,1\}$, we say $f$ has a \concept{monochromatic $\xi$-rectangle} of size $k\times\ell$ if there is a combinatorial rectangle $A\times B$ with $A\subseteq X, B\subseteq Y, |A|=k$ and $|B|=\ell$, such that $f(x,y)=\xi$ for all $(x,y)\in A\times B$.
A matrix $f$ is said to be \concept{$(u,v)$-rich} if at least $v$ columns contain at least $u$ 1-entries.
The following richness lemma for cell-probing schemes is introduced in \cite{miltersen1998data}.
\begin{lemma}[Richness Lemma \cite{miltersen1998data}]\label{lemma-richness}
Let $f$ be a $(u,v)$-rich problem. If $f$ has an $(s,w,t)$-cell-probing scheme, then $f$ contains a monochromatic 1-rectangle of size $\frac{u}{2^{t\log s}}\times \frac{v}{2^{wt+t\log s}}$.
\end{lemma}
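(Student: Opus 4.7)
The plan is to view the $(s,w,t)$-cell-probing scheme as an asymmetric communication protocol in which Alice (holding query $x$) sends $\log s$ bits per round (a cell index) and Bob (holding database $y$) answers with $w$ bits per round (the cell contents), for $t$ rounds. Such a protocol partitions $X\times Y$ into at most $s^t\cdot 2^{wt}=2^{t\log s+wt}$ monochromatic combinatorial rectangles, one per leaf of the protocol tree, and the rectangle property of leaves drives the whole argument. The overall strategy is a double pigeonhole: first identify a dominant 1-leaf within each rich column, then identify a transcript that is dominant across columns.

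Let $B\subseteq Y$ consist of $v$ rich columns and, for each $y\in B$, write $A_y=\{x:f(x,y)=1\}$, so $|A_y|\ge u$. Collapsing the protocol tree to a fixed $y$ resolves all of Bob's responses and leaves only Alice's branching, producing a subtree with at most $s^t=2^{t\log s}$ leaves, among which the queries in $A_y$ are partitioned. Averaging gives a single leaf hit by at least $u/2^{t\log s}$ queries; record for each $y$ the transcript $(a(y),b(y))\in[s]^t\times(\{0,1\}^w)^t$ describing this dominant leaf. A second pigeonhole over $y\in B$ then finds a fixed pair $(a^\ast,b^\ast)$ shared by at least $v/2^{t\log s+wt}=v/2^{wt+t\log s}$ columns; call this subset $B^\ast$. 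For every $y\in B^\ast$ the constraint $b(y)=b^\ast$ forces $T_y(a^\ast_j)=b^\ast_j$ for each $j$, so $B^\ast$ is a legitimate column-side of a combinatorial rectangle.

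The final and most delicate step is to verify that the corresponding row-side $A^\ast$ is column-independent, i.e., that the set of queries whose probe sequence is $a^\ast$ given hypothetical replies $b^\ast$, and which outputs $1$ at the resulting leaf, depends only on $(a^\ast,b^\ast)$ and on the algorithm, and never on $y$ itself. This is precisely the rectangle property of protocol leaves applied to one fixed leaf: once granted, each per-column dominant-leaf row-set of size $\ge u/2^{t\log s}$ from the first pigeonhole coincides with a single universal set $A^\ast$, and $A^\ast\times B^\ast$ is the desired monochromatic 1-rectangle of size $\tfrac{u}{2^{t\log s}}\times\tfrac{v}{2^{wt+t\log s}}$. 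I expect this consistency check to be the only non-routine part of the argument; the two pigeonhole losses $2^{t\log s}$ and $2^{wt+t\log s}$ are exactly the factors appearing in the stated bound, so no slack calculation is required.
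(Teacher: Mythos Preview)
Your argument is correct. The paper does not actually prove this lemma; it simply cites Miltersen~\textit{et~al.}\ and observes that an $(s,w,t)$-cell-probing scheme yields an $[A,B]$-protocol with $A=t\log s$ and $B=wt$, after which the communication richness lemma applies. What you have written is an explicit unwinding of that citation: you specialize the protocol view to cell-probing and carry out the rectangle extraction directly.

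One minor stylistic difference worth noting: the paper remarks that the classical proof in Miltersen~\textit{et~al.}\ is a \emph{halving argument}, i.e.\ a round-by-round induction that at each message loses a factor of $2$ in the richness parameters. Your proof instead pigeonholes on full transcripts (leaves) in one shot. The two are equivalent in strength and yield identical constants, but your organization makes the rectangle property of the leaf explicit in a single verification step rather than maintaining it as an inductive invariant. Either way, the content matches the paper's intended route.
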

In \cite{miltersen1998data}, the richness lemma is proved for asymmetric communication protocols. A communication protocol between two parties Alice and Bob is called an $[A,B]$-protocol if Alice sends Bob at most $A$ bits and Bob sends Alice at most $B$ bits in total in the worst-case.
The richness lemma states that existence of $[A,B]$-protocol for a $(u,v)$-rich problem $f$ implies a submatrix of dimension $\frac{u}{2^{A}}\times \frac{v}{2^{A+B}}$ containing only 1-entries.
An $(s,w,t)$-cell-probing scheme can imply an $[A,B]$-protocol with $A=t\log s$ and $B=wt$, so the above richness lemma for the cell-probing schemes follows.

\subsection{Richness lemma for certificates}
We prove a richness result for data structure certificates, with even a better reliance on parameters. 
\begin{lemma}[Richness Lemma for data structure certificates]\label{lemma-richness-cert}
Let $f$ be a $(u,v)$-rich problem. If $f$ has $(s,w,t)$-certificates, then $f$ contains a monochromatic 1-rectangle of size $\frac{u}{\binom{s}{t}}\times \frac{v}{\binom{s}{t}2^{wt}}$.
\end{lemma}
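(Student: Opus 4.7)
The plan is to prove the lemma by a two-level pigeonhole argument directly on the set of certificates, exploiting the fact that each certificate is a pair $(P,c)$ with $P\subseteq[s]$ of size $t$ and $c\in\{0,1\}^{wt}$, for a total of only $\binom{s}{t}\cdot 2^{wt}$ possibilities --- precisely the denominator appearing in the lemma. For each 1-entry $(x,y)$, the $(s,w,t)$-certificate property supplies such a pair with $c=(T_y(i))_{i\in P}$, and guarantees that $f(x,y')=1$ for every $y'$ that agrees with $y$ on $P$.

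First I would fix the set of $v$ rich columns $y$, each carrying $\geq u$ 1-entries, and for each such 1-entry $(x,y)$ pin down one certificate $(P_{x,y},c_{x,y})$. Inside a single rich column $y$, the index set $P_{x,y}$ ranges over only $\binom{s}{t}$ values, so by pigeonhole there is some $P_y^*$ that serves as the certificate index set for at least $u/\binom{s}{t}$ of the 1-queries in that column; the content $c_y^*=(T_y(i))_{i\in P_y^*}$ is then determined by $y$. Next, since each of the $v$ rich columns now contributes a pair $(P_y^*,c_y^*)$ drawn from the pool of $\binom{s}{t}\cdot 2^{wt}$ possible certificates, a second pigeonhole produces a common pair $(P^*,c^*)$ shared by a set $B$ of at least $v/(\binom{s}{t}\cdot 2^{wt})$ rich columns.

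Finally I would take $A$ to be the union over $y\in B$ of the $\geq u/\binom{s}{t}$ queries $x$ for which $(P^*,c^*)$ certifies $(x,y)$; one column's contribution already gives $|A|\geq u/\binom{s}{t}$. The crucial observation is that a union (rather than intersection) suffices: every $y'\in B$ satisfies $(T_{y'}(i))_{i\in P^*}=c^*$ by construction, so whenever $(P^*,c^*)$ certifies $f(x,y)=1$ for some $y\in B$, the certification property automatically forces $f(x,y')=1$ for every $y'\in B$. Hence $A\times B$ is a monochromatic 1-rectangle of the claimed dimensions.

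The main obstacle I anticipate is articulating cleanly why the union step is valid; this is precisely the feature that lets certificates avoid the extra $2^{t\log s}$ factor appearing in the classical richness lemma for cell-probing schemes, and it depends on the fact that the certification condition depends only on the contents of cells in $P^*$, while every column in $B$ shares that content by design.
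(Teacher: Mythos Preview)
Your proof is correct but takes a different, more elementary route than the paper. The paper first reduces to the $1$-probe case via the one-line observation (Lemma~\ref{lemma-certificate-reduction}) that $(s,w,t)$-certificates yield $\bigl(\binom{s}{t}, wt, 1\bigr)$-certificates by storing every $t$-subset of cells as a single super-cell. For the $1$-probe case it then invokes an auxiliary averaging lemma (Lemma~\ref{lemma-bad-data}): given several partitions of $V$ each into at most $k$ blocks, some single $y\in V$ lies in a large block of \emph{every} partition simultaneously; this produces one database $y$ whose class $\family{P}_i(y)$ is large for all cells $i$ at once, after which one pigeonholes over $i$ using the identity $X_1(y)=\bigcup_i X_1(\family{P}_i(y))$. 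You instead swap the order of the two pigeonholes---first pick the popular certificate set $P_y^*$ within each rich column, then group columns by the pair $(P_y^*,c_y^*)$---which eliminates any need for the simultaneous-largeness lemma and handles general $t$ directly without the reduction to one probe. Your ``union suffices'' observation is the counterpart of the paper's claim $X_1(y)=\bigcup_i X_1(\family{P}_i(y))$; both exploit that certification depends only on the contents of the designated cells, so agreement on $P^*$ propagates the $1$-answer across all of $B$. The paper's decomposition has the expository advantage of isolating the $\binom{s}{t}$-versus-$s^t$ gain into a single reusable reduction step, which it later reuses conceptually in the direct-sum argument; your direct argument is shorter and avoids the probabilistic lemma altogether.
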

\noindent\textbf{Remark.}
Note that we always have $\log {s\choose t}={t\log\frac{s}{t}+O(t)}\le {t\log s}$.
The bound in Lemma~\ref{lemma-richness-cert} is at least as good as the bound in classic richness lemma, even though now it is proved for nondeterministic computation. When $s$ and $t$ are close to each other, the bound in Lemma~\ref{lemma-richness-cert} is substantially better than that of classic richness lemma. Later in Section~\ref{section-direct-sum-property}, this extra gain is used in direct-sum reductions introduced in~\cite{patrascu2010higher} to achieve better time lower bounds for linear or near-linear space which match or improve state of the art. It is quite shocking to see all these achieved through a very basic reduction to the 1-probe case to be introduced later.

\paragraph{}
The classic richness lemma for asymmetric communication protocol is proved by a halving argument. Due to determinism of communication protocols (and cell-probing schemes), the combinatorial rectangle obtained from halving the universe are \emph{disjoint}. This disjointness no longer holds for the rectangles obtained from certificates because of nondeterminism. We resolve this issue by exploiting  combinatorial structures of rectangles obtained from data structure certificates.


The following preparation lemma is a generalization of the averaging principle.

\begin{lemma}\label{lemma-bad-data}
Let $\family{P}_1,\family{P}_2,\ldots,\family{P}_r\subset 2^{V}$ be partitions of $V$ satisfying $|\family{P}_i|\le k$ for every $1\le i\le k$.
There must exist a $y\in V$ such that $|\family{P}_i(y)|\ge\frac{|V|}{rk}$ for all $1\le i\le r$, where $\family{P}_i(y)$ denotes the partition block $B\in \family{P}_i$ containing $y$.
\end{lemma}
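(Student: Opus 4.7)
The plan is to prove this by a simple union bound (averaging) argument over the ``bad'' elements, i.e.\ those which lie in an unusually small block of some partition.

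First, fix a threshold $\tau = |V|/(rk)$ and call a block $B\in\family{P}_i$ \emph{small} if $|B|<\tau$. For each fixed $i$, I would bound the total number of elements of $V$ that lie in some small block of $\family{P}_i$: since $\family{P}_i$ has at most $k$ blocks, and each small block contributes fewer than $\tau$ elements, the union of all small blocks of $\family{P}_i$ has size strictly less than $k\cdot\tau = |V|/r$.

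Next, I would call $y\in V$ \emph{bad} if it lies in a small block of at least one partition $\family{P}_i$. By the previous step and a union bound over $i=1,\dots,r$, the total number of bad elements is strictly less than $r\cdot |V|/r = |V|$. Hence at least one element $y\in V$ is not bad, meaning it lies in a block of size at least $\tau=|V|/(rk)$ in every partition, which is exactly the conclusion.

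I do not foresee a serious obstacle: the argument is a direct averaging/union bound and no combinatorial subtlety arises beyond getting the strict-versus-non-strict inequality right. The only thing to be careful about is the convention that the blocks of a partition are nonempty (so the ``at most $k$ blocks'' bound is tight), and that the threshold is chosen precisely so that the per-partition count of bad elements beats $|V|/r$, making the union bound yield $<|V|$ rather than $\le |V|$.
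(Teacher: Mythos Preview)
Your proposal is correct and is essentially the same argument as the paper's: the paper phrases it as a probabilistic method (choose $y$ uniformly and bound $\Pr[|\family{P}_i(y)|<\tau]<1/r$, then union bound), but that is exactly your counting argument after dividing through by $|V|$. The key step---bounding the mass in small blocks of a single partition by $k\cdot\tau=|V|/r$ and then summing over $r$ partitions---is identical.
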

\begin{proof}
The lemma is proved by the probabilistic method. Let $y$ be uniformly chosen from $V$. Fix an arbitrary order of partition blocks for each partition $\family{P}_i$. Let $w_{ij}$ be the cardinality of the $j$-th block in $\family{P}_i$. Obviously the probability of $\family{P}_i(y)$ being the $j$-th block in $\family{P}_i$ is $\frac{w_{ij}}{|V|}$. By union bound, the probability that $|\family{P}_i(y)|<w$ is bounded by $\sum_{j:w_{ij}<w}\frac{w_{ij}}{|V|}<|\{j\,:\,w_{ij}<w\}|\frac{w}{|V|}$.
Since $|\family{P}_i|\le k$, for every $i$ there are at most $k$ many such $j$ satisfying that $w_{ij}<w$, thus
$\Pr\left[|\family{P}_i(y)|<\frac{|V|}{rk}\right]<k\cdot\frac{|V|/rk}{|V|}=\frac{1}{r}$.
Applying union bound again for all $\family{P}_i$, we have $\Pr\left[\exists 1\le i\le r, |\family{P}_i(y)|<\frac{|V|}{rk}\right]<1$, which means there exists a $y\in V$ such that $|\family{P}_i(y)|\ge\frac{|V|}{rk}$ for all $1\le i\le r$.
\end{proof}

We first prove the richness lemma for the 1-probe case.
\begin{lemma}\label{lemma-richness-one-cell}
Let $f$ be a $(u,v)$-rich problem. If $f$ has $(s,w,1)$-certificates, then $f$ contains a monochromatic 1-rectangle of size $\frac{u}{s}\times \frac{v}{s\cdot2^{w}}$.
\end{lemma}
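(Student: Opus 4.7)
The plan is to combine the content-based partitioning afforded by Lemma~\ref{lemma-bad-data} with a standard pigeonhole on the certifying cell, and then use soundness of certificates to promote a single ``good'' column into an entire block of columns that is uniformly $1$ with a set of queries.

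First, from $(u,v)$-richness, I would fix a set $V\subseteq Y$ of $v$ columns each containing at least $u$ one-entries. For each cell $i\in[s]$, define a partition $\family{P}_i$ of $V$ by content at cell $i$: two columns $y,y'\in V$ lie in the same block of $\family{P}_i$ iff $T_y(i)=T_{y'}(i)$. Since the alphabet is $\Sigma=\{0,1\}^w$, each $\family{P}_i$ has at most $k=2^w$ blocks, and we have $r=s$ such partitions in total. Applying Lemma~\ref{lemma-bad-data} yields a column $y^*\in V$ such that $|\family{P}_i(y^*)|\ge v/(s\cdot 2^w)$ for every cell $i\in[s]$ simultaneously.

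Second, I would zoom in on this distinguished column $y^*$. It has at least $u$ queries $x$ with $f(x,y^*)=1$, and each of them comes with a certificate that consists of a single cell in $[s]$. Pigeonholing on that cell, there is a cell $i^*\in[s]$ that serves as the certificate for at least $u/s$ such queries. Let $X^*$ denote this set of queries, $c^*=T_{y^*}(i^*)$, and $Y^*=\family{P}_{i^*}(y^*)$; by construction $|X^*|\ge u/s$ and $|Y^*|\ge v/(s\cdot 2^w)$. To verify monochromaticity of $X^*\times Y^*$, note that for each $x\in X^*$ the single cell $i^*$ with content $c^*$ is a valid certificate for $(x,y^*)$ with $f(x,y^*)=1$; by the soundness of certificates, any $y'$ with $T_{y'}(i^*)=c^*$ must satisfy $f(x,y')=f(x,y^*)=1$, and every $y'\in Y^*$ has precisely this property.

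The main obstacle here is the nondeterministic nature of certificates: the certifying cell can depend on $(x,y)$, so the classical halving argument used for deterministic communication protocols does not directly apply. Lemma~\ref{lemma-bad-data} circumvents this by letting us simultaneously control the content-partitions across all $s$ cells, after which the soundness of the certificate is exactly what is needed to spread $f=1$ from the single representative $(x,y^*)$ to the full combinatorial rectangle $X^*\times Y^*$.
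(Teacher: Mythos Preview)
Your proposal is correct and follows essentially the same argument as the paper: both apply Lemma~\ref{lemma-bad-data} to the content-partitions $\family{P}_1,\dots,\family{P}_s$ (with $r=s$, $k=2^w$) to obtain a column $y^*$ whose block is large in every partition, then pigeonhole the $\ge u$ positive queries on $y^*$ by their certifying cell to get $i^*$ with $\ge u/s$ queries, and finally use the certificate definition to conclude that these queries are $1$ on all of $\family{P}_{i^*}(y^*)$. The paper phrases the second step as the set identity $X_1(y^*)=\bigcup_i X_1(\family{P}_i(y^*))$ followed by averaging, which is exactly your pigeonhole-plus-soundness step in different notation.
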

\begin{proof}
Let $T: Y\to\Sigma^s$ where $\Sigma=\{0,1\}^w$ be the code in the $(s,w,1)$-certificates for $f$.
Let $V\subseteq Y$ denote the set of $v$ columns of $f$ that each contains at least $u$ 1-entries.
For each cell $1\le i\le s$, an equivalence relation $\sim_i$ on databases in $V$ can be naturally defined as follows: for any $y,y'\in V$, $y\sim_i y'$ if $T_y(i)=T_{y'}(i)$, that is, if databases $y$ and $y'$ look same in the $i$-th cell. Let $\family{P}_i$ denote the partition induced by the equivalence relation $\sim_i$. Each partition $\family{P}_i$ classifies the databases in $V$ according to the content of the $i$-th cell.
Obviously $|\family{P}_i|\le 2^w$, because the content of a cell can have at most $|\Sigma|=2^w$ possibilities, and we also have $\family{P}_i(y)=\{y'\in V\mid T_{y'}(i)=T_{y}(i)\}$ being the set of databases indistinguishable from $y$ by looking at the $i$-th cell, where $\family{P}_i(y)$ denotes the partition block $B\in \family{P}_i$ containing $y$. By Lemma~\ref{lemma-bad-data}, there always exists a bad database $y\in V$ such that $|\family{P}_i(y)|\ge\frac{|V|}{s\cdot2^{w}}=\frac{v}{s\cdot2^{w}}$ for all $1\le i\le s$.


For each database $y\in V$, let $X_{1}(y)=\{x\in X\mid f(x,y)=1\}$  denote the set of positive queries on database $y$, and for a subset $A\subseteq V$ of databases, let $X_1(A)=\bigcap_{y\in A}X_1(y)$ denote the set of queries which are positive on all databases in $A$. Note that $X_{1}(y)$ and $X_1(A)$ are the respective 1-preimages of Boolean functions $f(\cdot,y)$ and $\bigwedge_{y\in A} f(\cdot,y)$.
By definition, it is easy to see that $X_1(A)\times A$ is a monochromatic 1-rectangle for any $A\subseteq V$.

\textbf{Claim:} For any $y\in V$, it holds that $X_1(y)=\bigcup_{1\le i\le s} X_1(\family{P}_i(y))$.

It is easy to see the direction $\bigcup_{1\le i\le s} X_1(\family{P}_i(y))\subseteq X_1(y)$ holds because $X_1(A)\subseteq X_{1}(y)$ for any $A$ containing $y$ and clearly $y\in \family{P}_i(y)$. So we only need to prove the other direction. Since $f$ has $(s,w,1)$-certificates, for any positive query $x$ on database $y$ (i.e.~any $x\in X_1(y)$), there is a cell $i$ such that all databases $y'$ indistinguishable from $y$ by looking at the $i$-th cell (i.e.~all $y'\in\family{P}_i(y)$) answer the query $x$ positively (i.e.~$f(x,y')=f(x,y)=1$), which gives  $x\in X_1(\family{P}_i(y))$ by definition of $X_1(A)$.
This proves the direction $X_1(y) \subseteq \bigcup_{i\in [s]} X_1(\family{P}_i(y))$.



Consider the bad database $y\in V$ satisfying $|\family{P}_i(y)|\ge\frac{|V|}{s\cdot2^{w}}=\frac{v}{s\cdot2^{w}}$ for all $1\le i\le s$.
Due to the above claim, we have
\[
u\le \left|X_1(y)\right|=\left|\bigcup_{1\le i\le s} X_1(\family{P}_i(y))\right|\le\sum_{1\le i\le s}\left|X_1(\family{P}_i(y))\right|.
\]
By averaging principle, there exists a cell $i$ such that $\left|X_1(\family{P}_{i}(y))\right|\ge\frac{u}{s}$. This gives us a monochromatic 1-rectangle $X_1(\family{P}_{i}(y))\times \family{P}_{i}(y)$ of size  at least $\frac{u}{s}\times\frac{v}{s\cdot 2^{w}}$.
\end{proof}

The richness lemma for general case can be derived from the 1-probe case by a one-line reduction.
\begin{lemma}\label{lemma-certificate-reduction}
If a data structure problem $f$ has $(s,w,t)$-certificates, then $f$ has $\left(\binom{s}{t},w\cdot t,1\right)$-certificates.
\end{lemma}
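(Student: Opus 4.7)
The plan is to produce the new code by packing together the $t$-tuples of cells from the original scheme. Concretely, starting from the $(s,w,t)$-certificate code $T:Y\to\Sigma^s$ with $\Sigma=\{0,1\}^w$, I would construct a new code $T':Y\to(\Sigma')^{\binom{s}{t}}$ whose alphabet is $\Sigma'=\{0,1\}^{wt}$ and whose cells are indexed by the $t$-subsets $P\subseteq[s]$. For each database $y\in Y$ and each such $P$, set the $P$-th cell of $T'_y$ to be the concatenation $\langle T_y(i)\rangle_{i\in P}$ under some fixed ordering of the elements of $P$. This uses exactly $\binom{s}{t}$ cells of width $wt$, matching the claimed parameters.

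Next I would verify the certificate property in Definition~\ref{definition-certificate} for the new code. Given any query $x\in X$ and database $y\in Y$, the hypothesis gives a set $P\subseteq[s]$ with $|P|=t$ such that every $y'\in Y$ with $T_{y'}(i)=T_y(i)$ for all $i\in P$ satisfies $f(x,y')=f(x,y)$. In the new scheme, take the singleton $P'=\{P\}\subseteq\binom{[s]}{t}$, which has size $1$. If $T'_{y'}(P)=T'_y(P)$, then by the definition of $T'$ we have $T_{y'}(i)=T_y(i)$ for all $i\in P$, hence $f(x,y')=f(x,y)$. Thus $P'$ is a valid $1$-cell certificate for $(x,y)$ under the new code.

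There is no real obstacle: the reduction is a packaging trick, and the only thing to check is that the widths and cell counts line up, which they do by construction. The point worth flagging is merely that the new alphabet size $2^{wt}$ and cell count $\binom{s}{t}$ are exactly what is needed to combine with Lemma~\ref{lemma-richness-one-cell} to yield a monochromatic $1$-rectangle of the promised dimensions $\frac{u}{\binom{s}{t}}\times\frac{v}{\binom{s}{t}2^{wt}}$, establishing Lemma~\ref{lemma-richness-cert} from the $1$-probe case.
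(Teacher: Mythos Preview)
Your proposal is correct and is exactly the paper's approach: the paper's one-line proof simply says to ``store every $t$-combination of cells with a new table of $\binom{s}{t}$ cells each of $w\cdot t$ bits,'' which is precisely the packing construction you spell out. Your additional verification of Definition~\ref{definition-certificate} and the remark tying the parameters back to Lemma~\ref{lemma-richness-one-cell} are helpful elaborations, but the underlying idea is identical.
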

\begin{proof}
Store every $t$-combination of cells with a new table of $\binom{s}{t}$ cells each of $w\cdot t$ bits.
\end{proof}


%

\newcommand{\SectionRichApp}{
We apply our richness lemma to two fundamental data structure problems: the bit-vector retrieval problem, and the lopsided set disjointness (LSD) problem. We prove certificate lower bounds matching the cell-probing scheme upper bounds, which shows that for these fundamental data structure problems, answering queries is as hard as certifying them.
\paragraph{Bit-vector retrieval.}
We consider the following fundamental problem: a database $y$ is a vector of $n$ bits, a query $x$ specifies $m$ indices, and the answer to the query returns the contents of these queried bits in the bit vector $y$.
Although is fundamental in database and information retrieval even judging by a glance, this problem has not been very well studied before (for a reason which we will see next).
We call this problem the  \concept{bit-vector retrieval} problem.
A naive solution is to explicitly store the bit-vector and access the queried bits directly, which gives an bit-probing scheme using $n$ bits and answering each query with $m$ bits.
A natural and important question is: can we substantially reduce the time cost by using a more sophisticated data structure with a tolerable overhead on space usage and allowing probing cells instead of bits? We shall see this is impossible in any realistic setting by showing a certificate lower bound.

We study a decision version of the bit-vector retrieval problem, namely the \concept{bit-vector testing} problem. Let $Y=\{0,1\}^n$ and  $X=[n]^m\times\{0,1\}^m$.  Each database $y\in Y$ is still an $n$-bit vector, and each query $x=(u,v)\in X$ consists of two parts: a tuple $u\in[n]^m$ of $m$ positions and a prediction $v\in\{0,1\}^m$ of the contents of these positions.
For $y\in\{0,1\}^n$ and $u\in[n]^m$, we use $y(u)$ to denote the $m$-tuple $(y(u_1),y(u_2),\ldots, y(u_m))$. The bit-vector testing problem $f:X\times Y\rightarrow\{0,1\}$ is then defined as that for any $x=(u,v)\in X$ and any $y\in Y$,  $f(x,y)$ indicates whether $y(u)=v$.

\begin{proposition}\label{proposition-bitvector-rectangle}
The bit-vector testing problem $f$ is $(n^m,2^n)$-rich and every $M\times N$ monochromatic 1-rectangles in $f$ must have $M\le (n-\log N)^m$.
\end{proposition}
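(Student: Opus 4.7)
\medskip
\noindent\textbf{Proof plan for Proposition~\ref{proposition-bitvector-rectangle}.}

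The plan is to verify richness by a direct count and then extract the rectangle bound from a dimension/fixing argument on the databases. First I would observe that for any fixed database $y \in \{0,1\}^n$, the set of positive queries $x=(u,v)$ on $y$ is exactly $\{(u, y(u)) \mid u \in [n]^m\}$, so every column of $f$ contains precisely $n^m$ many $1$-entries. Since there are $|Y|=2^n$ columns, $f$ is $(n^m, 2^n)$-rich.

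For the rectangle bound, I would consider an arbitrary monochromatic $1$-rectangle $A \times B$ with $A \subseteq X$, $B \subseteq Y$, $|A| = M$ and $|B| = N$. The key idea is to look at the coordinates of $[n]$ on which the databases in $B$ are constant. Let $S = \{i \in [n] \mid \forall y, y' \in B, \ y(i) = y'(i)\}$ denote the set of such ``fixed'' coordinates, and let $k = n - |S|$ be the number of ``free'' coordinates. Since the databases in $B$ can differ only on the $k$ free coordinates, we have $N = |B| \le 2^{k}$, so $|S| \le n - \log N$.

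Next I would argue that every query in $A$ can only probe positions inside $S$. Indeed, fix any $(u, v) \in A$; since $A \times B$ is monochromatic with value $1$, we have $y(u_j) = v_j$ for every $j \in [m]$ and every $y \in B$. In particular, each $u_j$ is a coordinate on which all databases in $B$ agree, so $u_j \in S$, i.e., $u \in S^m$. Moreover, for each such $u$ the prediction string $v$ is uniquely determined (it must equal $y(u)$ for any/all $y \in B$). Therefore $M = |A| \le |S|^m \le (n - \log N)^m$, as desired.

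There is no serious obstacle here; the one subtlety I would be careful about is distinguishing ``fixed coordinates of $B$'' from ``coordinates appearing in some $u$'', and then tracking that in $x=(u,v)\in A$ the prediction $v$ is forced by the common value of $B$ on $u$, so the factor $2^m$ one might fear from ranging over $v$ does not appear.
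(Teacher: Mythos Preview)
Your proof is correct and follows essentially the same approach as the paper: both arguments define the set $S$ of coordinates on which all databases in the rectangle agree, bound the number of databases by $2^{n-|S|}$, and then observe that every positive query $(u,v)$ must have $u\in S^m$ with $v$ forced, giving at most $|S|^m\le(n-\log N)^m$ queries. The only cosmetic difference is that the paper phrases things via the maximal 1-rectangle $X_1(A)\times A$ over a given column set $A$, whereas you work directly with an arbitrary 1-rectangle $A\times B$.
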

\begin{proof}
We use the notation in the proof of Lemma~\ref{lemma-richness-one-cell}: we use $X_1(y)$ to denote set of positive queries on database $y$ and $X_1(A)$ to denote the set of queries positive on all databases in $A\subset Y$. Note that $X_1(y)$ contains all the rows at which column $y$ has 1-entries. 
It holds that $|Y|=n^m$ and for every $y\in Y$, we have $|X_1(y)|=|\{(u,v)\in [n]^m\times\{0,1\}^m\mid y(u)=v\}|=n^m$, thus $f$ is $(n^m, 2^n)$-rich. 

For any set $A\subseteq Y$, observe that $|X_1(A)|=|\{u\in[n]^m\mid \forall y, y'\in A, y(u)=y'(u)\}|$, i.e.~$|X_1(A)|$ is the number of such $m$-tuples of indices over which all bit-vectors in $A$ are identical.
Let $S$ denote the largest $S\subseteq[n]$ such that for every $i\in S$, $y(i)$ is identical for all $y\in A$. It is easy to see that $|X_1(A)|=|S|^m$ and $|A|\le2^{n-|S|}$, therefore it holds that $|X_1(A)|\le(n-\log|A|)^m$. Note that $X_1(A)\times A$ is precisely the maximal 1-rectangle with the set of columns $A$. Letting $N=|A|$, we prove that every $M\times N$ 1-rectangle must have $M\le(n-\log N)^m$.
\end{proof}

\begin{theorem}\label{lower-bound-bitvector}
If the bit-vector testing problem has $(s,w,t)$-certificates, then for any $0<\delta<1$, we have either $t\ge\frac{n^{1-\delta}}{w+\log s}$ or $t\ge\frac{\delta m\log n}{\log s}$.
\end{theorem}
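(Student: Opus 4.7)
The plan is to combine the certificate richness lemma (Lemma~\ref{lemma-richness-cert}) with the combinatorial rectangle structure of the bit-vector testing problem given in Proposition~\ref{proposition-bitvector-rectangle}, and then extract the claimed dichotomy on $t$ by a short case analysis.

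First, I would plug $u = n^m$ and $v = 2^n$ (the richness parameters from Proposition~\ref{proposition-bitvector-rectangle}) into Lemma~\ref{lemma-richness-cert}. This yields a monochromatic $1$-rectangle in $f$ of dimensions $M\times N$ with
\[
M \;\ge\; \frac{n^m}{\binom{s}{t}} \qquad\text{and}\qquad N \;\ge\; \frac{2^n}{\binom{s}{t}\,2^{wt}}.
\]
On the other hand, the upper bound half of Proposition~\ref{proposition-bitvector-rectangle} forces $M \le (n-\log N)^m$ for every $1$-rectangle of dimensions $M\times N$. Combining these gives
\[
\frac{n^m}{\binom{s}{t}} \;\le\; (n - \log N)^m,
\]
so taking $m$-th roots yields $n\,\binom{s}{t}^{-1/m} \le n - \log N$. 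Using the lower bound on $N$ to estimate $n - \log N \le \log\binom{s}{t} + wt$, and then using $\log\binom{s}{t}\le t\log s$, I arrive at the key inequality
\[
n\,\binom{s}{t}^{-1/m} \;\le\; t\,(w+\log s).
\]

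To finish, I would split on the value of $t$. If $t \ge \frac{n^{1-\delta}}{w+\log s}$, we are done. Otherwise $t(w+\log s) < n^{1-\delta}$, and the key inequality above gives $\binom{s}{t}^{-1/m} < n^{-\delta}$, i.e. $\binom{s}{t} > n^{\delta m}$. Taking logarithms and using $\log\binom{s}{t}\le t\log s$ once more yields $t\log s > \delta m\log n$, which is the second branch of the claimed lower bound.

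There is really no major obstacle here once the richness lemma is in hand: the argument is a direct combination of the two halves of Proposition~\ref{proposition-bitvector-rectangle} with Lemma~\ref{lemma-richness-cert}, followed by a two-line case split. The only point requiring a small bit of care is chaining the $M$ upper bound with the $N$ lower bound through the parameter $\binom{s}{t}^{1/m}$ before invoking the crude bound $\log\binom{s}{t}\le t\log s$; doing both simplifications at once would lose the polynomial dependence on $n$ that drives the $n^{1-\delta}/(w+\log s)$ branch.
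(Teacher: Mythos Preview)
Your proof is correct and follows essentially the same route as the paper: apply Lemma~\ref{lemma-richness-cert} with the richness parameters from Proposition~\ref{proposition-bitvector-rectangle}, feed the resulting rectangle into the upper bound $M\le(n-\log N)^m$, take $m$-th roots, and case-split on $t(w+\log s)\gtrless n^{1-\delta}$. The only difference is cosmetic: the paper applies the crude bound $\binom{s}{t}\le s^t$ immediately (to both the $M$ and $N$ sides) and still recovers the same key inequality $n/s^{t/m}\le t(w+\log s)$, so your closing remark that delaying this simplification is ``required'' is a bit overcautious---simplifying at once loses nothing here.
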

\begin{proof}
Due to Proposition~\ref{proposition-bitvector-rectangle}, the problem is $(n^m,2^n)$-rich, and hence by Lemma~\ref{lemma-richness-cert}, if it has $(s,w,t)$-certificates, then it contains a 1-rectangle of size $\frac{n^m}{{s\choose t}}\times2^{n-wt-t\log {s\choose t}}$. As ${s\choose t}\leq s^t$, so we have a 1-rectangle of size $\frac{n^m}{s^t}\times2^{n-wt-t\log s}$, which by Proposition~\ref{proposition-bitvector-rectangle}, requires that $\frac{n^m}{s^t}\le (wt+t\log s)^m$.
For any $0<\delta<1$,
if $t<\frac{n^{1-\delta}}{w+\log s}$, then $t\ge\frac{\delta m\log n}{\log s}$.
\end{proof}

A standard setting for data structure is the lopsided case, where query is significantly shorter than database. For this case, the above theorem has the following corollary.
\begin{corollary}
Assuming $m=n^{o(1)}$, if the bit-vector testing problem has $(s,w,t)$-certificates for $w\le n^{1-\delta}$ where $\delta>0$ is an arbitrary constant, then $t=\Omega\left(\frac{m\log n}{\log s}\right)$.
\end{corollary}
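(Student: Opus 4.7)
The plan is to invoke Theorem~\ref{lower-bound-bitvector} with a parameter $\delta'$ chosen strictly smaller than $\delta$; concretely I take $\delta' = \delta/2$. The theorem then asserts that at least one of the two inequalities
\[
t \ge \frac{n^{1-\delta/2}}{w+\log s}
\qquad\text{or}\qquad
t \ge \frac{(\delta/2)\,m\log n}{\log s}
\]
must hold for any $(s,w,t)$-certificate scheme. If the second one holds, the corollary follows immediately with hidden constant $\delta/2$.

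So the remaining step is to show that in the first case the bound $\tfrac{n^{1-\delta/2}}{w+\log s}$ is itself $\Omega\bigl(\tfrac{m\log n}{\log s}\bigr)$ under the hypotheses $w\le n^{1-\delta}$ and $m=n^{o(1)}$. Cross-multiplying, I will verify
\[
n^{1-\delta/2}\,\log s \;=\; \Omega\bigl(m\log n\cdot w \;+\; m\log n\cdot\log s\bigr)
\]
for all sufficiently large $n$, and I will bound the two summands on the right separately. For the first summand, substituting $w\le n^{1-\delta}$ and $m\log n = n^{o(1)}$ gives $m\,w\log n \le n^{1-\delta+o(1)}$, which is dominated by $n^{1-\delta/2}\le n^{1-\delta/2}\log s$ (using only $\log s\ge 1$) since $\delta/2$ is a positive constant. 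For the second summand, $m\log s\cdot\log n = n^{o(1)}\log s\cdot\log n$ is dominated by $n^{1-\delta/2}\log s$ because the ratio is $n^{1-\delta/2}/(n^{o(1)}\log n)\to\infty$.

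The only mild obstacle is bookkeeping in the regime where $\log s$ is very small (constant or smaller), where I must rely only on the trivial bound $\log s\ge 1$; this is implicit whenever $s\ge 2$, which is unavoidable for any nontrivial data structure. Conceptually, the assumption $m=n^{o(1)}$ is exactly calibrated so that the subpolynomial quantity $m\log n$ is dwarfed by the polynomial slack $n^{\delta/2}$ between the upper bound $n^{1-\delta}$ on $w$ and the theorem's $n^{1-\delta/2}$ numerator, making the reduction of the first alternative to the desired bound essentially automatic.
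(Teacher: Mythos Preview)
Your proof is correct and is precisely the natural derivation the paper leaves implicit: apply Theorem~\ref{lower-bound-bitvector} with a parameter strictly smaller than the hypothesis' $\delta$ (you take $\delta/2$) so that the polynomial slack $n^{\delta/2}$ in the first alternative absorbs the subpolynomial $m\log n$, reducing both branches to the desired bound. The paper states the corollary without proof, and your argument---including the observation that the choice $\delta'<\delta$ is what makes the first branch go through---fills in exactly the computation the authors regard as routine.
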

With any polynomial space $s=n^{O(1)}$ and a wildly relaxed size of cell $n^{1-\delta}$, the above bound matches the naive solution of directly retrieving $m$ bits, implying that the fundamental problem of retrieving part of a bit vector cannot be made any easier in a general setting, because queries are hard to certify.


\paragraph{Lopsided set disjointness.}
The set disjointness problem plays a central role in communication complexity and complexity of data structures.
Assuming a data universe $[N]$, the input domains are $X={[N]\choose m}$ and $Y={[N]\choose n}$ where $m\le n<\frac{N}{2}$. For each query set $x\in X$ and data set $y\in Y$, the set disjointness problem $f(x,y)$ returns a bit indicating the emptyness of $x\cap y$. 
The following proposition is implicit in~\cite{miltersen1998data}.
\begin{proposition}[Milersen~\textit{et al.}~\cite{miltersen1998data}]\label{proposition-set-disjoint}
The set disjointness problem $f$ is $\left({N-n\choose m},{N\choose n}\right)$-rich, and for every $n\le u\le N$, any monochromatic 1-rectangle in $f$ of size $M\times {u\choose n}$ must have $M\le {N-u\choose m}$.
\end{proposition}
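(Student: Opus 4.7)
The plan is first to establish the richness claim by a direct count of the 1-entries per column, and then to bound the dimensions of an arbitrary monochromatic 1-rectangle by examining the union of the databases that appear in it.

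For the richness part, fix an arbitrary column $y \in Y = \binom{[N]}{n}$. A row $x \in \binom{[N]}{m}$ contributes a 1-entry in column $y$ exactly when $x \cap y = \emptyset$, i.e.~when $x$ is an $m$-subset of $[N] \setminus y$. Since $|[N] \setminus y| = N - n$, every column contains exactly $\binom{N-n}{m}$ 1-entries, and there are $\binom{N}{n}$ columns in total, so $f$ is $\bigl(\binom{N-n}{m}, \binom{N}{n}\bigr)$-rich.

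For the rectangle bound, consider any monochromatic 1-rectangle $A \times B$ with $A \subseteq X$ and $B \subseteq Y$. Let $U = \bigcup_{y \in B} y \subseteq [N]$ be the union of all the $n$-sets appearing as columns of the rectangle. Because $f(x,y) = 1$ throughout $A \times B$, every $x \in A$ is disjoint from every $y \in B$, and therefore disjoint from $U$; hence each $x \in A$ is an $m$-subset of $[N] \setminus U$, giving $|A| \le \binom{N - |U|}{m}$. On the other hand, each $y \in B$ is itself an $n$-subset of $U$, so $|B| \le \binom{|U|}{n}$. Now suppose $|B| = \binom{u}{n}$ for some $u \ge n$. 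Then $\binom{|U|}{n} \ge \binom{u}{n}$, which forces $|U| \ge u$ since $\binom{\cdot}{n}$ is strictly increasing on $\{n, n+1, \ldots\}$. Combined with the previous bound on $|A|$ and monotonicity of $\binom{N - \cdot}{m}$, we conclude $|A| \le \binom{N - u}{m}$, as desired.

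There is really no conceptual obstacle here; both claims reduce to elementary counting. The only subtlety is the monotonicity step that extracts $|U| \ge u$ from the inequality $\binom{|U|}{n} \ge \binom{u}{n}$: this relies on $|U| \ge n$, which is automatic whenever $B$ is nonempty since each $y \in B$ already satisfies $|y| = n$ and $y \subseteq U$. Once this is noted the proposition follows immediately.
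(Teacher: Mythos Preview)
Your proof is correct and follows essentially the same approach as the paper's: both establish richness by counting 1-entries per column, and both bound the rectangle dimensions by forming the union of the column sets and observing that every row set must avoid this union. Your argument is in fact slightly more explicit than the paper's, which asserts $|y'|\ge u$ without spelling out the monotonicity of $\binom{\cdot}{n}$ that justifies it.
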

\begin{proof}
We use the notation in the proof of Lemma~\ref{lemma-richness-one-cell}: let $X_1(y)$ denote set of positive queries on database $y$ and $X_1(A)$ denote the set of queries positive on all databases in $A\subset Y$. $X_1(y)$ contains all the rows at which column $y$ has 1-entries.
It holds that $|Y|={N\choose n}$ and for every set $y\in Y$ with $|y|=n$, we have $|X_1(y)|=|\{x\mid x\subset [N], |x|=m, x\cap y=\emptyset\}|={N-n\choose m}$, thus $f$ is $\left({N-n\choose m}, {N\choose n}\right)$-rich. 

For any set $A\subseteq Y$ with $|A|={u\choose n}$, let $y'=\bigcup_{y\in A}y$. We have $|y'|\geq u$. For $X_1(A)=\{x\mid \forall y\in Y, x\cap y=\emptyset \}$, we have $|X_1(A)|\leq {N-|y'|\choose m}\leq{N-u\choose m}$.
Thus we get the conclusion.
\end{proof}

Applying the above proposition and Lemma~\ref{lemma-richness-cert}, we have the following certificate lower bound.
\begin{theorem}\label{theorem-set-disjoint}
If the set disjointness problem has $(s,w,t)$-certificates, then for any $0<\delta<1$, we have either $t\ge\frac{n^{1-\delta}}{w+\log s}$ or $t\ge\frac{\delta m(\log n-o(1)))}{\log s}$.
\end{theorem}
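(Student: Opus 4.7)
The plan is to mirror the proof of Theorem~\ref{lower-bound-bitvector}: combine Proposition~\ref{proposition-set-disjoint} with the certificate richness lemma (Lemma~\ref{lemma-richness-cert}) to extract a monochromatic 1-rectangle that is too large to coexist with the combinatorial constraint on 1-rectangles in the LSD matrix, unless at least one of the two claimed lower bounds on $t$ already holds.

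First, I will apply Lemma~\ref{lemma-richness-cert} to the $\bigl({N-n\choose m},{N\choose n}\bigr)$-rich LSD problem: if LSD has $(s,w,t)$-certificates, then it contains a monochromatic 1-rectangle of size at least $M\times L$, where (using ${s\choose t}\le s^t$) we may take $M=\frac{{N-n\choose m}}{s^t}$ and $L=\frac{{N\choose n}}{s^t\,2^{wt}}$. Next I pick the largest integer $u\in[n,N]$ with ${u\choose n}\le L$ and invoke Proposition~\ref{proposition-set-disjoint} to conclude $M\le {N-u\choose m}$. Using the standard estimates $\bigl(\tfrac{N}{n}\bigr)^n\le {N\choose n}\le \bigl(\tfrac{eN}{n}\bigr)^n$ and the analogous bounds for ${u\choose n}$, this choice pins down $u$ so that $\tfrac{N-u}{N}$ behaves like $1-2^{-t(w+\log s)/n}$ up to constants.

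Now suppose for contradiction that both $t<\frac{n^{1-\delta}}{w+\log s}$ and $t<\frac{\delta m(\log n-o(1))}{\log s}$ hold. The first gives $t(w+\log s)<n^{1-\delta}$, hence $2^{-t(w+\log s)/n}\ge 1-O(n^{-\delta})$, so $N-u\le O(n^{-\delta})(N-n)$. Substituting into $M\le {N-u\choose m}$ and applying ${N-n\choose m}\ge \bigl(\tfrac{N-n}{m}\bigr)^m$ together with ${N-u\choose m}\le \bigl(\tfrac{e(N-u)}{m}\bigr)^m$, the rectangle inequality rearranges to $s^t\ge n^{\delta m}/e^{O(m)}$, i.e.\ $t\log s\ge \delta m(\log n-o(1))$, contradicting the second assumption.

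The main obstacle will be the binomial bookkeeping: unlike the clean $(n-\log N)^m$ bound for bit-vector testing, the LSD constraint couples ${u\choose n}$ and ${N-u\choose m}$ through the single parameter $u$, so the choice of $u$ must be tuned carefully and the effect of the factor $2^{-t(w+\log s)/n}$ tracked on both sides. The $-o(1)$ slack in the theorem is precisely the room needed to absorb the $e^{O(m)}$ losses from the binomial approximations, which is harmless in the standard lopsided regime $m\le n$ with $\log n\to\infty$.
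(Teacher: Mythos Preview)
Your overall plan—apply Lemma~\ref{lemma-richness-cert} to the richness supplied by Proposition~\ref{proposition-set-disjoint}, then confront the resulting $1$-rectangle with the rectangle bound from that same proposition via a well-chosen $u$—is exactly the paper's approach. There is, however, a genuine gap in the binomial bookkeeping on the column side.

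The crude estimates $\bigl(\tfrac{N}{n}\bigr)^n\le {N\choose n}\le \bigl(\tfrac{eN}{n}\bigr)^n$ are too lossy to ``pin down $u$'' as you claim. In the branch $t(w+\log s)<n^{1-\delta}$ you need $u$ so close to $N$ that $N-u=O(n^{-\delta})(N-n)$, but under those bounds the ratio ${N\choose n}\big/{u\choose n}$ is determined only up to a factor $e^{\pm n}$, whereas $L$ differs from ${N\choose n}$ by merely $2^{a+b}=2^{o(n)}$. Concretely, from ${u+1\choose n}>L$ your stated bounds yield only $u+1>N/\bigl(e\cdot 2^{(a+b)/n}\bigr)$, hence $N-u\le (1-1/e)N+O(1)$, which is $\Theta(N)$ and useless for the next step.

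The fix—and this is precisely what the paper does—is to compare the binomials directly as products:
\[
\frac{{N\choose n}}{{u\choose n}}=\prod_{i=0}^{n-1}\frac{N-i}{u-i}\ge \Bigl(\frac{N}{u}\Bigr)^n,
\qquad
\frac{{N-n\choose m}}{{k\choose m}}\ge \Bigl(\frac{N-n}{k}\Bigr)^m.
\]
These ratio inequalities carry no $e^n$ slack. The paper in fact runs the argument in the opposite order from yours: it first uses the $M$-side constraint $M\le{N-u\choose m}$ together with an auxiliary $k=(N-n)/2^{a/m}$ to force $u\le N-k$, and only then feeds this into the $L$-side to obtain $2^{a+b}\ge{N\choose n}\big/{N-k\choose n}>(1+2^{-a/m-1})^n$. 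Either direction works once the product-ratio bounds replace your crude ones. With that single change your argument goes through and delivers $t\log s\ge \delta m(\log n-O(1))$, the same as the paper; the ``$-o(1)$'' in the statement should be read in that spirit.
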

\begin{proof}
Due to Proposition~\ref{proposition-set-disjoint}, the problem is $\left({N-n\choose m},{N\choose n}\right)$-rich, and hence by Lemma~\ref{lemma-richness-cert}, if it has $(s,w,t)$-certificates, then it contains a 1-rectangle of size $\frac{{N-n\choose m}}{{s\choose t}}\times\frac{{N\choose n}}{{s\choose t}2^{wt}}$. As ${s\choose t}\leq s^t$, so we have a 1-rectangle of size ${N-n\choose m}/{2^{t\log s}}\times{N\choose n}/{2^{wt+t\log s}}$. Let $a=t\log s$ and $b=wt$. Let $M,u$ denote the parameters in Proposition~\ref{proposition-set-disjoint} respectively, so $M={N-n\choose m}/{2^a}$ and ${u\choose n}={N\choose n}/{2^{a+b}}$. Let $k=(N-n)/2^{a/m}$. Since ${k\choose m}\leq{N-n\choose m}/2^{a}\leq{N-u\choose m}$ by Proposition~\ref{proposition-set-disjoint}, we have $k\leq N-u$, which leads to $u\leq N-k$. Now we have ${N\choose n}/2^{a+b}={u \choose n}\leq {N-k \choose n}$ and therefore $2^{a+b}\geq{N\choose n}/{N-k\choose n}>(\frac{N}{N-k})^n>(1+k/N)^n=(1+(N-n)/2^{a/m}N)^n\geq(1+2^{-a/m-1})^n$. By taking logarithm, we have $a+b\geq n\log(1+2^{-a/m-1})>n\cdot 2^{-a/m-1}$. If $a+b< n^{1-\delta}$ for any $\delta$, then $2^{a/m+1}\geq n^{\delta}$, thus $a\geq \delta m (\log n-o(1))$. Replacing $a,b$ with $t\log s, wt$ respectively, we get the conclusion.
\end{proof}

This certificate lower bound matches the well-known cell-probe lower bound for set-disjointness~\cite{miltersen1998data,patrascu11structures}.
The most interesting case of the problem is the lopsided case where $m=n^{o(1)}$. 
A calculation gives us the following corollary.
\begin{corollary}
Assume $m=n^{o(1)}$ and  $\alpha n\le N\le n^c$ for arbitrary constants $\alpha, c>1$. If the set disjointness problem has $(s,w,t)$-certificates for $w\le n^{1-\delta}$ where $\delta>0$ is an arbitrary constant, then $t=\Omega\left(\frac{m\log n}{\log s}\right)$.
\end{corollary}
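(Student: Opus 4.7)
The plan is to derive the corollary directly from Theorem~\ref{theorem-set-disjoint} by choosing a suitable $\delta$ and showing that, under the stated assumptions, the first alternative ``$t\ge \frac{n^{1-\delta}}{w+\log s}$'' is never the binding one: it already exceeds $\Omega(m\log n/\log s)$, so in every case the second alternative's bound prevails.

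Let $\delta'>0$ be the constant in the hypothesis $w\le n^{1-\delta'}$. First I would fix any constant $\delta$ with $0<\delta<\delta'$ and invoke Theorem~\ref{theorem-set-disjoint} with this $\delta$. If the second alternative holds, i.e.\ $t\ge \frac{\delta m(\log n-o(1))}{\log s}$, then since $m=n^{o(1)}$ and $\alpha n\le N\le n^c$, the $o(1)$ term is genuinely negligible compared to $\log n$ (here the upper bound $N\le n^c$ is what keeps the $o(1)$ term inside the theorem small, while $N\ge\alpha n$ is needed to make the ambient problem nontrivial), so we immediately obtain $t=\Omega\bigl(\frac{m\log n}{\log s}\bigr)$ as claimed.

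It remains to handle the first alternative $t\ge \frac{n^{1-\delta}}{w+\log s}$. Here I would do a short case analysis on the magnitude of $\log s$ relative to $w$. If $\log s\ge n^{1-\delta'}$, then $w+\log s\le 2\log s$ and therefore $t\ge \frac{n^{1-\delta}}{2\log s}$, which dominates $\frac{m\log n}{\log s}$ by a polynomial factor since $m\log n=n^{o(1)}$ while $n^{1-\delta}$ is polynomial in $n$. If instead $\log s<n^{1-\delta'}$, then $w+\log s\le 2n^{1-\delta'}$ and so
\[
t \;\ge\; \frac{n^{1-\delta}}{2n^{1-\delta'}} \;=\; \tfrac{1}{2}\,n^{\delta'-\delta},
\]
which is a fixed polynomial in $n$ (since $\delta'-\delta$ is a positive constant), again dwarfing $\frac{m\log n}{\log s}=n^{o(1)}$. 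Either way, $t=\Omega\bigl(\frac{m\log n}{\log s}\bigr)$.

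The only mildly delicate point is the choice of $\delta$: it must be strictly smaller than $\delta'$ so that the exponent $\delta'-\delta$ in the second subcase is positive, and it must be positive so that Theorem~\ref{theorem-set-disjoint} applies. Beyond that, everything is elementary algebra, and I do not anticipate a substantive obstacle.
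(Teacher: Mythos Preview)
Your argument is correct and is exactly the calculation the paper leaves implicit (the paper's only justification is the sentence ``A calculation gives us the following corollary''). Two small clean-ups: your parenthetical about $N\le n^c$ controlling the $o(1)$ term is off---once the standing hypothesis $N>2n$ is used in the proof of Theorem~\ref{theorem-set-disjoint}, $N$ drops out entirely, so the corollary's bounds on $N$ are not actually invoked anywhere in your derivation; and in the sub-case $\log s<n^{1-\delta'}$, the claim $\tfrac{m\log n}{\log s}=n^{o(1)}$ tacitly uses $s\ge 2$, which you should remark follows (for large $n$) from $w\le n^{1-\delta'}$, since any valid table must separate all $\binom{N}{n}\ge 2^n$ databases and hence $sw\ge n$.
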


}

\ifabs{
In Appendix~\ref{appendix-richness-app}, we apply Lemma~\ref{lemma-richness-cert} to prove certificate lower bounds for bit-vector retrieval problem and lopsided set disjointness problem.
}
{
\subsection{Applications}\label{section-certificate-lower-bounds}
\SectionRichApp
}

\section{Direct-sum richness lemma}\label{section-direct-sum-property}
In this section, we prove a richness lemma for certificates using direct-sum property of data structure problems. Such a lemma was introduced in~\cite{patrascu2010higher} for cell-probing schemes, which is used to prove some highest known cell-probe lower bounds with \emph{near-linear spaces}. 

Consider a vector of problems $\bar{f}=(f_1,\dots,f_k)$ where every $f_i:X\times Y\rightarrow \{0,1\}$ is defined on the same domain $X\times Y$.
Let $\bigoplus^{k}\bar{f}:([k]\times X)\times Y^k\rightarrow\{0,1\}$ be a problem defined as follows: $\bigoplus^{k}\bar{f}((i,x),\bar{y})=f_i(x,y_i)$ for every $(i,x)\in[k]\times X$  and every $\bar{y}=(y_1,y_2,\ldots,y_k)\in Y^k$. In particular, for a problem $f$ we denote $\bigoplus^{k}f=\bigoplus^{k}\bar{f}$  where $\bar{f}$ is a tuple of $k$ copies of problem $f$.


\begin{lemma}[direct-sum richness lemma for certificates]\label{lemma-direct-sum-cert}
Let $\bar{f}=(f_1,f_2\dots\,f_k)$ be a vector of problems 
such that for each $i=1,2,\ldots,k$, we have $f_i:X{\times}Y\rightarrow\{0,1\}$ and $f_i$ is $(\alpha|X|,\beta|Y|)$-rich. If problem $\bigoplus^{k}\bar{f}$ has $(s,w,t)$-certificates for a $t\leq\frac{s}{k}$, then there exists a $1\le i\le k$ such that $f_i$ contains a monochromatic 1-rectangle of size $\frac{{\alpha}^{O(1)}|X|}{2^{O(t\log {\frac{s}{kt}})}}\times \frac{{\beta}^{O(1)}|Y|}{2^{O(wt+t\log {\frac{s}{kt}})}}$.
\end{lemma}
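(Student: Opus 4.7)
The plan is to adapt the direct-sum richness strategy of Pătraşcu and Thorup~\cite{patrascu2010higher} to the certificate model, combining a preparation phase that exploits the direct-sum structure of $\bigoplus^{k}\bar{f}$ with the richness lemma for certificates (Lemma~\ref{lemma-richness-cert}). The key intuition is that in $\bigoplus^{k}\bar{f}$, each query $(i, x)$ is tagged by a coordinate $i \in [k]$, and certifying the answer only requires information about the single component $y_i$; hence, by averaging over the $k$ coordinates, some coordinate $i^{*}$ should admit a certificate structure whose cells are drawn from an effective pool of size roughly $s/k$. Applying Lemma~\ref{lemma-richness-cert} with this smaller effective table will then yield the $2^{O(t\log(s/(kt)))}$ denominator in the target bound, which is strictly better than the $2^{O(t\log s)}$ one would get by applying Lemma~\ref{lemma-richness-cert} directly.

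I would proceed in three steps. \textbf{Step 1 (Cell concentration for one coordinate).} For each database $\bar{y}$ and query $(i, x)$, let $P_{(i,x)}(\bar{y}) \subseteq [s]$ denote a certifying set of $t$ cells. By a pigeonhole argument over coordinates combined with a Lemma~\ref{lemma-bad-data}-style averaging over databases, identify a coordinate $i^{*} \in [k]$ and a pool $S^{*} \subseteq [s]$ of size $O(s/k)$ such that for a constant fraction of databases $\bar{y}$ (constituting a set $Y^{*} \subseteq Y^k$ of size $\Omega(|Y|^k)$) and a constant fraction of queries $x$, we have $P_{(i^{*}, x)}(\bar{y}) \subseteq S^{*}$. \textbf{Step 2 (Extracting a smaller certificate structure for $f_{i^{*}}$).} Fix the background $\bar{y}_{-i^{*}}$ to a typical value compatible with the concentration above. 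The restricted code $y_{i^{*}} \mapsto T_{(\bar{y}_{-i^{*}}, y_{i^{*}})}$ projected onto cells in $S^{*}$ provides $(O(s/k), w, t)$-certificates for the restriction of $f_{i^{*}}$ to some subset $X' \times Y'$ with $|X'| = \Omega(\alpha^{O(1)} |X|)$ and $|Y'| = \Omega(\beta^{O(1)} |Y|)$, where the constant-factor losses come from intersecting the concentration subsets with the sets witnessing richness of $f_{i^{*}}$. \textbf{Step 3 (Invoking the richness lemma).} Apply Lemma~\ref{lemma-richness-cert} to the restricted problem, now with effective table size $O(s/k)$, cell width $w$, and certificate size $t$. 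Since $\binom{O(s/k)}{t} = 2^{O(t \log(s/(kt)))}$ under the hypothesis $t \le s/k$, we obtain a monochromatic 1-rectangle in $f_{i^{*}}$ of the claimed dimensions.

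The main obstacle is Step~1: rigorously identifying the coordinate $i^{*}$ together with the concentrated cell pool $S^{*}$ while preserving a large subset of queries and databases. The essential budget observation is that each certificate uses only $t$ cells, so across the $k$ choices of coordinate the total ``cell mass'' spent per $(x,\bar{y})$ pair is at most $kt$, which is $o(s)$ under the assumption $t \le s/k$; a Markov-style averaging over $i$ therefore produces one coordinate whose certificates fit into a pool of size $O(s/k)$. Unlike the deterministic cell-probe setting of~\cite{patrascu2010higher}, where the probing pattern is determined by a decision tree and each cell has a fixed role along the query's probing path, certificates are nondeterministic: for each pair $(x,\bar{y})$ we have the freedom to choose any valid $P_{(i,x)}(\bar{y})$, and this flexibility has to be used carefully (e.g.\ by fixing one canonical certificate choice and only then averaging) to ensure that enough certificates actually land inside the pool $S^{*}$. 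A further delicate point is that the restrictions in Steps~1 and~2 must not destroy too much of the richness of $f_{i^{*}}$; this is handled by iterating Lemma~\ref{lemma-bad-data}-style arguments at each restriction, absorbing the multiplicative losses into the $\alpha^{O(1)}$ and $\beta^{O(1)}$ factors of the final bound.
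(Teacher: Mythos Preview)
Your Step~1 has a genuine gap. The ``cell mass'' budget you invoke does not establish the existence of a small pool $S^{*}\subseteq[s]$ of size $O(s/k)$ containing the certificates for one coordinate. Observing that the total number of cells used across the $k$ coordinates for a fixed $(x,\bar{y})$ is $kt$ only tells you that each coordinate uses $t$ cells---which you already knew. It says nothing about \emph{which} cells those are as $x$ and $\bar{y}$ vary. For a fixed coordinate $i^{*}$, the sets $P_{(i^{*},x)}(\bar{y})$ may range over all of $[s]$; nothing in the certificate model forces any coordinate to ``own'' a contiguous or small block of cells, and the encoding $T:Y^k\to\Sigma^s$ is free to mix information from all $k$ components into every cell. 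Your Markov-style averaging over $i$ cannot produce a pool $S^{*}$ because there is no nonnegative quantity summing to $s$ over $i$ that bounds the pool size for coordinate $i$. (In the deterministic setting of~\cite{patrascu2010higher} the savings come from batching $k$ cell addresses per round into $\log\binom{s}{k}$ bits of communication, not from any popular-cell pool of the kind you describe.)

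The paper's proof sidesteps this entirely by a much shorter route. Rather than isolating one coordinate, it passes to the AND problem $\bigwedge^{k}\bar{f}(\bar{x},\bar{y})=\prod_{i}f_i(x_i,y_i)$. Since $\bigoplus^{k}\bar{f}$ has $(s,w,t)$-certificates, the AND problem has $(s,w,kt)$-certificates by simply taking the union of the $k$ per-coordinate certificates (Lemma~\ref{lemma-direct-sum}). By Lemma~\ref{lemma-direct-sum-richness}, $\bigwedge^{k}\bar{f}$ is $((\alpha|X|)^k,(\beta|Y|)^k)$-rich, so Lemma~\ref{lemma-richness-cert} applied with parameters $(s,w,kt)$ yields a 1-rectangle of size $\frac{(\alpha|X|)^k}{\binom{s}{kt}}\times\frac{(\beta|Y|)^k}{\binom{s}{kt}2^{kwt}}$. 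Lemma~\ref{lemma-direct-sum-rectangle} then extracts the desired 1-rectangle in some $f_i$. All of the gain---the $\frac{s}{kt}$ inside the logarithm---comes from $\log\binom{s}{kt}=O(kt\log\frac{s}{kt})$, i.e.\ from the improved binomial factor already built into Lemma~\ref{lemma-richness-cert}; the direct-sum reduction contributes nothing beyond the trivial certificate union.
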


\paragraph{Remark 1.} The direct-sum richness lemma proved in~\cite{patrascu2010higher} is for asymmetric communication protocols as well as cell-probing schemes, and gives a rectangle size of $\frac{{\alpha}^{O(1)}|X|}{2^{O(t\log {\frac{s}{k}})}}\times \frac{{\beta}^{O(1)}|Y|}{2^{O(wt+t\log {\frac{s}{k}})}}$. Our direct-sum richness lemma has a better rectangle bound. This improvement may support stronger lower bounds which separate between linear and near-linear spaces.

\paragraph{Remark 2.}
A key idea to apply this direct sum based lower bound scheme is to exploit the extra power gained by the model from solving $k$ problem instances in parallel. In~\cite{patrascu2010higher}, this is achieved by seeing cell probes as communications between query algorithm and table, and $t$-round adaptive cell probes for answering $k$ parallel queries can be expressed in $t\log {s\choose k}$ bits instead of naively $kt\log s$ bits. For our direct-sum richness lemma for certificates, in contrast, we will see (in Lemma~\ref{lemma-direct-sum}) that unlike communications, the parallel simulation of certificates does not give us any extra gain, however, in our case all extra gains are provided by the improved bound in Lemma~\ref{lemma-richness-cert}, the richness lemma for certificates. Indeed, all our extra gains by ``parallelism'' are offered by the one-line reduction in Lemma~\ref{lemma-certificate-reduction}, which basically says that the certificates for $k$ instances of a problem can be expressed in $\log{s\choose kt}$ bits, even better than the $t\log {s\choose k}$-bit bound for communications. Giving up  adaptivity is essential to this improvement on the power of parallelism, so that all $kt$ cells can be chosen at once which gives the $\log{s\choose kt}$-bit bound: \emph{we are now not even parallel over instances, but also parallel over time}.


\paragraph*{}
The idea of proving Lemma~\ref{lemma-direct-sum-cert} can be concluded as: (1) reducing the problem $\bigoplus^{k}\bar{f}$ from a direct-product problem $\bigwedge^{k}\bar{f}$ whose richness and monochromatic rectangles can be easily translated between $\bigwedge^{k}\bar{f}$ and subproblems $f_i$; and (2) applying Lemma~\ref{lemma-richness-cert}, the richness lemma for certificates, to obtain large monochromatic rectangles for the direct-product problem.


We first define a direct-product operation on vector of problems. For $\bar{f}=(f_1,\dots,f_k)$ with $f_i:X\times Y\rightarrow \{0,1\}$ for every $1\le i\le k$, let $\bigwedge^{k}\bar{f}:X^k\times Y^k\rightarrow \{0,1\}$ be a direct-product problem defined as: $\bigwedge^{k}\bar{f}(\bar{x},\bar{y})=\prod_i{f_i(x_i,y_i)}$ for every $\bar{x}=(x_1,\dots,x_k)$ and every $\bar{y}=(y_1,\dots,y_k)$.

\begin{lemma}\label{lemma-direct-sum}
For any $\bar{f}=(f_1,\dots,f_k)$, if $\bigoplus^{k}\bar{f}$ has $(s,w,t)$-certificates for a $t\leq\frac{s}{k}$, then $\bigwedge^{k}\bar{f}$ has $(s,w,kt)$-certificates.
\end{lemma}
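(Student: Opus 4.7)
The plan is to reuse the same encoding $T:Y^k\to\Sigma^s$ that witnesses $(s,w,t)$-certificates for $\bigoplus^{k}\bar{f}$, and to certify the product answer $\bigwedge^{k}\bar{f}(\bar x,\bar y)=\prod_i f_i(x_i,y_i)$ by simply gluing together the $k$ sum-certificates obtained by ``looking up'' each coordinate separately. Concretely, given a query $\bar x=(x_1,\dots,x_k)$ and a database $\bar y=(y_1,\dots,y_k)$, the first step is, for each index $1\le i\le k$, to invoke the hypothesized certificate scheme on the sum-query $(i,x_i)\in[k]\times X$ against the same database $\bar y$: this yields a set $P_i\subseteq[s]$ with $|P_i|=t$ such that, for every $\bar y'\in Y^k$ agreeing with $\bar y$ on $P_i$, we have $f_i(x_i,y_i')=\bigoplus^{k}\bar{f}((i,x_i),\bar y')=\bigoplus^{k}\bar{f}((i,x_i),\bar y)=f_i(x_i,y_i)$.

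The second step is to take the union $P=\bigcup_{i=1}^k P_i$. Clearly $|P|\le kt$. Any $\bar y'$ that agrees with $\bar y$ on $P$ agrees with $\bar y$ on each $P_i$, so by the coordinatewise property above we get $f_i(x_i,y_i')=f_i(x_i,y_i)$ for every $i$, and therefore $\bigwedge^{k}\bar f(\bar x,\bar y')=\prod_i f_i(x_i,y_i')=\prod_i f_i(x_i,y_i)=\bigwedge^{k}\bar f(\bar x,\bar y)$, so $P$ certifies the product answer.

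Finally, to match Definition~\ref{definition-certificate} on the nose, I would pad $P$ up to exactly $kt$ cells by adjoining arbitrary elements of $[s]\setminus P$; adding further index constraints only strengthens the agreement condition, so the soundness argument above is unaffected. Padding is where the hypothesis $t\le s/k$ is used, since it ensures $kt\le s$ so that $kt$ distinct cell indices exist at all. The main conceptual point to emphasize is the one highlighted in Remark~2: unlike the communication-protocol simulation, the $k$ certificates are not required to be adaptive or sequenced in any way, so one may simply union their supports without incurring any overhead from describing the access pattern. There is no real obstacle beyond tracking this bookkeeping correctly.
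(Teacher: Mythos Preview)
Your proposal is correct and follows exactly the same approach as the paper: reuse the code $T$, take for each $i$ the certificate $P_i$ for the sum-query $(i,x_i)$, and set $P=\bigcup_i P_i$. Your treatment is in fact slightly more careful than the paper's, since you explicitly handle the padding to size exactly $kt$ and thereby pinpoint where the hypothesis $t\le s/k$ is used.
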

\begin{proof}
Suppose that $T:Y^k\to \Sigma^s$ with $\Sigma=\{0,1\}^w$ is the code used to encode databases to tables in the $(s,w,t)$-certificates of $\bigoplus^{k}\bar{f}$.
For problem $\bigwedge^{k}\bar{f}$, we use the same code $T$ to prepare table. And for each input $(\bar{x},\bar{y})$ of problem $\bigwedge^{k}\bar{f}$ where $\bar{x}=(x_1,\dots,x_k)$ and $\bar{y}=(y_1,\dots,y_k)$, suppose that for each $1\le i\le k$, $P_i\subset[s]$ with $|P_i|=t$ is the set of $t$ cells in table $T_{\bar{y}}$ to uniquely identify the value of $\bigoplus^{k}\bar{f}((i,x_i),\bar{y})$, then let $P=P_1\cup P_2\cup\cdots\cup P_k$ so that $|P|\le kt$. It is easy to verify that the set $P$ of at most $kt$ cells in $T_{\bar{y}}$ uniquely identifies the value of $\bigwedge^{k}\bar{f}(\bar{x},\bar{y})=\bigwedge_{1\le i\le k}\left(\bigoplus^{k}\bar{f}((i,x_i),\bar{y})\right)$ because it contains all cells which can uniquely identify the value of $\bigoplus^{k}\bar{f}((i,x_i),\bar{y})$ for every $1\le i\le k$. Therefore, problem $\bigwedge^{k}\bar{f}$ has $(s,w,kt)$-certificates.
\end{proof}


The following two lemmas are from~\cite{patrascu2010higher}. These lemmas give  easy translations of richness and  monochromatic rectangles between the direct-product problem $\bigwedge^{k}\bar{f}$ and subproblems $f_i$.

\begin{lemma}[P\v{a}tra\c{s}cu and Thorup \cite{patrascu2010higher}]\label{lemma-direct-sum-richness}
If $\bar{f}=(f_1,f_2\dots\,f_k)$ has $f_i:X{\times}Y\rightarrow\{0,1\}$ and $f_i$ is  $(\alpha|X|,\beta|Y|)$-rich for every $1\le i\le k$, then $\bigwedge^{k}\bar{f}$ is $((\alpha|X|)^k,(\beta|Y|)^k)$-rich.
\end{lemma}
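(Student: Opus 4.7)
The plan is to prove this by a direct counting argument: I identify the witnesses for richness in each coordinate separately, take their Cartesian product, and verify that the resulting combinatorial rectangle witnesses the claimed richness for $\bigwedge^{k}\bar{f}$.

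First, for every $i \in \{1,\dots,k\}$, since $f_i$ is $(\alpha|X|,\beta|Y|)$-rich, I pick a set $B_i \subseteq Y$ of columns witnessing the richness, i.e.\ $|B_i| \ge \beta|Y|$ and for every $y \in B_i$ the set $A_i(y) = \{x \in X \mid f_i(x,y) = 1\}$ satisfies $|A_i(y)| \ge \alpha|X|$. This step uses only the definition of $(u,v)$-richness given in Section~\ref{section-characterization}.

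Next, I consider the set $\bar{B} = B_1 \times B_2 \times \cdots \times B_k \subseteq Y^k$ of columns of $\bigwedge^{k}\bar{f}$. By construction, $|\bar{B}| \ge (\beta|Y|)^k$, so it remains to show that every $\bar{y} = (y_1, \dots, y_k) \in \bar{B}$ indexes a column containing at least $(\alpha|X|)^k$ entries equal to $1$. Fix such a $\bar{y}$. By the definition of the direct-product problem, $\bigwedge^{k}\bar{f}(\bar{x},\bar{y}) = 1$ iff $f_i(x_i, y_i) = 1$ for every $i$, so the set of rows $\bar{x} \in X^k$ with value $1$ in column $\bar{y}$ is exactly $A_1(y_1) \times \cdots \times A_k(y_k)$. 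Its cardinality is $\prod_i |A_i(y_i)| \ge (\alpha|X|)^k$, as desired.

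Combining the two bounds yields the richness of $\bigwedge^{k}\bar{f}$ with the stated parameters. There is no real obstacle here: the claim is essentially a tensorisation of the richness property, and because the "1-set" of a column of $\bigwedge^{k}\bar{f}$ factorises as a product of the 1-sets of the constituent columns, both counts (number of witness columns and number of $1$-entries per witness column) multiply cleanly. The only thing to be slightly careful about is not to confuse the two different "column" counts (those of $f_i$ versus those of $\bigwedge^{k}\bar{f}$), but once the product structure $\bar{B} \subseteq Y^k$ is fixed the verification is immediate.
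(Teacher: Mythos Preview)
Your proof is correct. The paper does not actually supply a proof of this lemma; it simply quotes the statement from P\v{a}tra\c{s}cu and Thorup~\cite{patrascu2010higher} as one of two ``easy translations'' between the direct-product problem and its constituents. Your Cartesian-product argument is exactly the standard tensorisation that underlies the cited result, so there is nothing to compare.
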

\begin{lemma}[P\v{a}tra\c{s}cu and Thorup \cite{patrascu2010higher}]\label{lemma-direct-sum-rectangle}
For any $\bar{f}=(f_1,\dots,f_k)$ with $f_i:X{\times}Y\rightarrow\{0,1\}$ for every $1\le i\le k$, if $\bigwedge^{k}\bar{f}$ contains a monochromatic 1-rectangle of size $(\alpha|X|)^k\times(\beta|Y|)^k$, then there exists a $1\le i \le k$ such that $f_i$ contains a monochromatic 1-rectangle of size $(\alpha)^3|X|\times(\beta)^3|Y|$.
\end{lemma}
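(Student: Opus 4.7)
My plan is to work with the single-coordinate \emph{projections} $\pi_i(A)\subseteq X$ and $\pi_i(B)\subseteq Y$ (rather than slices of $A$ and $B$) and to exploit the elementary product bound $|A|\le\prod_{j=1}^k|\pi_j(A)|$.

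First, I would make the clean rectangle-recovery observation: for every coordinate $i\in[k]$, the rectangle $\pi_i(A)\times\pi_i(B)\subseteq X\times Y$ is automatically a monochromatic $1$-rectangle for $f_i$. Given any $(x,y)\in\pi_i(A)\times\pi_i(B)$, lift $x$ to some $\bar x\in A$ with $\bar x_i=x$ and independently lift $y$ to some $\bar y\in B$ with $\bar y_i=y$; the hypothesis that $A\times B$ is monochromatic $1$ for $\bigwedge^k\bar f$ forces $\prod_j f_j(\bar x_j,\bar y_j)=1$, and in particular $f_i(x,y)=1$.

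Next, since $A\subseteq\prod_j\pi_j(A)$ (and similarly for $B$), we have $\prod_{j=1}^k|\pi_j(A)|\ge|A|=(\alpha|X|)^k$ and $\prod_{j=1}^k|\pi_j(B)|\ge(\beta|Y|)^k$. A short contrapositive counting argument then shows that more than $2k/3$ of the coordinates $j$ satisfy $|\pi_j(A)|\ge\alpha^3|X|$: if instead $k/3$ or more coordinates had $|\pi_j(A)|<\alpha^3|X|$, then combining with $|\pi_j(A)|\le|X|$ on the remaining coordinates would give $\prod_j|\pi_j(A)|<\alpha^{3\cdot k/3}|X|^k=\alpha^k|X|^k$, contradicting the product bound (the boundary case $\alpha=1$ is trivial). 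The same statement holds for $B$ with $\beta$.

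Finally, two subsets of $[k]$ each of size more than $2k/3$ must intersect by pigeonhole, so there is some $i\in[k]$ with both $|\pi_i(A)|\ge\alpha^3|X|$ and $|\pi_i(B)|\ge\beta^3|Y|$. For this $i$, the rectangle $\pi_i(A)\times\pi_i(B)$ is the promised monochromatic $1$-rectangle for $f_i$ of size $\alpha^3|X|\times\beta^3|Y|$. The main conceptual step---the only one I expect to require thought---is the choice to work with projections rather than slices: projections combined with $|A|\le\prod_j|\pi_j(A)|$ naturally expose the exponent $3$, whereas a direct attempt using single-coordinate slices $A_i(\bar x_{-i})$ would only yield $\alpha^k|X|$ via averaging and would not recover the claimed bound.
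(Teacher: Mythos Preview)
The paper does not give its own proof of this lemma; it simply imports it from P\v{a}tra\c{s}cu and Thorup~\cite{patrascu2010higher}. Your argument is correct and is essentially the standard one: the key observation that $\pi_i(A)\times\pi_i(B)$ is automatically a $1$-rectangle for $f_i$ (because any lift of a pair $(x,y)$ to $A\times B$ forces $\prod_j f_j(\bar x_j,\bar y_j)=1$), combined with the product bound $|A|\le\prod_j|\pi_j(A)|$ and a pigeonhole over coordinates, is exactly how the result is proved in~\cite{patrascu2010higher}. One minor wording issue: in your contrapositive you should say ``more than $k/3$'' rather than ``$k/3$ or more'' if you want to match your stated conclusion ``more than $2k/3$''; but since the strict inequality $|\pi_j(A)|<\alpha^3|X|$ already makes the product bound strict, the version you wrote also goes through, so this is cosmetic.
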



The direct-sum richness lemma can be easily proved by combining the above lemmas with the richness lemma for certificates.

\begin{proof}[Proof of Lemma~\ref{lemma-direct-sum-cert}]
If $\bigoplus^{k}\bar{f}$ has $(s,w,t)$-certificates, then by Lemma~\ref{lemma-direct-sum}, the direct-product problem $\bigwedge^{k}\bar{f}$ has $(s,w,kt)$-certificates. Since every $f_i$ in $\bar{f}=(f_1,f_2,\ldots,f_k)$ is $(\alpha|X|,\beta|Y|)$-rich, by Lemma \ref{lemma-direct-sum-richness} we have  that $\bigwedge^{k}\bar{f}$ is $((\alpha|X|)^k,(\beta|Y|)^k)$-rich. Applying Lemma~\ref{lemma-richness-cert}, the richness lemma for certificates, problem $\bigwedge^{k}\bar{f}$ has a 1-rectangle of size $\frac{(\alpha|X|)^k}{\binom{s}{kt}}\times\frac{(\beta|Y|)^k}{\binom{s}{kt}2^{kwt}}$. Then due to Lemma~\ref{lemma-direct-sum-rectangle}, we have a problem $f_i$ who contains a monochromatic 1-rectangle of size $\frac{{\alpha}^{O(1)}|X|}{2^{O(t\log {\frac{s}{kt}})}}\times \frac{{\beta}^{O(1)}|Y|}{2^{O(wt+t\log {\frac{s}{kt}})}}$.
\end{proof}
\subsection{Applications}\label{section-direct-sum-application}
We then apply the direct-sum richness lemma to prove lower bounds for two important high dimensional problems: approximate near neighbor (ANN) in hamming space and partial match (PM).
\begin{itemize}
\item
For ANN in $d$-dimensional hamming space, we prove a $t=\Omega(d/\log\frac{sw}{nd})$ lower bound for $(s,w,t)$-certificates.
The highest known cell-probing scheme lower bound for the problem is  $t=\Omega(d/\log\frac{sw}{n})$.
In a super-linear space, our certificate lower bound matches the highest known lower bound for cell-probing scheme; and for linear space, our lower bound becomes $t=\Omega(d)$, which gives a strict improvement, and also matches the highest cell-probe lower bound ever known for any problem (which has only been achieved for polynomial evaluation~\cite{larsen2012higher}).
\item
For $d$-dimensional PM, we prove a $t=\Omega(d/\log\frac{sw}{n})$ lower bound for $(s,w,t)$-certificates, which matches the highest known cell-probing scheme lower bound for the problem in~\cite{patrascu2010higher}.
\end{itemize}


\subsubsection{Approximate near neighbor (ANN)}
The near neighbor problem $\mathrm{NN}_n^d$ in a $d$-dimensional metric space is defined as follows: a database $y$ contains $n$ points from a $d$-dimensional metric space, for any query point $x$ from the same space and a distance threshold $\lambda$, the problem asks whether there is a point in database $y$ within distance $\lambda$ from $x$. The approximate near neighbor problem $\ANN_{n}^{\lambda,\gamma,d}$ is similarly defined, except upon a query $x$ to a database $y$, answering ``yes'' if there is a point in database $y$ within distance $\lambda$ from $x$ and ``no'' if all points in $y$ are $\gamma\lambda$-far away from $x$ (and answering arbitrarily if otherwise).


We first prove a lower bound for $\ANN_{n}^{\lambda,\gamma,d}$ in Hamming space $X=\{0,1\}^d$, where for any two points $x,x'\in X$ the distance between them is given by Hamming distance $h(x,x')$.

\begin{suppress}
\begin{theorem}\label{theorem-ANN}
For problem $\ANN_{n}^{\lambda,\gamma,d}$ in Hamming cube $\{0,1\}^d$, assume $d\geq(1+5\gamma)\log n$. There exists some $\lambda$, such that if $ANN_{n}^{\lambda,\gamma,d}$ has $(s,w,t)$-certificates, where $w=d^{O(1)}$, then $t=\Omega({{\frac{d}{\gamma^3}}/{\log\frac{sd}{n}}})$.
\end{theorem}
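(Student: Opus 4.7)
The plan is to apply Lemma~\ref{lemma-direct-sum-cert}, the direct-sum richness lemma for certificates, to a direct sum of $k$ copies of the lopsided set disjointness problem, connected to $\ANN$ by the standard Hamming-space embedding. The overall pipeline is
\[
\text{$(s,w,t)$-certificate for $\ANN_{n}^{\lambda,\gamma,d}$}
\;\Longrightarrow\;\text{$(s,w,t)$-certificate for $\bigoplus^{k}\mathrm{LSD}(m_0,n_0,N_0)$},
\]
followed by application of Lemma~\ref{lemma-direct-sum-cert} with the LSD richness and rectangle bounds of Proposition~\ref{proposition-set-disjoint}, and a final parameter balancing.

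The embedding step reuses the reduction from $\bigoplus^{k}\mathrm{LSD}$ to $\ANN$ used in~\cite{patrascu2010higher,panigrahy2008geometric}: each LSD query/database pair is encoded as a pair of characteristic vectors in Hamming space, which turns the intersection question into a bounded-Hamming-distance question; the $k$ sub-instances are glued into one $\ANN$ instance by dedicating $O(\gamma\log k)$ coordinates to a ``group tag'' code of pairwise distance greater than $2\gamma\lambda$, so that a tag-$i$ query can only find a Hamming neighbor among points of the $i$-th sub-instance. The only thing to check here is that this deterministic, many-to-one reduction preserves the certificate structure, which is straightforward and formally parallels Lemma~\ref{lemma-direct-sum}.

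Substituting the LSD richness and rectangle bounds into Lemma~\ref{lemma-direct-sum-cert} and running essentially the algebra of the proof of Theorem~\ref{theorem-set-disjoint}, but with $\log\binom{s}{kt}$ in place of $t\log s$, yields an inequality of the form
\[
t\log\!\left(\frac{s}{kt}\right)+wt\;=\;\Omega\!\left(m_{0}\log n_{0}\right).
\]
Choosing the LSD parameters $m_{0},n_{0},N_{0}$ so that $m_{0}\log n_{0}=\Theta(d/\gamma^{3})$---the standard P\v{a}tra\c{s}cu--Thorup setting that makes the Hamming embedding fit in $d$ dimensions and uses the assumption $d\geq(1+5\gamma)\log n$---then picking $k=\Theta(n/n_{0})$ so that $\log(s/(kt))=\Theta(\log(sd/n))$ while respecting the side constraint $t\leq s/k$ of Lemma~\ref{lemma-direct-sum-cert}, and finally using $w=d^{O(1)}$ to absorb $wt$ into the other term, gives the desired $t=\Omega(d/(\gamma^{3}\log(sd/n)))$.

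The main obstacle will be the delicate balancing of $k$, $m_{0}$, $n_{0}$, and $N_{0}$ against the constraints $t\leq s/k$ and $d\geq(1+5\gamma)\log n$; this requires some care to extract the final $\log(sd/n)$ denominator rather than a weaker $\log(sw/n)$. The key new gain over the classical direct-sum richness theorem of~\cite{patrascu2010higher} is the improved $\log\binom{s}{kt}$ factor in Lemma~\ref{lemma-richness-cert}, which replaces $\log(sw/n)$ by $\log(sd/n)$ in the denominator and thereby produces the $t=\Omega(d)$ bound in strictly linear space $sw=O(nd)$.
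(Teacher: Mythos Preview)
Your pipeline differs from the paper's in a key structural way, and the difference exposes a real gap.

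The paper does \emph{not} pass through LSD for this theorem. It uses the model-independent self-reduction of P\v{a}tra\c{s}cu--Thorup (Claim~\ref{claim-selfreduction-ANN}) from $\bigoplus^{k}\ANN_{N}^{\lambda,\gamma,D}$ to $\ANN_{n}^{\lambda,\gamma,d}$, with $D=d/(1+5\gamma)$ and $k=n/N$, and then feeds Liu's richness and rectangle bounds for $\ANN_{N}^{\lambda,\gamma,D}$ itself (Claim~\ref{claim-rectangle-ANN}) into Lemma~\ref{lemma-direct-sum-cert}. This yields the dichotomy $t=\Omega\!\big(\tfrac{D}{\gamma^{2}}/\log\tfrac{s}{k}\big)$ or $t=\Omega\!\big(\tfrac{ND}{\gamma^{2}w}\big)$; setting $N=w=d^{O(1)}$ makes the first branch dominate and gives $t=\Omega\!\big(\tfrac{d}{\gamma^{3}}/\log\tfrac{sd}{n}\big)$. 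Note in particular that the $\gamma$-dependence is already baked into Liu's rectangle bound; nothing in the reduction has to manufacture a multiplicative Hamming gap.

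Your route through LSD breaks exactly at that point. The ``standard Hamming-space embedding'' of set disjointness into near-neighbor---encode the database elements as unit vectors $e_{j}$ and the query set $S$ as its characteristic vector $\chi_{S}$---gives Hamming distance $m_{0}-1$ when $j\in S$ and $m_{0}+1$ when $j\notin S$. The resulting multiplicative gap is only $(m_{0}+1)/(m_{0}-1)$, so to feed a $\gamma$-approximate near-neighbor oracle you are forced to take $m_{0}=O(1/(\gamma-1))$. With $m_{0}$ pinned to a function of $\gamma$ alone you cannot realize $m_{0}\log n_{0}=\Theta(d/\gamma^{3})$ with the remaining parameters fitting inside a $d$-dimensional cube and $n$ data points; the balancing you sketch (``the standard P\v{a}tra\c{s}cu--Thorup setting'') simply does not exist in~\cite{patrascu2010higher} for ANN via LSD---their Theorem~6 is the ANN self-reduction, and the rectangle input is Liu's, not Proposition~\ref{proposition-set-disjoint}. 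Your final paragraph also misattributes the $\log(sd/n)$ denominator: in this (weaker) version of the theorem it comes from the choice $N=w=d^{O(1)}$, not from the $\binom{s}{kt}$ refinement; the latter is what buys the stronger $\log\tfrac{sw}{nd}$ in Theorem~\ref{theorem-ANN-v2}.
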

\begin{proof}
Assume problem $ANN_{n}^{\lambda,\gamma,d}$ has $(s,w,t)$-certificates. Let $D=d/(1+5\gamma)\geq\log n$ and $k=n/N$ for $N<n$ to be chosen later. In [Theorem 11,Ref?], Mihai has proved a solution for $ANN_{n}^{\lambda,\gamma,d}$ can also work as a solution for $\bigoplus^k ANN_N^{\lambda,\gamma,D}$ with the same complexity. It's easy to see this also holds for certificates. So $\bigoplus^k ANN_N^{\lambda,\gamma,D}$ also has $(s,w,t)$-certificates.

By [Ding Liu], there exists a $\lambda$ such that:
\begin{itemize}
\item by [Ding Liu, Claim 10],$ANN_N^{\lambda,\gamma,D}$ is $[2^{D-1},2^{ND}]$-rich.
\item by [Ding Liu, Claim 11],$ANN_N^{\lambda,\gamma,D}$ doesn't have a 1-rectangle of size $2^{D-D/(169\gamma^2)}\times 2^{ND-ND/(32\gamma^2)}$.
\end{itemize}
Note the query domain $|X|=2^D$, and the data domain $|Y|=2^{ND}$, so the problem is $[|X|/2,|Y|]$-rich. By Lemma \ref{lemma-direct-sum-cert}, we have either $t=\Omega(\frac{D}{\gamma^2}/\log\frac{s}{k})$ or $t=\Omega(\frac{ND}{\gamma^2}/w)$. Fix $N=w=d^{O(1)}$, the first item is the smaller one. Now we get $t=\Omega(\frac{d}{\gamma^3}/\log\frac{sN}{n})=\Omega(\frac{d}{\gamma^3}/\log\frac{sd}{n})$.
\end{proof}
\end{suppress}

The richness and monochromatic rectangles of $\ANN_n^{\lambda,\gamma,n}$ were analyzed in~\cite{liu2004strong}.

\begin{Claim}[Claim 10 and 11 in~\cite{liu2004strong}]\label{claim-rectangle-ANN}
There is a $\lambda\leq d$ such that $\ANN_n^{\lambda,\gamma,d}$ is $(2^{d-1},2^{nd})$-rich and $\ANN_n^{\lambda,\gamma,d}$ does not contain a 1-rectangle of size $2^{d-d/(169\gamma^2)}\times 2^{nd-nd/(32\gamma^2)}$.
\end{Claim}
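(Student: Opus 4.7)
The plan is to follow the classical approach of Liu \cite{liu2004strong}, whose structure has two separate components (richness and a no-large-rectangle statement) that use essentially independent arguments.

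First I would fix the threshold $\lambda$ based on Hamming-ball volumes. The natural choice is to take $\lambda$ slightly less than $d/2$, specifically $\lambda$ such that the volume $V_\lambda=\sum_{i\le\lambda}\binom{d}{i}$ of a Hamming ball of radius $\lambda$ satisfies $n\cdot V_\lambda/2^d=\Theta(1)$. Because $V_\lambda/2^d$ is the probability that two independent uniform points in $\{0,1\}^d$ are within distance $\lambda$, this choice will make a random query $\lambda$-close to at least one of $n$ random database points with constant probability, while still leaving $\gamma\lambda<d/2$ when $d$ is large enough compared to $\log n$. This $\lambda$ will be used for both parts.

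For richness, I would apply a direct double-counting / averaging argument. Let $N_y=|\{x\in\{0,1\}^d:f(x,y)=1\}|$, i.e., the number of query points within distance $\lambda$ from some database point in $y$. Summing over random $y$ and using linearity of expectation, $\mathbb{E}_y[N_y]\ge 2^d(1-(1-V_\lambda/2^d)^n)$, and for the $\lambda$ chosen above this is at least $2^{d-1}\cdot(1+\Omega(1))$. A standard concentration argument (e.g., Chernoff applied to the indicators $\mathbf{1}[f(x,y)=1]$ over independent coordinates of $y$), or a direct Markov-type inequality on the complement, then shows that at least $2^{nd}$ databases $y$ (the claim only asks for richness $(2^{d-1},2^{nd})$ as a matrix statement, so in fact all columns) satisfy $N_y\ge 2^{d-1}$, provided the constant in the choice of $\lambda$ is tuned slightly. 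This gives $(2^{d-1},2^{nd})$-richness.

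The rectangle bound is the hard part, and is where the factors $169\gamma^2$ and $32\gamma^2$ come in. I would assume a monochromatic $1$-rectangle $A\times B$ with $|A|=2^{d-d/(169\gamma^2)}$ and $|B|=2^{nd-nd/(32\gamma^2)}$, and derive a contradiction via a packing/coding argument. The key geometric fact is isoperimetry in the Hamming cube: a set $A\subseteq\{0,1\}^d$ of density $2^{-d/(169\gamma^2)}$ has the property that its $\gamma\lambda$-neighborhood $A^{(\gamma\lambda)}=\{z:\exists x\in A,\,h(x,z)\le\gamma\lambda\}$ has density strictly less than $1$; quantitatively, one uses a Chernoff bound on the binomial tail comparing $\lambda\approx d/2-c\sqrt d$ with $\gamma\lambda$ to show that $|A^{(\gamma\lambda)}|\le 2^d\cdot(1-2^{-nd/(32\gamma^2)}\cdot n)$. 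Now monochromaticity of $A\times B$ forces every $y\in B$ to place all $n$ of its points inside $A^{(\gamma\lambda)}$ (otherwise a point $p\in y$ outside this set would be $\gamma\lambda$-far from every $x\in A$, but we could still ask: does $p$ being far from $A$ force some $x\in A$ to be $\gamma\lambda$-far from all of $y$? The careful version of the argument instead bounds $|B|$ directly by observing that each $y\in B$ is determined by picking $n$ points from $A^{(\gamma\lambda)}$, so $|B|\le|A^{(\gamma\lambda)}|^n$, which contradicts the assumed size of $|B|$).

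I expect the main obstacle to be the constants $169$ and $32$: they are not intrinsic but come from tuning the Chernoff tail bounds on $\sum \mathrm{Binomial}(1,1/2)$ at the thresholds $\lambda$ and $\gamma\lambda$, and making the two bounds (size of $A$, size of $B$) compatible with the isoperimetric inequality. Getting these constants correct requires balancing (i) the gap between $\lambda$ and $d/2$, (ii) the multiplicative slack $\gamma$, and (iii) the $n$-fold product structure of databases — and since the claim is quoted verbatim from \cite{liu2004strong}, I would simply invoke Claims~10 and~11 of that paper for the exact constants rather than reproving them here.
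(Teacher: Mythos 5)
The paper offers no proof of this claim; it is imported verbatim from Claims~10 and~11 of Liu~\cite{liu2004strong}. Your sketch is therefore being judged against the structure of Liu's actual argument, and there it has a genuine gap: you orient the matrix the wrong way. You set $N_y=\lvert\{x:f(x,y)=1\}\rvert$ with ``$1$'' meaning ``$y$ has a point near $x$.'' Under that orientation the richness claim is simply false for the degenerate database in which all $n$ points coincide at some $p$: then $N_y$ is at most $\lvert B(p,\gamma\lambda)\rvert$, which with your choice $nV_\lambda/2^d=\Theta(1)$ is far below $2^{d-1}$, and no expectation-plus-concentration argument can save it because $(2^{d-1},2^{nd})$-richness is a statement about \emph{every} column (and Chernoff never yields probability exactly $1$). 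Liu's argument uses the opposite orientation, where a $1$-entry means the query is \emph{far} from every database point; richness then follows deterministically for every $y$ from $N_y\ge 2^d-n\,V_{\gamma\lambda}\ge 2^{d-1}$ once $\gamma\lambda$ is tuned so that $n\,V_{\gamma\lambda}\le 2^{d-1}$, which is the actual role of the threshold (not $nV_\lambda/2^d=\Theta(1)$, which is even negative in the regime $d=\Theta(\log n)$ that the reduction in Theorem~\ref{theorem-ANN-v2} lands in).

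The same orientation error breaks your rectangle bound. With ``$1=$~near,'' a monochromatic $1$-rectangle $A\times B$ only says that for each $y\in B$ the balls $B(p_i,\gamma\lambda)$ jointly \emph{cover} $A$; it does \emph{not} say that every $p_i$ lies in $A^{(\gamma\lambda)}$ (a single $p_i$ near the ``center'' of $A$ can cover all of $A$ while the other $n-1$ points sit anywhere), so $\lvert B\rvert\le\lvert A^{(\gamma\lambda)}\rvert^n$ does not follow — you half-notice this and then restate the same unjustified inclusion as ``the careful version.'' Moreover, even if it held, the bound would be vacuous: $A^{(\gamma\lambda)}\supseteq A$ so its density is at least $2^{-d/(169\gamma^2)}>2^{-d/(32\gamma^2)}$, hence $\lvert A^{(\gamma\lambda)}\rvert^n>2^{nd-nd/(32\gamma^2)}$, and separately the explicit estimate you wrote, $\lvert A^{(\gamma\lambda)}\rvert\le 2^d(1-n\,2^{-nd/(32\gamma^2)})$, gives $\lvert B\rvert\le 2^{nd}e^{-n^2 2^{-nd/(32\gamma^2)}}$, which is nowhere near $2^{nd-nd/(32\gamma^2)}$. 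Under the correct ``$1=$~far'' orientation a $1$-rectangle forces every point of every $y\in B$ to lie \emph{outside} $A^{(\lambda)}$, so $\lvert B\rvert\le(2^d-\lvert A^{(\lambda)}\rvert)^n$ genuinely holds, and vertex-isoperimetry on the cube (applied to a set $A$ of density $2^{-d/(169\gamma^2)}$ expanding by $\lambda$) is what yields $2^d-\lvert A^{(\lambda)}\rvert\le 2^{d-d/(32\gamma^2)}$. Fixing the orientation is not cosmetic — it is the step that makes both halves of the argument go through.
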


A model-independent self-reduction of ANN was constructed in~\cite{patrascu2010higher}.

\begin{Claim}[Theorem 6 in \cite{patrascu2010higher}]\label{claim-selfreduction-ANN}
For $D=d/(1+5\gamma)\geq\log n$, $N<n$ and $k=n/N$, there exist two functions $\phi_X,\phi_Y$ such that $\phi_X$ (and $\phi_Y$) maps each query $(x,i)$ (and database $\bar{y}$) of $\bigoplus^k \ANN_N^{\lambda,\gamma,D}$ to a query $x'$ (and database $y'$) of $\ANN_{n}^{\lambda,\gamma,d}$ and it holds that $\bigoplus^k \ANN_N^{\lambda,\gamma,D}((x,i),\bar{y})= \ANN_{n}^{\lambda,\gamma,d}(x',y')$.
\end{Claim}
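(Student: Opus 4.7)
The plan is to build the reduction via a code-based tagging embedding: the extra $5\gamma D$ coordinates of the target $d=D(1+5\gamma)$-dimensional Hamming space serve as an ``address tag'' that separates the $k$ sub-databases. First I would fix a binary code $C:[k]\to\{0,1\}^{5\gamma D}$ with minimum Hamming distance strictly greater than $\gamma\lambda$; since $\log k\le\log n\le D$ and the required relative distance is only $\gamma\lambda/(5\gamma D)=\lambda/(5D)\le 1/5$ (using $\lambda\le D$, which is forced anyway for the sub-instance to be nontrivial), a Gilbert--Varshamov or random-coding existence argument supplies such a $C$.

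Next, define $\phi_X((i,x)):=C(i)\circ x\in\{0,1\}^d$ for a query $(i,x)\in[k]\times\{0,1\}^D$, and $\phi_Y(\bar y):=\bigcup_{j=1}^{k}\{C(j)\circ p:p\in y_j\}$ for a database $\bar y=(y_1,\ldots,y_k)$; by construction $\phi_Y(\bar y)$ is a database of $kN=n$ points in $\{0,1\}^d$, distinct across different $j$'s thanks to the code prefix. Because concatenation splits Hamming weight additively, for any $q\in y_j$ one has the key identity
\[
h\bigl(\phi_X((i,x)),\,C(j)\circ q\bigr)\;=\;h(C(i),C(j))\,+\,h(x,q).
\]

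Correctness then reduces to a two-case check. In the \emph{yes} case, a witness $p\in y_i$ with $h(x,p)\le\lambda$ embeds to $C(i)\circ p$, whose distance to $\phi_X((i,x))$ is exactly $h(x,p)\le\lambda$, so the embedded instance answers ``yes''. In the \emph{no} case, every $p\in y_i$ has $h(x,p)>\gamma\lambda$, so its embedding lies at distance $>\gamma\lambda$; and for $j\ne i$, every embedded point from $y_j$ is at distance $\ge h(C(i),C(j))>\gamma\lambda$ by the minimum-distance guarantee of $C$. Hence all $n$ embedded database points lie at distance $>\gamma\lambda$ from the embedded query, so the embedded ANN instance also answers ``no''. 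The main obstacle is existence of the code with the stipulated parameters; that the dimension slack is exactly $5\gamma D$ (rather than just $\gamma D$) is precisely what affords both $k\le 2^{\log n}\le 2^D$ codewords and a minimum distance comfortably above $\gamma\lambda$, and the hypothesis $D\ge\log n$ is what gives enough room for this to go through.
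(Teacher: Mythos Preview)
The paper does not supply its own proof of this claim; it is quoted verbatim as Theorem~6 of P\v{a}tra\c{s}cu--Thorup~\cite{patrascu2010higher}. Your construction is correct and is exactly the one used there: spend the $5\gamma D$ extra coordinates on a code $C:[k]\to\{0,1\}^{5\gamma D}$ whose pairwise distances exceed $\gamma\lambda$, concatenate the tag $C(i)$ (resp.\ $C(j)$) to the query (resp.\ each point of $y_j$), and use the additivity of Hamming distance under concatenation to verify that the promise is preserved. The Gilbert--Varshamov check goes through because $\gamma\ge 1$ forces the needed rate $\log k/(5\gamma D)\le 1/(5\gamma)\le 1/5$ below $1-H(1/5)\approx 0.278$, with $\lambda\le D$ (else the sub-instance is vacuous) giving relative distance at most $1/5$.
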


We then prove the following certificate lower bound for ANN.

\begin{theorem}\label{theorem-ANN-v2}
For $\ANN_{n}^{\lambda,\gamma,d}$ in $d$-dimensional Hamming space, assuming $d\geq(1+5\gamma)\log n$, there exists a $\lambda$, such that if $\ANN_{n}^{\lambda,\gamma,d}$ has $(s,w,t)$-certificates, then $t=\Omega\left({{\frac{d}{\gamma^3}}/{\log\frac{sw\gamma^3}{nd}}}\right)$.
\end{theorem}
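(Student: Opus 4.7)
The plan is to apply the direct-sum richness lemma (Lemma~\ref{lemma-direct-sum-cert}) to a direct-sum of low-dimensional ANN instances obtained from the self-reduction of Claim~\ref{claim-selfreduction-ANN}, and then compare with the rectangle upper bound in Claim~\ref{claim-rectangle-ANN}. Concretely, I set $D = d/(1+5\gamma)$ (so $D \geq \log n$ by hypothesis), let $N$ be a free parameter to be tuned later, and put $k = n/N$. Claim~\ref{claim-selfreduction-ANN} gives answer-preserving maps from instances of $\bigoplus^k \ANN_N^{\lambda,\gamma,D}$ to instances of $\ANN_n^{\lambda,\gamma,d}$, so any $(s,w,t)$-certificates for the latter immediately yield $(s,w,t)$-certificates for the former.

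By Claim~\ref{claim-rectangle-ANN}, each subproblem $\ANN_N^{\lambda,\gamma,D}$ is $(2^{D-1}, 2^{ND})$-rich, corresponding to $\alpha = 1/2$ and $\beta = 1$ in Lemma~\ref{lemma-direct-sum-cert}. Under the mild side condition $t \leq s/k$, the lemma produces, for some coordinate $i$, a monochromatic $1$-rectangle in $\ANN_N^{\lambda,\gamma,D}$ of size at least
\[
\frac{2^{D-1}}{2^{O(t\log(s/(kt)))}} \times \frac{2^{ND}}{2^{O(wt + t\log(s/(kt)))}}.
\]
Since Claim~\ref{claim-rectangle-ANN} forbids any $1$-rectangle of size $2^{D-D/(169\gamma^2)} \times 2^{ND-ND/(32\gamma^2)}$, at least one of the following must hold:
\[
\text{(a)}\;\; t\log(s/(kt)) = \Omega(D/\gamma^2), \qquad \text{(b)}\;\; wt + t\log(s/(kt)) = \Omega(ND/\gamma^2).
\]

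The crucial tuning step is to choose $N$ so that (b) becomes vacuous, leaving (a) to furnish the bound. I set $N = \Theta(w\gamma^3/d)$, which makes $ND/\gamma^2 = \Theta(w)$, and so (b) collapses to $t(w + \log(s/(kt))) = \Omega(w)$, trivially satisfied for $t \geq 1$. Condition (a) is thus forced, and since $s/(kt) = sN/(nt) = \Theta(sw\gamma^3/(ndt))$, I get
\[
t = \Omega\!\left(\frac{D/\gamma^2}{\log(sw\gamma^3/(ndt))}\right) = \Omega\!\left(\frac{d/\gamma^3}{\log(sw\gamma^3/(nd))}\right),
\]
where the $\log t$ term is absorbed via the trivial polynomial upper bound $t \leq n$. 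The main obstacles I anticipate are boundary and rounding issues around the choice of $N$: when $w\gamma^3/d < 1$ one should instead take $N = 1$ and derive the bound from a direct application of Lemma~\ref{lemma-richness-cert} to a single copy of $\ANN_n^{\lambda,\gamma,d}$; and the side condition $t \leq s/k$ is automatic in the nontrivial regime, since $t > sN/n$ would already be at least as strong as the claimed conclusion.
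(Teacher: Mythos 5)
Your overall framework matches the paper's: apply Claim~\ref{claim-selfreduction-ANN} to lift certificates for $\ANN_n^{\lambda,\gamma,d}$ to certificates for $\bigoplus^k\ANN_N^{\lambda,\gamma,D}$, invoke Lemma~\ref{lemma-direct-sum-cert} to produce a large monochromatic 1-rectangle in some copy of $\ANN_N^{\lambda,\gamma,D}$, and play it against the forbidden-rectangle bound of Claim~\ref{claim-rectangle-ANN} to obtain the disjunction
\[
\text{(a)}\ \ t\log\tfrac{s}{kt}=\Omega(D/\gamma^2)
\quad\text{or}\quad
\text{(b)}\ \ wt+t\log\tfrac{s}{kt}=\Omega(ND/\gamma^2).
\]
The paper does exactly this, with the parameter choice $N=w$.

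However, your ``crucial tuning step'' contains a logical reversal that invalidates the conclusion. The disjunction (a) or (b) is a \emph{guaranteed} fact: whatever $t$ is, at least one must hold. The lower bound it yields is therefore $t\geq\min(B_a,B_b)$, where $B_a,B_b$ are the lower bounds on $t$ implied by (a) and (b) respectively. You choose $N=\Theta(w\gamma^3/d)$ precisely so that $ND/\gamma^2=\Theta(w)$, making (b) ``trivially satisfied for $t\geq1$.'' But a disjunct that is always true does not \emph{force} the other disjunct --- it makes the disjunction vacuously true and extracts nothing. Concretely, your $B_b=\Omega(1)$, so $\min(B_a,B_b)=\Omega(1)$, and no useful lower bound on $t$ follows. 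To make (a) the operative constraint, one must do the opposite of what you did: pick $N$ large enough that (b) alone would already imply a bound \emph{at least as strong} as the target, so that the minimum of the two branches is the one coming from (a). This is exactly what the paper's choice $N=w$ accomplishes: branch (b) becomes $t=\Omega(ND/(\gamma^2 w))=\Omega(d/\gamma^3)$, dominating the target, while branch (a) gives $t=\Omega\bigl((d/\gamma^3)/\log\tfrac{sw}{nt}\bigr)$; the paper then extracts the stated bound (with $\gamma^3$ inside the logarithm) by a short fixed-point calculation, multiplying through by $\Delta=sw/(nd)$ and using the monotonicity of $x/\log x$. Your closing observation --- that a $t$ in the denominator inside the log can be dropped because $t\geq1$ only strengthens the bound --- is fine, but the bound you apply it to was not actually established.
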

\begin{proof}
Due to the model-independent reduction from $\bigoplus^k \ANN_N^{\lambda,\gamma,D}$ to $\ANN_{n}^{\lambda,\gamma,d}$ of Claim~\ref{claim-selfreduction-ANN}, existence of $(s,w,t)$-certificates for $\ANN_{n}^{\lambda,\gamma,d}$ implies the existence of $(s,w,t)$-certificates for $\bigoplus^k \ANN_N^{\lambda,\gamma,D}$.


Note that for problem $\ANN_N^{\lambda,\gamma,D}$, the size of query domain  is $|X|=2^D$, and the size of data domain is $|Y|=2^{ND}$, so applying Claim~\ref{claim-rectangle-ANN}, the problem is $(|X|/2,|Y|)$-rich. Assuming that $t\le\frac{s}{k}$, by Lemma \ref{lemma-direct-sum-cert}, $\ANN_N^{\lambda,\gamma,D}$ contains a 1-rectangle of size $2^{D}/2^{O(t\log\frac{s}{kt})}\times 2^{ND}/2^{O(wt+t\log\frac{s}{kt})}$. Due to Claim~\ref{claim-rectangle-ANN}, and by a calculation,
we have either $t=\Omega\left(\frac{D}{\gamma^2}/\log\frac{s}{kt}\right)$ or $t=\Omega\left(\frac{ND}{\gamma^2}/w\right)$. We then choose $N=w$. Note that such choice of $N$ may violate the assumption $t\le\frac{s}{k}$ (that is, $N\ge \frac{tn}{s}$) only when it implies an even higher lower bound $t>\frac{sw}{n}$.
With this choice of $N=w$, the bound $t=\Omega\left(\frac{D}{\gamma^2}/\log\frac{s}{kt}\right)$ is the smaller one in the two branches.
Substituting $D=d/(1+5\gamma)$ and $k=n/N$ we have $t=\Omega\left(\frac{d}{\gamma^3}/\log\frac{sN}{nt}\right)=\Omega\left(\frac{d}{\gamma^3}/\log\frac{sw}{nt}\right)$.
Multiplying both side by a $\Delta=\frac{sw}{nd}$ 
gives us $\Delta\cdot\gamma^3=\Omega\left(\frac{\Delta d}{t}/\log\frac{\Delta d}{t}\right)$. Assuming $\Delta'=\frac{\Delta d}{t}$, we have $\frac{\Delta '}{\log \Delta'}=O(\Delta \gamma^3)$. The function $f(x)=\frac{x}{\log x}$ is increasing for $x>1$, so we have $\Delta'=O(\Delta\gamma^3\log(\Delta \gamma^3))$, which gives us the lower bound $t=\Omega\left({{\frac{d}{\gamma^3}}/{\log\frac{sw\gamma^3}{nd}}}\right)$.
\end{proof}



For general space, when points are still from the Hamming cube $\{0,1\}^d$, for any two points $x,x'\in\{0,1\}^d$, the Hamming distance $h(x,x')=\|x-x'\|_1=\|x-x'\|_2^2$. And by setting $\gamma=1$, we have the following corollary for exact near neighbor.

\begin{corollary}\label{corollary-NN}
There exists a constant $C$ such that for problem $\mathrm{NN}_{n}^{d}$ with Hamming distance, Manhattan norm $\ell_1$ or Euclidean norm $\ell_2$, assuming $d\geq{C\log n}$, if $\mathrm{NN}_{n}^{d}$ has $(s,w,t)$-certificates, then $t=\Omega({{d}/{\log\frac{sw}{nd}}})$.
\end{corollary}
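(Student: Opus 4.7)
The plan is to deduce Corollary~\ref{corollary-NN} directly from Theorem~\ref{theorem-ANN-v2} via two elementary reductions: specializing ANN to exact NN, and using the coincidence of metrics on the Boolean cube.

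First, I would observe that exact near neighbor is at least as hard as its approximate relaxation. Concretely, for any absolute constant $\gamma>1$, any $(s,w,t)$-certificate scheme for $\mathrm{NN}_n^d$ in the $d$-dimensional Hamming space immediately serves as an $(s,w,t)$-certificate scheme for $\ANN_n^{\lambda,\gamma,d}$: the relaxed promise of ANN only grants more freedom in the grey zone between distances $\lambda$ and $\gamma\lambda$, so any certificate valid for NN is automatically valid for ANN. Applying Theorem~\ref{theorem-ANN-v2} with $\gamma$ fixed to any absolute constant greater than $1$ (say $\gamma=2$), the factors of $\gamma^3$ are absorbed into the constants hidden in the $\Omega(\cdot)$, and the hypothesis $d \ge (1+5\gamma)\log n$ becomes $d \ge C\log n$ for some absolute constant $C$. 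This yields $t = \Omega\bigl(d/\log\tfrac{sw}{nd}\bigr)$ for $\mathrm{NN}_n^d$ in Hamming space.

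Second, I would transfer this Hamming lower bound to the $\ell_1$ and $\ell_2$ settings by restricting inputs to the Boolean cube $\{0,1\}^d$. For any two points $x,x'\in\{0,1\}^d$, one has $h(x,x') = \|x-x'\|_1 = \|x-x'\|_2^2$. Hence the exact near neighbor problem in Hamming space over $\{0,1\}^d$ is a special case of $\mathrm{NN}_n^d$ in $\ell_1$ (respectively $\ell_2$), with the Hamming threshold $\lambda$ translating to the threshold $\lambda$ in $\ell_1$ (respectively $\sqrt{\lambda}$ in $\ell_2$). Any $(s,w,t)$-certificate scheme for $\mathrm{NN}_n^d$ under $\ell_1$ or $\ell_2$ therefore restricts to one for $\mathrm{NN}_n^d$ in Hamming distance over Boolean inputs, and the lower bound from the first step carries over.

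No step in this reduction is technically difficult; each is standard. The only care needed is to verify that the $\gamma$-dependent constants in Theorem~\ref{theorem-ANN-v2} collapse correctly when $\gamma$ is fixed to an absolute constant, and that the norm translation in the second step preserves the certificate structure, both of which are immediate.
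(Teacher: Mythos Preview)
Your proposal is correct and follows essentially the same route as the paper. The paper simply sets $\gamma=1$ in Theorem~\ref{theorem-ANN-v2} (so that $\ANN$ coincides with $\mathrm{NN}$ and $C=1+5\gamma=6$), and then invokes the identity $h(x,x')=\|x-x'\|_1=\|x-x'\|_2^2$ on $\{0,1\}^d$; your version fixes $\gamma$ to an absolute constant greater than $1$ and inserts the trivial reduction from $\mathrm{NN}$ to $\ANN$, which is an equally valid and only cosmetically different way to reach the same conclusion.
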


\subsubsection{Partial match}
The partial match problem is another fundamental high-dimensional problem.
The $d$-dimensional partial match problem $\PM_n^d$ is defined as follows: a database $y$ contains $n$ strings from $\{0,1\}^d$, for any query pattern $x\in\{0,1,*\}^d$, the problem asks whether there is a string $z$ in database $y$ matching pattern $x$, in such a way that $x_i=z_i$ for all $i\in[d]$ that $x_i\neq *$.

\begin{theorem}\label{theorem-PM}
Assuming $d\geq{2\log n}$, if problem $\PM_{n}^{d}$ has $(s,w,t)$-certificates for a $w=d^{O(1)}$, then $t=\Omega\left({{d}/{\log\frac{sd}{n}}}\right)$.
\end{theorem}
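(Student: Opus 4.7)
The proof will mirror the one for Theorem~\ref{theorem-ANN-v2}, substituting the ANN-specific ingredients with their partial-match analogues. The three inputs we need are: (i) a richness bound and a non-existence bound for large monochromatic 1-rectangles for a single small partial match instance $\PM_N^D$; (ii) a model-independent self-reduction that embeds $\bigoplus^k \PM_N^D$ into one copy of $\PM_n^d$ for appropriate parameters $N,D,k$; and (iii) the direct-sum richness lemma for certificates (Lemma~\ref{lemma-direct-sum-cert}). The plan is to parametrize $D=d/2$ and $k=n/N$ (so that the assumption $d\ge 2\log n$ allows $D\ge\log n$, which is what is needed for the richness/rectangle bounds on $\PM_N^D$ to be nontrivial), then to choose $N$ at the end to optimize the two branches coming out of Lemma~\ref{lemma-direct-sum-cert}.

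First I would invoke the known (communication-complexity style) rectangle analysis for partial match, which gives that $\PM_N^D$ is $(\Omega(|X|), \Omega(|Y|))$-rich on its natural domain $X=\{0,1,*\}^D$ and $Y=(\{0,1\}^D)^N$, and that $\PM_N^D$ admits no monochromatic 1-rectangle of size $|X|/2^{\Omega(D)}\times|Y|/2^{\Omega(ND)}$; this is the partial-match analogue of Claim~\ref{claim-rectangle-ANN} and is essentially the reduction of PM from LSD used in~\cite{miltersen1998data,patrascu2010higher}. Next, I would invoke the model-independent self-reduction of~\cite{patrascu2010higher} (in the same flavor as Claim~\ref{claim-selfreduction-ANN}) which converts an instance of $\bigoplus^k \PM_N^D$ into an instance of $\PM_n^d$ so that solutions transfer without probing extra cells; since the reduction only relabels queries and databases and never inspects the table, it preserves $(s,w,t)$-certificates. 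Consequently, an $(s,w,t)$-certificate scheme for $\PM_n^d$ yields one for $\bigoplus^k \PM_N^D$ of the same parameters.

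Now I would apply Lemma~\ref{lemma-direct-sum-cert} to $\bigoplus^k \PM_N^D$ with the richness parameters above. This produces some $i$ for which $\PM_N^D$ has a monochromatic 1-rectangle of size $|X|/2^{O(t\log(s/(kt)))}\times|Y|/2^{O(wt+t\log(s/(kt)))}$. Combining with the upper bound on rectangle sizes from step (i) forces either
\[
t=\Omega\!\left(\frac{D}{\log\frac{s}{kt}}\right)\qquad\text{or}\qquad t=\Omega\!\left(\frac{ND}{w}\right).
\]
Choosing $N=w=d^{O(1)}$ makes the second branch at least $\Omega(D)=\Omega(d)$, so the first branch dominates (except in a trivial regime that already implies a stronger bound, as in the ANN proof); substituting $D=d/2$, $k=n/N=n/w$ and solving the implicit inequality $t=\Omega(d/\log(sw/(nt)))$ by the same $f(x)=x/\log x$ manipulation as in the proof of Theorem~\ref{theorem-ANN-v2} yields $t=\Omega(d/\log(sw/n))=\Omega(d/\log(sd/n))$, since $w=d^{O(1)}$.

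\textbf{Main obstacle.} The technical heart is verifying that the self-reduction of~\cite{patrascu2010higher} from $\bigoplus^k \PM_N^D$ to $\PM_n^d$ really is model-independent, i.e.\ that the query transformation $\phi_X$ and database transformation $\phi_Y$ can be composed with any $(s,w,t)$-certificate scheme for the target without increasing $t$. In the original presentation this is phrased for cell-probing schemes and communication protocols; here I would have to check that the transformation touches neither the table layout nor the probing pattern, which it does not, since it is a pure reduction on the input side. Once this is in hand, the rest of the argument is a routine instantiation of the machinery used for Theorem~\ref{theorem-ANN-v2}.
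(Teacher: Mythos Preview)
Your overall plan is exactly the paper's: embed $\bigoplus^k\PM_N^D$ into $\PM_n^d$ by a model-independent reduction, invoke Lemma~\ref{lemma-direct-sum-cert}, and confront the resulting rectangle with a non-existence bound for $\PM_N^D$. The reduction you worry about as the ``main obstacle'' is in fact the trivial part: one simply prepends the $\log k$-bit index $i$ to every string in $y_i$ and to the query pattern, obtaining strings in $\{0,1\}^{D+\log k}=\{0,1\}^d$; this touches only the inputs, so certificates transfer verbatim.

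The real gap is in step~(i). You assert that $\PM_N^D$ has no monochromatic $1$-rectangle of size $|X|/2^{\Omega(D)}\times|Y|/2^{\Omega(ND)}$, by analogy with Claim~\ref{claim-rectangle-ANN}. This analogy fails: fixing any string $z\in\{0,1\}^D$, the $2^D$ patterns matching $z$ together with the $2^{(N-1)D}$ databases whose first string equals $z$ already form a $1$-rectangle of size $|X|/2^{\Theta(D)}\times |Y|/2^{D}$, so no column-side exponent of order $ND$ can hold. The bound that is actually available (Theorem~11 of~\cite{patrascu2010higher}, on a suitably chosen domain) is only
\[
\text{no $1$-rectangle of size }\ \frac{|X|}{2^{O(D)}}\times\frac{|Y|}{2^{O(\sqrt{N}/D^2)}}.
\]
With this correct bound the second branch coming out of Lemma~\ref{lemma-direct-sum-cert} becomes $wt+t\log\frac{s}{kt}=\Omega(\sqrt{N}/D^2)$ rather than $\Omega(ND)$, and your choice $N=w$ makes it vacuous. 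The fix is to take $N=w^2D^{8}$ (or any $N$ with $\sqrt{N}/D^2\gg wD$); since $w=d^{O(1)}$ this is still $N=d^{O(1)}$, so $k=n/N$ satisfies $\log\frac{sN}{n}=O\!\left(\log\frac{sd}{n}\right)$ and the first branch yields $t=\Omega\!\left(d/\log\frac{sd}{n}\right)$ as desired. The $x/\log x$ unwinding you borrow from the ANN proof is then unnecessary here; simply dropping the $t$ inside the logarithm already suffices for the stated bound.
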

\newcommand{\ProofPM}{
The proof is almost exactly the same as the proof of partial match lower bound in~\cite{patrascu2010higher}. We restate the proof in the context of certificates.
Let  $N=n/k$ and $D=d-\log k\geq d/2$.
We have the following model-independent reduction from $\bigoplus^k \PM_N^D$ to $\PM_{n}^{d}$: For the data input $\bigoplus^k \PM_N^D$, we  add the subproblem index in binary code, which takes $\log k$ bits,  as a prefix for every string. And for the query, we also add  the subproblem index $i$ in binary code as a prefix to the query pattern to form a new query in $\PM_{n}^{d}$. It is easy to see $\PM_{n}^{d}$ solves $\bigoplus^k \PM_N^D$ with such a reduction, and $(s,w,t)$-certificates for $\PM_{n}^{d}$ are $(s,w,t)$-certificates for $\bigoplus^k \PM_N^D$.

In Theorem~11 of~\cite{patrascu2010higher}, it is proved that on a certain domain $X\times Y$ for $\PM_N^D$:
\begin{itemize}
\item  $\PM_N^D$ is $\left({|X|}/{4},{|Y|}/{4}\right)$-rich. In fact, in~\cite{patrascu2010higher} it is only proved that the density of 1s in $\PM_N^D$ is at least $1/2$, which easily implies the richness due to an averaging argument.
\item $\PM_N^D$ has no 1-rectangle of size $|X|/2^{O(D)}\times |Y|/2^{O(\sqrt{N}/D^2)}$.
\end{itemize}

Assuming that $t\le\frac{s}{k}$, by Lemma~\ref{lemma-direct-sum-cert}, we have either $t\log\frac{s}{k}=\Omega(D)$ or $t\log\frac{s}{k}+wt=\Omega(\sqrt N/D^2)$.  We choose $N=w^2\cdot D^8$. Note that this choice of $N$ may violate the assumption $t\le\frac{s}{k}$ only when an even higher lower bound $t>\frac{sw^2D^8}{n}=\Omega(d^2)$ holds. With this choice of $N=w^2\cdot D^8=d^{O(1)}$, the second bound above becomes $t=\Omega(d^2)$, while the first becomes $t=\Omega\left(d/\log\frac{sd}{nt}\right)=\Omega\left(d/\log\frac{sd}{n}\right)$.
}
\ifabs{
The proof of this theorem is in Appendix~\ref{appendix-direct-sum}.
}{
\begin{proof}
\ProofPM
\end{proof}
}

\begin{suppress}
In \cite{p2011unifying}, Patrascu shows that BLOCKED-LSD on universe $[U\cdot B]$ reduces to partial match over $N=U\cdot B$ strings in dimension $D=O(U\log B)$. We have proved in Section \ref{section-characterization} that BLOCKED-LSD on universe $[U\cdot B]$ has lower bound $t\log \frac{s}{t}=\Omega(U\cdot \log B)$ or $tw =\Omega(U\cdot \sqrt B)$. By Lemma \ref{lemma-direct-sum-cert}, we have either $t\log\frac{s}{kt}=\Omega(U\log B)=\Omega(D)$ or $wt=\Omega(U\cdot \sqrt B)=\Omega(\sqrt N)$.
 Now set $N=w^2\cdot D^4=d^{O(1)}$, the second inequality becomes $t=\Omega(d^2)$, while the first becomes $t=\Omega(d/\log\frac{sd}{nt})=\Omega(d/\log\frac{sd}{n})$. We thus conclude that $t=\Omega(d/\log\frac{sd}{n})$.

 In [Unifying Landscape], Patrascu shows that BLOCKED-LSD on universe $[U\cdot B]$ reduces to partial match over $N=U\cdot B$ strings in dimension $D=O(U\log B)$. We have proved in Section \ref{section-characterization} that BLOCKED-LSD on universe $[U\cdot B]$ has lower bound $t\log \frac{s}{t}=\Omega(U\cdot \log B)$ or $tw =\Omega(U\cdot \sqrt B)$. By Lemma \ref{lemma-direct-sum-cert}, we have either $t\log\frac{s}{kt}=\Omega(U\log B)=\Omega(D)$ or $wt=\Omega(U\cdot \sqrt B)=\Omega(\sqrt N)$.
 Now set $N=w^2\cdot D^4=d^{O(1)}$, the second inequality becomes $t=\Omega(d^2)$, while the first becomes $t=\Omega(d/\log\frac{sd}{nt})=\Omega(d/\log\frac{sd}{n})$. We thus conclude that $t=\Omega(d/\log\frac{sd}{n})$.
 \end{suppress}

\begin{suppress}
 In [Borodin, Ostrovsky, and Rabani], they observe that partial match can be reduced to near neighbor problem in Hamming distance, Manhattan norm $\ell_1$ and Euclidean norm $\ell_2$. So the lower bound for partial match implies a lower bound for $NN_{n}^d$. We restate the reductions in the following theorem.
\begin{theorem}\label{theorem-NN}
There exists a constant $C$ such that the following holds. For problem $NN_{n}^{d}$ in Hamming cube, Manhattan norm $\ell_1$ or Euclidean norm $\ell_2$, assume $d\geq{C\log n}$. If $NN_{n}^{d}$ has $(s,w,t)$-certificates where $w=d^{O(1)}$, then $t=\Omega({{d}/{\log\frac{sd}{n}}})$.
\end{theorem}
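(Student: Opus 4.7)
The plan is to prove Theorem~\ref{theorem-NN} by a model-independent reduction from the partial match problem $\PM_n^d$ to the near neighbor problem $\mathrm{NN}_n^{d'}$ in each of the three norms (Hamming, $\ell_1$, $\ell_2$), where $d' = O(d)$, and then invoke Theorem~\ref{theorem-PM} as a black box. The key observation, attributed to Borodin--Ostrovsky--Rabani, is that partial match reduces to exact near neighbor in these metrics by a simple encoding that blows up the dimension by at most a constant factor; since this encoding acts only on queries and databases (and not on table cells), every certificate for the $\mathrm{NN}$ instance is automatically a certificate for the $\PM$ instance with identical parameters $(s,w,t)$.

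Concretely, I would first describe the encoding for Hamming distance: replace each bit of a database string $z\in\{0,1\}^d$ by the pair $(1,0)$ for a $0$ and $(0,1)$ for a $1$, and replace each character $x_i$ of a query pattern $x\in\{0,1,*\}^d$ by $(1,0)$, $(0,1)$, or $(0,0)$ for $0$, $1$, or $*$ respectively. A direct computation then shows that $h(x',z') = \#\{i : x_i=*\} + 2\#\{i : x_i\in\{0,1\},\ x_i\neq z_i\}$, so padding query patterns to a canonical number $m$ of wildcards (say $m = d/2$) gives a fixed threshold $\lambda = m$ that separates the ``match'' case (distance exactly $\lambda$) from the ``no match'' case (distance $\geq \lambda+2$). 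The same $\{0,1\}^{2d}$ encoding serves the $\ell_1$ reduction verbatim, and since for $0/1$-vectors we have $\|u-v\|_1 = \|u-v\|_2^2 = h(u,v)$, it serves $\ell_2$ as well (using threshold $\sqrt{\lambda}$).

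Next I would formalize model-independence: given the encoding $\phi_X$ on queries and $\phi_Y$ on databases (extended to the full $n$-string database coordinatewise), one has $\PM_n^d(x,y) = \mathrm{NN}_n^{2d}(\phi_X(x),\phi_Y(y))$ for all inputs. Hence if $\mathrm{NN}_n^{2d}$ is solved by a code $T:Y'\to\Sigma^s$ whose answer on every query is certified by $t$ cells, composing the preprocessor with $\phi_Y$ yields a code $T\circ\phi_Y$ for $\PM_n^d$ with the same $(s,w,t)$ parameters. The assumption $d\geq C\log n$ in Theorem~\ref{theorem-NN} is chosen to ensure the induced dimension $2d$ still satisfies the hypothesis $2d \geq 2\log n$ of Theorem~\ref{theorem-PM}; this fixes the constant $C$ (one can take $C=1$ after absorbing factors). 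Applying Theorem~\ref{theorem-PM} to the resulting $(s,w,t)$-certificate scheme for $\PM_n^d$ with $w = d^{O(1)}$ then gives $t = \Omega(d/\log\frac{sd}{n})$, which is exactly the bound asserted for $\mathrm{NN}_n^d$.

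The main obstacle I anticipate is bookkeeping rather than mathematical depth: I have to check that a single encoding handles all three norms simultaneously, that padding the query with a uniform number of wildcards does not change the worst-case hardness (it does not, since the hard instances for $\PM_n^d$ in Theorem~\ref{theorem-PM} can be assumed to have a balanced number of wildcards), and that the factor-$2$ blowup in dimension is absorbed into the $\Omega(\cdot)$ bound without changing the form of the denominator $\log\frac{sd}{n}$. Once these routine checks are done, the theorem follows immediately from Theorem~\ref{theorem-PM}.
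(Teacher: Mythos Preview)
Your proposal is correct and follows essentially the same route as the paper: reduce $\PM_n^d$ to $\mathrm{NN}_n^{O(d)}$ via the Borodin--Ostrovsky--Rabani coordinate encoding, observe the reduction is model-independent so certificates transfer with identical $(s,w,t)$, and invoke Theorem~\ref{theorem-PM}. The paper's encodings differ only cosmetically (it maps $*\to 11$ rather than $00$ for Hamming, and for $\ell_1,\ell_2$ it uses the one-coordinate map $0\to -\tfrac12,\ *\to\tfrac12,\ 1\to\tfrac32$ instead of reusing the doubled Hamming encoding), and it does not pad wildcards at all: since in the paper's definition of $\mathrm{NN}_n^d$ the threshold $\lambda$ is part of the query, one simply sets $\lambda$ equal to the number of $*$'s in the given pattern, so your padding step and the associated ``balanced hard instances'' check are unnecessary.
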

\begin{proof}
For any query pattern in $\{0,1,*\}^d$, let $k$ denote the number of $*$ in the pattern, we make the following transformations.
\begin{itemize}
\item{Hamming distance:} $0\rightarrow 01$; $1\rightarrow 10$ $*\rightarrow 11$. Here the dimension $d$ doubles but does not affect the complexity. If there exists a string that matches the pattern, then there must exist a point in the Hamming cube within distance $k$ of the query point, where $k$ denotes the number of $*$ in the pattern.
\item{$\ell_1$ and $\ell_2$ norm:} $0\rightarrow -\frac{1}{2};*\rightarrow\frac{1}{2};1\rightarrow\frac{3}{2}$. If such a string exists, then there exists a point within distance $k$ in $\ell_1$ norm and $\sqrt k$ in $\ell_2$.
\end{itemize}
Now we get the conclusion.
\end{proof}
\end{suppress}


It is well known that partial match can be reduced to 3-approximate near neighbor in $\ell_\infty$-norm by a very simple reduction~\cite{indyk2001approximate}. 
We write 3-$\ANN_n^{\lambda,d}$ for $\ANN_n^{\lambda,3,d}$.


\begin{theorem}\label{theorem-3-ANN}
Assuming $d\geq{2\log n}$, there is a $\lambda$ such that if 3-$\ANN_{n}^{\lambda,d}$ in $\ell_{\infty}$-norm has $(s,w,t)$-certificates for a $w=d^{O(1)}$, then $t=\Omega({{d}/{\log\frac{sd}{n}}})$.
\end{theorem}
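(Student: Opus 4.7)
The plan is to establish Theorem~\ref{theorem-3-ANN} by piggy-backing on the partial match certificate lower bound (Theorem~\ref{theorem-PM}) via Indyk's classical reduction from $\PM$ to 3-$\ANN$ in $\ell_\infty$. Concretely, I would invoke an embedding $\phi_X:\{0,1,*\}^d\to\mathbb{R}^d$ and $\phi_Y:\{0,1\}^d\to\mathbb{R}^d$ (the standard Indyk mapping, e.g., sending database bits $0,1$ to $0,2$, query bits $0,1$ to $0,2$, and $*$ to $1$, possibly rescaled so that the matching/non-matching $\ell_\infty$ gap achieves a factor of $3$) together with a threshold $\lambda$, such that whenever a database string $z$ matches a pattern $x$ we have $\|\phi_X(x)-\phi_Y(z)\|_\infty\le\lambda$, while whenever $z$ does not match $x$, $\|\phi_X(x)-\phi_Y(z)\|_\infty>3\lambda$. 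This embedding is completely model-independent: it applies coordinate-wise to each query and each database string before any storage or probing takes place.

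Next I would argue that certificates for 3-$\ANN_{n}^{\lambda,d}$ in $\ell_\infty$ transfer directly to certificates for $\PM_{n}^{d}$. Given $(s,w,t)$-certificates $T$ for 3-$\ANN_{n}^{\lambda,d}$, define a code $T'$ for $\PM_{n}^{d}$ by $T'_y=T_{\phi_Y(y)}$, where $\phi_Y$ is applied point-wise to each of the $n$ strings in $y$. For a pattern query $x\in\{0,1,*\}^d$, use the 3-$\ANN$ certificate at query point $\phi_X(x)$: the cells that uniquely determine whether the nearest $\phi_Y$-image is within $\lambda$ or farther than $3\lambda$ also uniquely determine whether some database string matches $x$. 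Thus $\PM_{n}^{d}$ admits $(s,w,t)$-certificates with the same parameters (and cell width $w=d^{O(1)}$ is preserved since $\phi$ does not inflate the representation).

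Applying Theorem~\ref{theorem-PM} to this derived certificate scheme, under the hypothesis $d\ge 2\log n$ we obtain $t=\Omega(d/\log\frac{sd}{n})$, which is exactly the claimed bound for 3-$\ANN_{n}^{\lambda,d}$ in $\ell_\infty$-norm. The only technical subtlety, and the step I would verify most carefully, is that the dimension used by the 3-$\ANN$ problem after the reduction is the same $d$ (up to a constant factor) as the dimension of the source $\PM$ instance, so that the $d$ appearing in the $\PM$ lower bound and the $d$ in the 3-$\ANN$ statement agree; Indyk's construction is coordinate-wise so this is immediate. There are no communication-model-specific manipulations involved, so the reduction is automatically valid for the nondeterministic certificate model, and the theorem follows.
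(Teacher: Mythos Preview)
Your approach is exactly the paper's: a model-independent, coordinate-wise Indyk reduction from $\PM_n^d$ to 3-$\ANN_n^{\lambda,d}$ in $\ell_\infty$, followed by invoking Theorem~\ref{theorem-PM}. One small correction: the sample embedding you write down (database $0,1\mapsto 0,2$; query $0,1,*\mapsto 0,2,1$) only yields an $\ell_\infty$ gap of $1$ versus $2$, and rescaling cannot change that ratio; the paper instead maps the \emph{query} coordinates $0,*,1\mapsto -\tfrac12,\tfrac12,\tfrac32$ and leaves the database in $\{0,1\}$, so a matching coordinate is always at distance exactly $\tfrac12$ and a mismatching one at $\tfrac32$, giving the required factor~$3$ with $\lambda=\tfrac12$.
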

\begin{proof}
We have the following model-independent reduction.
For each query pattern $x$ of partial match, we make the following transformation to each coordinate: $0\rightarrow-\frac{1}{2}$; $*\rightarrow\frac{1}{2}; 1\rightarrow\frac{3}{2}$. For a string in database the $\ell_{\infty}$-distance is $\frac{1}{2}$ if it matches pattern $x$ and $\frac{3}{2}$ if otherwise. 
\end{proof} 

\section{Lower bounds implied by lopsided set disjointness}\label{section-Unifying-Landscape}
It is observed in~\cite{patrascu11structures} that a variety of cell-probe lower bounds can be deduced from the communication complexity of one problem, the lopsided set disjointness (LSD). In~\cite{sommer2009distance}, the communication complexity of LSD is also used to prove the cell-probe lower bound for approximate distance oracle.

\ifabs{
In this section, we modify these communication-based reductions to make  them model-independent. A consequence of this is a list of certificate lower bounds as shown in Table~\ref{table-results} for: 2-Blocked-LSD, reachability oracle, 2D stabbing, 2D range counting, 4D range reporting, and approximate distance oracle.
}{
In this section, we modify these communication-based reductions to make  them model-independent. A consequence of this is a list of certificate lower bounds which match the highest known cell-probe lower bounds for respective problems, including: 2-Blocked-LSD, reachability oracle, 2D stabbing, 2D range counting, 4D range reporting, and approximate distance oracle.
}



\newcommand{\SectionLSD}{
\subsection{LSD with structures}\label{section-2-BlockedLSD}
A key idea of using LSD in reduction is to reduce from LSD with restricted  inputs.

For the purpose of reduction, the LSD problem is usually formulated as follows: the universe is $[N\cdot B]$, each query set $S\subset[N\cdot B]$ has size $N$, and there is no restriction on the size of data set $T\subseteq[N\cdot B]$. The LSD problem asks whether $S$ and $T$ are disjoint.


\begin{proposition}\label{proposition-LSD-rectangle}
For any $M\geq N$, if LSD has monochromatic 1-rectangle of size ${M\choose N}\times K$ then $K\le 2^{NB-M}$.
\end{proposition}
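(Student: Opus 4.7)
The plan is a short two-step counting argument: bound the support of the query side from below, then bound the data side from above by the complementary space.

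First, I would fix a monochromatic 1-rectangle $\mathcal{S}\times\mathcal{T}$ with $|\mathcal{S}|=\binom{M}{N}$ and $|\mathcal{T}|=K$, and introduce $U=\bigcup_{S\in\mathcal{S}}S\subseteq[NB]$, the union of all query sets appearing in the rectangle. Every $S\in\mathcal{S}$ is an $N$-subset of $U$, so $\mathcal{S}\subseteq\binom{U}{N}$ and therefore $\binom{|U|}{N}\geq|\mathcal{S}|=\binom{M}{N}$. Since $\binom{\cdot}{N}$ is monotone in its top argument for integers $\geq N$, this forces $|U|\geq M$.

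Next, I would use the 1-rectangle property: for every $T\in\mathcal{T}$ and every $S\in\mathcal{S}$, the pair is a $1$-entry of LSD, i.e.\ $S\cap T=\emptyset$. Taking the union over $S\in\mathcal{S}$ gives $T\cap U=\emptyset$, so $T\subseteq[NB]\setminus U$. Consequently $\mathcal{T}\subseteq 2^{[NB]\setminus U}$ and hence $K=|\mathcal{T}|\leq 2^{NB-|U|}\leq 2^{NB-M}$, which is the claim.

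There is essentially no obstacle here; the only thing to be slightly careful about is the inequality $\binom{|U|}{N}\geq\binom{M}{N}\Rightarrow|U|\geq M$, which uses $M\geq N$ (given in the hypothesis) so that the binomial coefficient is a strictly increasing function of its top argument on the relevant range. Everything else is a one-line union bound plus the trivial count $|2^V|=2^{|V|}$.
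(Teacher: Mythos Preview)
Your proof is correct and follows exactly the same approach as the paper: take the union $U$ of all query sets in the rectangle, note that each $T$ on the data side must avoid $U$, and count. If anything, you are slightly more careful than the paper, which tacitly identifies the given $M$ with the size of the union rather than explicitly deriving $|U|\ge M$ from $\binom{|U|}{N}\ge\binom{M}{N}$ and the monotonicity granted by $M\ge N$.
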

\begin{proof}
For any 1-rectangle of LSD, suppose the rows are indexed by $S_1,S_2,\dots,S_R$ and the columns are indexed by $T_1,T_2,\dots,T_K$. Consider the set $\mathcal{S}=\bigcup_i S_i$. Let $M=|\mathcal{S}|$. Note that $R\le \binom{M}{N}$. For any $T_i$, we have $T_i\cap \mathcal{S}=\emptyset$, so it holds that $K\leq 2^{NB-M}$.
\end{proof}

The 2-Blocked-LSD is a special case of LSD problem: the universe $[N\cdot B]$ is interpreted as $[\frac{N}{B}]\times[B]\times[B]$ and it is guaranteed that for every $x\in[\frac{N}{B}]$ and $y\in[B]$, $S$ contains a single element of the form $(x,y,*)$ and a single element of the form $(x,*,y)$.



In~\cite{patrascu11structures}, general LSD problem is reduced to 2-Blocked-LSD by communication protocols. Here we translate this reduction in the communication model to a model-independent reduction from subproblems of LSD to 2-Blocked-LSD. 







The following claim can be proved by a standard application of the probabilistic method.
\begin{Claim}[Lemma~11 in~\cite{patrascu2008data}]\label{claim-LSD-2blockedLSD}
There exists a set $\mathcal{F}$ of permutations on universe $[N\cdot B]$, where $|\mathcal{F}|=e^{2N}\cdot 2N\log B$, such that for any query set $S\subset[N\cdot B]$ of LSD, there exists a permutation $\pi\in \mathcal{F}$ for which $\pi(S)$ is an instance of 2-Blocked-LSD.
\end{Claim}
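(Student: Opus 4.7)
The plan is to choose $\mathcal{F}$ by the probabilistic method: draw $m := e^{2N}\cdot 2N\log B$ permutations of $[N\cdot B]$ uniformly and independently at random, and show that with positive probability every LSD query $S \subseteq [N\cdot B]$ of size $N$ is mapped by at least one of them to a 2-Blocked-LSD instance.

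The first step is to count the 2-Blocked-LSD instances. For each $x \in [N/B]$, the slice $S_x = \{(y,z) : (x,y,z) \in S\}$ must be the graph of a permutation of $[B]$: the first structural condition forces $|S_x| = B$ with the projection onto the second coordinate a bijection, and the second condition forces the projection onto the third coordinate also to be a bijection. Hence the number of 2-Blocked-LSD instances is exactly $(B!)^{N/B}$, while the number of $N$-subsets of $[N\cdot B]$ is $\binom{NB}{N}$. For a fixed $S$ and a uniformly random $\pi$, the probability that $\pi(S)$ is a 2-Blocked-LSD instance is therefore $(B!)^{N/B}/\binom{NB}{N}$, and invoking Stirling's inequality $B! \geq (B/e)^B$ together with the standard bound $\binom{NB}{N} \leq (eB)^N$, this ratio is at least $e^{-2N}$.

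The second step is a routine union bound. The probability that some $S$ is not covered by any of $\pi_1,\ldots,\pi_m$ is at most $\binom{NB}{N}\cdot (1 - e^{-2N})^m \leq \exp\!\bigl(N\ln(eB) - m\, e^{-2N}\bigr)$, and plugging in $m = e^{2N}\cdot 2N\log B$ makes the exponent negative for all sufficiently large $B$, so an $\mathcal{F}$ of the claimed size exists.

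I do not foresee any real obstacle. The only subtle point is the counting of 2-Blocked-LSD instances: one must notice that the two structural conditions together force each block $S_x$ to be the graph of a permutation of $[B]$ rather than an arbitrary function, which is exactly what produces the factor $(B!)^{N/B}$ and the clean $e^{-2N}$ matching the $e^{2N}$ in $|\mathcal{F}|$. Once that counting is correct, the rest is a textbook probabilistic method argument.
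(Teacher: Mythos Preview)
Your proposal is correct and is precisely the ``standard application of the probabilistic method'' the paper alludes to; the paper does not spell out its own proof but merely cites~\cite{patrascu2008data}. Your counting of 2-Blocked-LSD instances as $(B!)^{N/B}$ is the key observation, and the resulting $e^{-2N}$ bound on the success probability of a single random permutation is exactly what drives the $e^{2N}$ factor in $|\mathcal{F}|$; the remaining union bound is routine (and in fact works already for $B\ge 2$ if $\log$ is base~$2$, not just for ``sufficiently large $B$'').
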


We then state our model-independent reduction as the following certificate lower bound.
\begin{theorem}\label{theorem-2Blocked-LSD}
For any constant $\delta>0$, if $2$-Blocked-LSD on universe $[\frac{N}{B}]\times[B]\times[B]$ has $(s,w,t)$-certificates, then it holds either $t=\Omega\left(\frac{NB^{1-\delta}}{w}\right)$ or $t=\Omega\left(\frac{N\log B}{\log \frac{s}{t}}\right)$.
\end{theorem}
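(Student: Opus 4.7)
The plan is a reduction-then-richness argument: first lift the hypothetical $(s,w,t)$-certificates for $2$-Blocked-LSD to $(s\cdot|\mathcal{F}|, w, t)$-certificates for general LSD on universe $[NB]$ via the permutation family $\mathcal{F}$ of Claim~\ref{claim-LSD-2blockedLSD}, then apply Lemma~\ref{lemma-richness-cert} together with Proposition~\ref{proposition-set-disjoint}.

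For the lift, let $C$ be the code realizing the $2$-Blocked-LSD certificates. I define the LSD code $C'$ whose table for each database $D\subseteq[NB]$ is the concatenation over $\pi\in\mathcal{F}$ of the sub-tables $C(\pi(D))$, yielding $s\cdot|\mathcal{F}|$ cells. To certify an LSD query $S$, Claim~\ref{claim-LSD-2blockedLSD} supplies a $\pi\in\mathcal{F}$ such that $\pi(S)$ is a valid $2$-Blocked-LSD query, and the $t$ cells of the $2$-Blocked-LSD certificate for $(\pi(S),\pi(D))$ within the $\pi$-th sub-table uniquely determine whether $\pi(S)\cap\pi(D)=\emptyset$, which is equivalent to $S\cap D=\emptyset$. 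Since $|\mathcal{F}|=e^{2N}\cdot 2N\log B$, this shows that LSD on $[NB]$ has $(s|\mathcal{F}|, w, t)$-certificates with $\log|\mathcal{F}|=O(N)$.

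Next, I apply Lemma~\ref{lemma-richness-cert} to the lifted LSD problem. Using Proposition~\ref{proposition-set-disjoint}'s richness at data-set size $n$ (a free parameter), Lemma~\ref{lemma-richness-cert} produces a monochromatic 1-rectangle whose sides involve $\binom{s|\mathcal{F}|}{t}$ and $2^{wt}$, and Proposition~\ref{proposition-set-disjoint}'s rectangle upper bound then reproduces the core inequality from the proof of Theorem~\ref{theorem-set-disjoint} with $s$ replaced by $s|\mathcal{F}|$. Bounding $\log\binom{s|\mathcal{F}|}{t}\le t\log(s|\mathcal{F}|/t)= t\log(s/t)+O(tN)$, the only overhead introduced by the lift is an additive $O(tN)$ in the exponent.

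Running the calculation of Theorem~\ref{theorem-set-disjoint} with $n$ chosen to maximize the bound yields the desired dichotomy: one branch gives $t=\Omega(n^{1-\delta}/w)$, which with $n$ taken near the universe size and exploiting that $2$-Blocked-LSD queries have size $2N$ (rather than $N$) sharpens to $t=\Omega(NB^{1-\delta}/w)$; the other branch gives $t\log(s/t)+tN=\Omega(N\log B)$, which collapses to $t=\Omega(N\log B/\log(s/t))$ once the $tN$ term is absorbed (possible whenever $\log(s/t)=\Omega(N)$, while the complementary regime is already dominated by the first branch). I expect the main obstacle to be the parameter bookkeeping in this last step: in particular, sharpening the first branch to $NB^{1-\delta}/w$ rather than the weaker $(NB)^{1-\delta}/w$ that a black-box application of Theorem~\ref{theorem-set-disjoint} would yield, and verifying that the $O(tN)$ overhead from the $e^{2N}$ factor in $|\mathcal{F}|$ is cleanly absorbed into the target bound.
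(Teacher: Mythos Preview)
Your reduction is valid---concatenating one sub-table per permutation $\pi\in\mathcal{F}$ does give $(s|\mathcal{F}|,w,t)$-certificates for general LSD---but the overhead it introduces is fatal. Because the blowup is in the \emph{space}, the richness lemma converts it into an additive $\log\binom{s|\mathcal{F}|}{t}\approx t\log(s/t)+O(tN)$ in the exponent, i.e.\ an extra $O(tN)$ rather than $O(N)$. The paper avoids this by a different move: it uses averaging over $\mathcal{F}$ to fix a \emph{single} permutation $\pi$ that works for a $2^{-O(N)}$ fraction of query sets, restricts LSD to that query sub-domain $\mathcal{X}$, and observes that this restricted problem inherits the original $(s,w,t)$-certificates unchanged. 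The $2^{O(N)}$ loss now sits in the \emph{richness} (the row count of the rich submatrix), so after Lemma~\ref{lemma-richness-cert} the exponent is only $O(N+t\log(s/t))$.

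That factor of $t$ is exactly what breaks your absorption argument. Your claim that ``the complementary regime $\log(s/t)=o(N)$ is dominated by the first branch'' is not correct: in that regime your first branch reads $t(N+\log(s/t)+w)=\Omega(NB^{1-\delta})$, which when $w=o(N)$ only yields $t=\Omega(B^{1-\delta})$, whereas the target demands $t=\Omega(NB^{1-\delta}/w)$---off by a factor that can be as large as $N$. Concretely, take $B=N$, $w=N$, $\log(s/t)=\sqrt{N}$: the theorem forces $t=\Omega(\sqrt{N}\log N)$, but your dichotomy only gives $t=\Omega(\log N)$ in its weaker branch. The same $tN$ term also blocks the ``sharpening'' from $(NB)^{1-\delta}$ to $NB^{1-\delta}$ that you flag as an obstacle; it is not a bookkeeping issue but a genuine loss. (A side note: $2$-Blocked-LSD queries have size $N$, not $2N$---the two blocking constraints are simultaneous conditions on the same $N$-element set.) The fix is precisely the paper's: average over $\mathcal{F}$ to pick one $\pi$, restrict the query domain, and keep the table untouched.
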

\begin{proof}
 By Claim \ref{claim-LSD-2blockedLSD}, we know there exists a small set $\mathcal{F}$ of permutations for the universe $[N\cdot B]$ such that $|\mathcal{F}|=2^{O(N)}$ and for any input $S$ of LSD, there exists $\pi\in \mathcal F$ for which $\pi(S)$ is an instance of $2$-Blocked-LSD. By averaging principle, there exists a $\pi\in\mathcal{F}$ such that for at least $|X|/2^{O(N)}$ many sets $S$, $\pi(S)$ is an instance of 2-Blocked-LSD. Denote the set of these $S$ as $\mathcal{X}$. Restrict LSD to the domain $\mathcal{X}\times Y$ and denote this subproblem as LSD$_\mathcal{X}$.
Obviously LSD$_\mathcal{X}$ can be solved by $2$-Blocked-LSD by transforming the input with permutation $\pi$, and hence LSD$_\mathcal{X}$ has $(s,w,t)$-certificates.
For any $S\in \mathcal{X}$, there are $2^{NB-N}$ choices of $T\in Y$ such that $S\cap T=\emptyset$, so the density of 1 in LSD$_\mathcal{X}$ is at least $\frac{1}{2^N}$, thus by a standard averaging argument LSD$_\mathcal{X}$ is $(\frac{1}{2^{O(N)}}|\mathcal X|,\frac{1}{2^{O(N)}}|Y|)$-rich. Now by the richness lemma, there exists a $|{X}|/{2^{O(N+t\log\frac{s}{t})}}\times|Y|/2^{O(N+t\log\frac{s}{t}+wt)}$ 1-rectangle of LSD$_\mathcal{X}$, which is certainly a 1-rectangle of LSD. Due to Proposition~\ref{proposition-LSD-rectangle}, for any $M\geq N$, LSD has no 1-rectangle of size greater than $\binom{M}{N}\times 2^{NB-M}$, which gives us either $N+t\log \frac{s}{t}=\Omega(N\log{B}-N\log\frac{M}{N})$ or $N+tw+t\log \frac{s}{t}=\Omega(M)$. By setting $M=NB^{1-\delta}$, we prove the theorem.
\end{proof}


\subsection{Reachability oracle}\label{section-reachability-oracle}

The problem of  reachability oracle is defined as follows:
a database stores a (sparse) directed graph $G$, and reachability queries (can $u$ be reached from $v$ in $G$?) are answered.
The problem is trivially solved, even in the sense of certificates, in quadratic space by storing answers for all pairs of vertices. Solving this problem using near-linear space appears to be very hard. This is proved in~\cite{patrascu11structures} for communication protocols as well as for cell-probing schemes. We show the method in~\cite{patrascu11structures} can imply the same lower bound for data structure certificates.

\begin{theorem}\label{corollary-reachability-oracle}
If reachability oracle of $n$-vertices graphs has $(s,w,t)$-certificates for $s=\Omega(n)$, then $t=\Omega\left({\log n}/{\log{\frac{sw}{n}}}\right)$.
\end{theorem}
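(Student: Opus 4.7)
The plan is to derive this bound by a model-independent butterfly reduction from 2-Blocked-LSD to reachability oracle, and then invoke Theorem~\ref{theorem-2Blocked-LSD}.

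First, I would recast P\v{a}tra\c{s}cu's butterfly-gadget reduction~\cite{patrascu11structures} as a purely combinatorial, model-independent transformation. Given any instance $(S,T)$ of 2-Blocked-LSD on universe $[N/B]\times[B]\times[B]$, I would associate a sparse directed graph $G(T)$ depending only on $T$ and a source-sink pair $(u(S),v(S))$ depending only on $S$, in such a way that $v(S)$ is reachable from $u(S)$ in $G(T)$ if and only if $S\cap T=\emptyset$. Concretely, $G(T)$ is a disjoint union, over blocks $x\in[N/B]$, of a $B\times B$ butterfly whose switch orientations are read off from the slice $T_x=T\cap(\{x\}\times[B]\times[B])$, while $(u(S),v(S))$ is the (input, output) vertex pair of the butterfly for the block queried by $S$. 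Because the database-side map $T\mapsto G(T)$ and the query-side map $S\mapsto(u(S),v(S))$ are independent of one another, every $(s,w,t)$-certificate for reachability oracle on the resulting $n$-vertex graph is, simultaneously, an $(s,w,t)$-certificate for 2-Blocked-LSD on the corresponding universe.

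Second, I would invoke Theorem~\ref{theorem-2Blocked-LSD} on these derived 2-Blocked-LSD certificates, obtaining, for every constant $\delta>0$, the dichotomy $t=\Omega(NB^{1-\delta}/w)$ or $t=\Omega(N\log B/\log(s/t))$.

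Third, I would tune $N$ and $B$ as functions of $n$, $s$, and $w$ so that the butterfly graph fits inside $n$ vertices and the binding branch simplifies to the target bound. Setting $B$ to be a small power of $sw/n$ (so that $\log B=\Theta(\log(sw/n))$ and the first branch is trivially dominated) and then solving for $N$ from the graph-size equation ought to collapse the second branch into $t=\Omega(\log n/\log(sw/n))$, using the hypothesis $s=\Omega(n)$ and the identity $\log(s/t)=\Theta(\log(sw/n))$ in the regime of interest.

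The main technical obstacle is the first step: P\v{a}tra\c{s}cu's original construction was stated in the language of asymmetric communication protocols, so the real work here is to verify that the same butterfly gadget, applied as a deterministic combinatorial map on databases and queries, already yields the set-disjointness equivalence without any model-dependent simulation layer. Once this is confirmed, the remaining parameter balancing in Steps 2--3 is routine algebra.
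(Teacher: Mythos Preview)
Your plan has a real gap in Step~1: P\v{a}tra\c{s}cu's butterfly reduction does \emph{not} map a 2-Blocked-LSD instance to a single reachability query. In 2-Blocked-LSD the query set $S$ has $|S|=N$ and touches every super-block $x\in[N/B]$, so there is no such thing as ``the block queried by $S$''; a single source--sink pair in a polynomial-size butterfly can test only $d=\Theta(\log N/\log B)$ edges (one per level), not all $N$ elements of $S$. The actual reduction maps $(S,T)$ to a subgraph $G(T)$ of one butterfly together with $k=N/d$ source--sink pairs, and $S\cap T=\emptyset$ iff \emph{all} $k$ pairs are reachable. This is why the paper passes through the problem $\bigotimes^k\RO_{N,B}$: if $\RO_{N,B}$ has $(s,w,t)$-certificates then $\bigotimes^k\RO_{N,B}$ has $(s,w,kt)$-certificates (by concatenating $k$ certificates, provided $kt\le s$), so 2-Blocked-LSD inherits $(s,w,kt)$-certificates, and Theorem~\ref{theorem-2Blocked-LSD} is applied with $kt$ in place of $t$.

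This factor of $k=N/d$ is not cosmetic---it is what brings the bound down from the absurd $t=\Omega(N\log B/\log(s/t))$ your Step~2 would literally yield to the correct $t=\Omega(d\log B/\log(sd/N))=\Omega(\log n/\log(sw/n))$. Once you insert the $\bigotimes^k$ step and track the extra branch $t=\Omega(ds/N)$ coming from the constraint $kt\le s$, your Steps~2--3 go through essentially as in the paper: choose $\log B=\Theta(\log(sw/n))$ so that the $\Omega(d\sqrt{B}/w)$ and $\Omega(ds/N)$ branches both dominate $\Omega(d)$, and the remaining branch gives the theorem.
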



The lower bound is proved for a special class of graphs, namely butterfly graphs. Besides implying the general reachability oracle lower bound, the special structure of butterfly graphs is very convenient for reductions to other problems.

A butterfly graph is defined by degree $b$ and depth $d$. The graph has $d+1$ layers, each having $b^d$ vertices. The vertices on level $0$ are source vertices with 0 in-degree and the the ones on level $d$ are sinks with 0 out-degree.
On each level, each vertex can be regarded as a vector in $[b]^d$.
For each non-sink vector (vertex) on level $i$, there is an edge connecting a vector (vertex) on the $(i+1)$-th level that may differ only on  the $i$-th coordinate.
Therefore each non-sink vertex has out-degree $b$.

The problem $\RO_{n,b}$ is the reachability oracle problem defined on subgraphs of the butterfly graph uniquely specified by degree $b$ and number of non-sink vertices $n$.
For a problem $f:X\times Y\to\{0,1\}$ we define $\bigotimes^k f:X^k\times Y\to \{0,1\}$ as that $\bigotimes^k f(\bar{x},y)=\prod_{i=1}^kf(x_i,y)$ for any $\bar{x}=(x_1,x_2,\ldots,x_k)\in X^k$ and any $y\in Y$.
We further specify that in reachability oracle problem, the answer is a bit indicating the reachability, thus $\bigotimes^k\RO_{n,b}$ is well-defined.



It is discovered in~\cite{patrascu11structures} a model-independent reduction from 2-Blocked-LSD on universe $[\frac{N}{B}]\times[B]\times[B]$ to $\bigotimes^k\RO_{N,B}$ for $k=\frac{N}{d}$, where $d=\Theta(\frac{\log N}{\log B})$ is the depth of the butterfly graph.
This can be used to prove the following certificate lower bound
\begin{lemma}\label{lemma-reachability-oracle-butterfly}
If $\RO_{N,B}$ has $(s,w,t)$-certificates, then either $t=\Omega\left(\frac{d\sqrt B}{w}\right)$, or $t=\Omega\left(\frac{d\log B}{\log \frac{sd}{N}}\right)$, or $t=\Omega\left(\frac{ds}{N}\right)$, where $d=\Theta\left(\frac{\log N}{\log B}\right)$ is the depth of the butterfly graph. 
\end{lemma}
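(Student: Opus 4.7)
The plan is to chain three ingredients: composition of certificates under $\bigotimes^{k}$, the model-independent reduction from $2$-Blocked-LSD on $[N/B]\times[B]\times[B]$ to $\bigotimes^{k}\RO_{N,B}$ with $k=N/d$ from~\cite{patrascu11structures}, and Theorem~\ref{theorem-2Blocked-LSD}. Suppose $\RO_{N,B}$ admits $(s,w,t)$-certificates. For $\bigotimes^{k}\RO_{N,B}$ I would reuse the same code; given queries $\bar{x}=(x_1,\ldots,x_k)$ and a shared database $y$, take the union of the $k$ individual $t$-cell certificates. The resulting set of at most $kt$ cells simultaneously pins down every factor $\RO_{N,B}(x_i,y)$ and hence their product, so $\bigotimes^{k}\RO_{N,B}$ has $(s,w,kt)$-certificates. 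Because the reduction from $2$-Blocked-LSD is model-independent (a fixed map on inputs), these certificates transfer to $(s,w,kt)$-certificates for $2$-Blocked-LSD on $[N/B]\times[B]\times[B]$.

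Next, I would apply Theorem~\ref{theorem-2Blocked-LSD} with certificate-size parameter $kt$ and constant $\delta=\tfrac12$, so that at least one of the two bounds $kt=\Omega(N\sqrt{B}/w)$ or $kt=\Omega(N\log B/\log(s/(kt)))$ must hold. Substituting $k=N/d$ turns these into $t=\Omega(d\sqrt{B}/w)$ (the first branch of the lemma) and $t=\Omega(d\log B/\log(sd/(Nt)))$ respectively.

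The only subtlety, and the step I expect to require the most care, is replacing $\log(sd/(Nt))$ by $\log(sd/N)$ as in the target statement. I would handle it by a case split on $t$ versus $sd/N$. If $t\ge sd/N$, the third branch $t=\Omega(ds/N)$ is immediate. Otherwise $Nt<sd$, and since $t\ge 1$ we have $Nt\ge N$, so $\log(sd/(Nt))\le\log(sd/N)$; this weakens the bound to $t=\Omega(d\log B/\log(sd/N))$, matching the second branch. No new combinatorial content is needed here: all the heavy lifting sits in Theorem~\ref{theorem-2Blocked-LSD} and the butterfly-graph reduction of~\cite{patrascu11structures}, and the present lemma is obtained by this straightforward composition followed by the elementary case split.
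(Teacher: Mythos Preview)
Your proposal is correct and follows essentially the same route as the paper: compose certificates to get $(s,w,kt)$-certificates for $\bigotimes^{k}\RO_{N,B}$ with $k=N/d$, pull them back to $2$-Blocked-LSD via the model-independent butterfly reduction, apply Theorem~\ref{theorem-2Blocked-LSD} with $\delta=\tfrac12$, and simplify. The one cosmetic difference is where the third branch $t=\Omega(ds/N)$ enters: the paper isolates the assumption $kt\le s$ up front (its violation yielding the third branch) before invoking Theorem~\ref{theorem-2Blocked-LSD}, whereas you invoke the theorem first and then case-split on $t$ versus $sd/N$ to both justify the $\log(sd/(Nt))\to\log(sd/N)$ simplification and dispose of the degenerate regime; since $kt\le s$ is exactly $t\le sd/N$, the two treatments coincide.
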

\begin{proof}
By the same way of straightforwardly combining certificates as in the proof of Lemma~\ref{lemma-direct-sum}, assuming that $\frac{N}{d}t\le s$, if $\RO_{N,B}$ has $(s,w,t)$-certificates then $\bigotimes^k\RO_{N,B}$ with $k=\frac{N}{d}$ has $(s,w,\frac{N}{d}t)$-certificates. Violating the assumption of $\frac{N}{d}t\le s$ gives us $t=\Omega\left(\frac{ds}{N}\right)$.
By the model-independent reduction in~\cite{patrascu11structures}, 2-Blocked-LSD on universe $[\frac{N}{B}]\times[B]\times[B]$ has $(s,w,\frac{N}{d}t)$-certificates.
Due to Theorem \ref{theorem-2Blocked-LSD}, for any constant $\delta>0$, either $\frac{N}{d}t=\Omega\left(\frac{NB^{1-\delta}}{w}\right)$ or $\frac{N}{d}t=\Omega\left(\frac{N\log B}{\log \frac{sd}{Nt}}\right)=\Omega\left(\frac{N\log B}{\log \frac{sd}{N}}\right)$. By setting $\delta=\frac{1}{2}$, we have either $t=\Omega(\frac{d\sqrt B}{w})$ or $t=\Omega\left(\frac{d\log B}{\log \frac{sd}{N}}\right)$. 
\end{proof}


Theorem~\ref{corollary-reachability-oracle} for general graphs is an easy consequence of this lemma.

\begin{proof}[Proof of Theorem~\ref{corollary-reachability-oracle}]
Suppose the input graphs are just those of $\RO_{N,B}$.
By Lemma~\ref{lemma-reachability-oracle-butterfly}, either $t=\Omega(\frac{d\log B}{\log \frac{sd}{N}})$, or $t=\Omega(\frac{d\sqrt B}{w})$, or $t=\Omega(\frac{ds}{N})$. Assuming $s=\Omega(n)$, the third branch becomes $t=\Omega(d)$.
Choose $B$ to satisfy $\log B=\max\{2\log w,\log \frac{sd}{N}\}=\Theta(\log\frac{sdw}{N})$.
Then we have $t=\Omega(d)$ for the first and second branches.
Since $d=\Theta({\log N/\log B})$, we have $t=\Omega({\log N}/{\log{\frac{sdw}{N}}})=\Omega({\log n}/{\log{\frac{sw}{n}}})$.
\end{proof}

Applying the model-independent reductions introduced in~\cite{patrascu11structures} from $\RO_{n,b}$ to 2D stabbing, 2D range counting, and 4D range reporting, we have the certificate lower bounds which match the highest known lower bounds for cell-probing schemes for these problems.

\begin{theorem}\label{theorem-2d-stabbing}
If 2D stabbing over $m$ rectangles has $(s,w,t)$-certificates, then $t=\Omega({\log m}/{\log{\frac{sw}{m}}})$.
\end{theorem}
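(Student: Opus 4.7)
The plan is to reduce the reachability oracle problem on butterfly graphs, $\RO_{n,b}$, to 2D stabbing using the model-independent reduction of P\v{a}tra\c{s}cu~\cite{patrascu11structures}, and then invoke Lemma~\ref{lemma-reachability-oracle-butterfly} (essentially the same calculation as in the proof of Theorem~\ref{corollary-reachability-oracle}) to obtain the lower bound.

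First, I would recall P\v{a}tra\c{s}cu's geometric encoding of a butterfly subgraph by axis-aligned rectangles: given an instance of $\RO_{N,B}$ on $n = O(N)$ non-sink vertices, one constructs $m = O(n)$ rectangles in the plane together with a mapping $(u,v) \mapsto p_{u,v}$ from source--sink pairs to query points, such that $v$ is reachable from $u$ in the butterfly subgraph iff $p_{u,v}$ is stabbed by at least one of the rectangles. The construction proceeds by exploiting that paths in the butterfly from level $0$ to any intermediate level (respectively from any intermediate level down to level $d$) correspond to one-dimensional intervals whose endpoints depend only on the source (respectively the sink); the presence or absence of edges in the stored subgraph is encoded by including or excluding rectangles of the product form. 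Crucially, the reduction is input-to-input and oblivious to the cell-probe model, so any $(s,w,t)$-certificates for 2D stabbing on $m$ rectangles immediately yield $(s,w,t)$-certificates for $\RO_{N,B}$ with $N = \Theta(m)$.

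Having established this, I would apply Lemma~\ref{lemma-reachability-oracle-butterfly}, which gives that for any degree $B$, $(s,w,t)$-certificates for $\RO_{N,B}$ force one of the branches $t = \Omega(d\sqrt{B}/w)$, $t = \Omega(d\log B/\log(sd/N))$, or $t = \Omega(ds/N)$, where $d = \Theta(\log N/\log B)$. I then choose $B$ just as in the proof of Theorem~\ref{corollary-reachability-oracle}: under the standard assumption $s = \Omega(m)$ the third branch becomes $t = \Omega(d)$, and setting $\log B = \Theta(\log(sw/m))$ balances the first two branches so that they also yield $t = \Omega(d)$. Since $d = \Theta(\log m / \log(sw/m))$, this proves the stated bound.

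The main obstacle is purely a bookkeeping one: checking that P\v{a}tra\c{s}cu's reduction is genuinely model-independent (i.e., it does not secretly use adaptivity or a communication protocol, and the output instance uses only $O(N)$ rectangles and polynomially bounded coordinates). Once this is verified by inspection of the construction, no new cell-probe or certificate reasoning is required---the certificate lower bound follows just by composing the reduction with Lemma~\ref{lemma-reachability-oracle-butterfly}, exactly as in the original deterministic lower-bound proof of~\cite{patrascu11structures}.
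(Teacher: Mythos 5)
Your proof takes essentially the same approach as the paper, which obtains the 2D stabbing bound by composing P\v{a}tra\c{s}cu's model-independent reduction from $\RO_{n,b}$ to 2D stabbing with the certificate lower bound for butterfly reachability (Lemma~\ref{lemma-reachability-oracle-butterfly} together with the parameter choice in the proof of Theorem~\ref{corollary-reachability-oracle}); the paper leaves the reduction details to the reference and your proposal correctly fills them in. One minor slip: in P\v{a}tra\c{s}cu's construction the rectangles encode \emph{deleted} edges, so reachability corresponds to the query point being \emph{not} stabbed rather than stabbed, but since certificates are invariant under complementing the Boolean output this does not affect the argument.
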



\begin{theorem}\label{theorem-2d-range-counting}
If 2D range counting has $(s,w,t)$-certificates, then $t=\Omega({\log n}/{\log{\frac{sw}{n}}})$.
\end{theorem}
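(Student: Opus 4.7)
The plan is to prove the $2$D range counting lower bound by reducing from reachability oracle on butterfly graphs, following the same template used for Theorem~\ref{corollary-reachability-oracle} and Theorem~\ref{theorem-2d-stabbing}. The only new ingredient needed is to observe that the reduction from $\RO_{n,b}$ to 2D range counting given in~\cite{patrascu11structures} is (or can be phrased as) a model-independent reduction, consisting of two explicit maps $\phi_X$ and $\phi_Y$ that transform a reachability query and a butterfly subgraph into a 2D rectangular range-counting query and a point set, respectively, so that the reachability bit is recovered from the range count. Once the reduction is phrased this way, certificates compose trivially: an $(s,w,t)$-certificate for 2D range counting yields an $(s,w,t)$-certificate for $\RO_{N,B}$ by preprocessing $\phi_Y(y)$ and probing exactly the $t$ cells specified by the certificate for the query $\phi_X(x)$.

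First I would recall the geometric encoding of butterfly reachability used in~\cite{patrascu11structures}: each edge of the chosen butterfly subgraph becomes a planar point whose coordinates are derived from the two endpoints' coordinate vectors in $[b]^d$, and a reachability query between a source and a sink is expressed by checking whether a canonical axis-aligned rectangle contains exactly $d$ of these points (one for each layer). Since this is purely an input-side transformation, it immediately gives a model-independent reduction in the sense of Claim~\ref{claim-selfreduction-ANN}, so $(s,w,t)$-certificates transfer from 2D range counting on $O(N)$ points to $\RO_{N,B}$.

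Next I would invoke Lemma~\ref{lemma-reachability-oracle-butterfly}: the existence of $(s,w,t)$-certificates for $\RO_{N,B}$ forces at least one of $t = \Omega(d\sqrt B / w)$, $t = \Omega(d\log B / \log(sd/N))$, or $t = \Omega(ds/N)$, where $d = \Theta(\log N/\log B)$. Setting $N = \Theta(n)$ and, exactly as in the proof of Theorem~\ref{corollary-reachability-oracle}, choosing $B$ so that $\log B = \max\{2\log w,\log(sd/N)\} = \Theta(\log(sw/n))$, all three branches collapse to $t = \Omega(d) = \Omega(\log n / \log(sw/n))$ (under the standing assumption $s = \Omega(n)$; otherwise the claimed bound is vacuous or subsumed by $t \geq 1$).

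The main obstacle is the first step: verifying that the $\RO_{n,b}$-to-2D-range-counting reduction in~\cite{patrascu11structures}, which was originally stated within the cell-probe framework, really is an input-only transformation and does not implicitly exploit the query algorithm's ability to adaptively choose cells based on communication. I expect this to go through essentially verbatim, because the reduction is geometric and maps inputs to inputs, but one needs to inspect the rectangle construction to confirm that \emph{exact} range counts on $\phi_Y(y)$ suffice (rather than counts conditioned on some protocol transcript). After that, the parameter tuning is identical to the reachability oracle proof and introduces no new difficulty.
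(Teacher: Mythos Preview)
Your proposal is correct and follows essentially the same approach as the paper: the paper's entire proof of Theorems~\ref{theorem-2d-stabbing}--\ref{theorem-4d-range-reporting} is the single sentence preceding them, which invokes the model-independent reductions from $\RO_{n,b}$ in~\cite{patrascu11structures} and lets the bound follow from Lemma~\ref{lemma-reachability-oracle-butterfly} with the parameter choice of Theorem~\ref{corollary-reachability-oracle}. One small caution: the reduction in~\cite{patrascu11structures} actually encodes butterfly edges as \emph{rectangles} and the source--sink pair as a \emph{point} (giving 2D stabbing), and then passes to 2D range counting via the standard rectangle-to-weighted-corners transformation, rather than the edges-to-points encoding you sketch; this does not affect your argument, since both steps are purely input-side maps and hence transfer certificates exactly as you describe.
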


\begin{theorem}\label{theorem-4d-range-reporting}
If 4D range reporting has $(s,w,t)$-certificates, then $t=\Omega({\log n}/{\log{\frac{sw}{n}}})$.
\end{theorem}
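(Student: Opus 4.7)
The plan is to imitate the template used for Theorems \ref{theorem-2d-stabbing} and \ref{theorem-2d-range-counting}: exhibit a model-independent reduction from $\RO_{N,B}$ on butterfly graphs to 4D orthogonal range reporting on $n = \Theta(N)$ points, and then tune the degree $B$ in Lemma~\ref{lemma-reachability-oracle-butterfly} so that all three branches collapse to the same bound in $n$, $s$, $w$.

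First I would import Pătraşcu's reduction~\cite{patrascu11structures} from reachability in the butterfly graph to 4D range reporting. The reduction encodes each of the $N$ non-sink vertices (or edges) of the input subgraph $G$ as one point in $\mathbb{R}^4$ whose coordinates record the layer together with the coordinates of its label vector in $[B]^d$; each reachability query $(u,v)$ is translated into an axis-aligned box whose non-emptiness in the point set is equivalent to $u\leadsto v$ in $G$. Because the point set depends only on $G$ and the box depends only on the query, this is a bona fide input-to-input many-one reduction and therefore lifts to certificates: any $(s,w,t)$-certificate scheme for 4D range reporting on $n = \Theta(N)$ points, when composed with the database-side transformation and queried via the query-side transformation, yields an $(s,w,t)$-certificate scheme for $\RO_{N,B}$ (the table is the same, and the same $t$ cells certify the reachability answer).

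Next I apply Lemma~\ref{lemma-reachability-oracle-butterfly}, which gives that either $t = \Omega(d\sqrt{B}/w)$, or $t = \Omega(d\log B/\log(sd/N))$, or $t = \Omega(ds/N)$, with $d = \Theta(\log N/\log B)$. Following the proof of Theorem~\ref{corollary-reachability-oracle}, I choose $\log B = \max\{2\log w,\ \log(sd/N)\} = \Theta(\log(sdw/N))$. Under the standing (and trivially necessary) assumption $s = \Omega(n)$, the third branch gives $t = \Omega(d)$; the first branch gives $t = \Omega(d)$ because $\sqrt{B}/w \geq 1$; and the second branch gives $t = \Omega(d)$ because $\log B = \Omega(\log(sd/N))$. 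Substituting $d = \Theta(\log N/\log B) = \Theta(\log n / \log(sw/n))$ then yields $t = \Omega(\log n / \log(sw/n))$.

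The only genuine verification step is confirming that the butterfly-to-4D-reporting reduction of~\cite{patrascu11structures} — potentially composed with an intermediate reduction through 2D stabbing as in the chain that underlies Theorem~\ref{theorem-2d-stabbing} — is phrased at the level of instances rather than communication protocols. Since~\cite{patrascu11structures} already presents this reduction as an explicit transformation of inputs (the same one it uses to derive the cell-probe lower bound), its model-independence is immediate, so this is the main but mild obstacle and no new combinatorial argument is required beyond what Lemma~\ref{lemma-reachability-oracle-butterfly} already provides.
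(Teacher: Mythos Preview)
Your approach is the same as the paper's: invoke the model-independent reduction from $\RO_{N,B}$ to 4D range reporting given in~\cite{patrascu11structures}, then apply Lemma~\ref{lemma-reachability-oracle-butterfly} and balance $B$ exactly as in the proof of Theorem~\ref{corollary-reachability-oracle}. The paper in fact states nothing more than the sentence preceding Theorems~\ref{theorem-2d-stabbing}--\ref{theorem-4d-range-reporting}, so your expanded version is a faithful unpacking.

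One small correction worth making: your sketched ``direct'' reduction (one 4D point per butterfly vertex, with a reachability query becoming a single box whose non-emptiness decides $u\leadsto v$) is not how the reduction in~\cite{patrascu11structures} works, and indeed no such one-box reduction can exist, since reachability along the unique butterfly path is a conjunction over $d$ edges rather than a single containment test. The actual chain is the one you mention parenthetically: $\RO_{N,B}\to$ 2D stabbing (each butterfly edge becomes a rectangle, and a source--sink query becomes a stabbing point) followed by the standard 2D stabbing $\to$ 4D range reporting map $[a,b]\times[c,d]\mapsto(a,b,c,d)$. Both steps are input-to-input transformations, so model-independence holds; you should simply drop the speculative direct description and commit to the two-step chain, after which the rest of your argument goes through verbatim.
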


\subsection{Approximate distance oracle}\label{section-distance-oracle}
For the distance oracle problem, distance queries $d_G(u,v)$ are answered for a database graph $G$.
For this fundamental problem, approximation is very important because exact solution appears to be very difficult for nontrivial settings.
Given a stretch factor $\alpha>1$, the $\alpha$-approximate distance oracle problem can be defined as:
for each queried vertex pair $(u,v)$ and a distance threshold $\tilde{d}$, the problem is required to distinguish between the two cases $d_G(u,v)\le \tilde d$ and $d_G(u,v)\ge \alpha \tilde{d}$.

We prove the following certificate lower bound for approximate distance oracle which matches the lower bound proved in~\cite{sommer2009distance} for cell-probing schemes.
\begin{theorem}\label{theorem-distance-oracle}
If $\alpha$-approximate distance oracle has $(s,w,t)$-certificates, then $t=\Omega\left(\frac{\log n}{\alpha\log(s\log n/n)}\right)$.
This holds even when the problem is restricted to sparse graphs with max degree 
$\mathrm{poly}(tw\alpha/\log n)$ for an $\alpha=o\left(\frac{\log n}{\log(w\log n)}\right)$.
\end{theorem}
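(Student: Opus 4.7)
The plan is to follow the template of Lemma~\ref{lemma-reachability-oracle-butterfly} and Theorem~\ref{corollary-reachability-oracle}, substituting the butterfly reachability gadget with the graph construction of Sommer, Verbin, and Yao~\cite{sommer2009distance}. The three ingredients are: (i) a model-independent reduction from 2-Blocked-LSD on universe $[N/B]\times[B]\times[B]$ to $\alpha$-approximate distance oracle on an $n$-vertex sparse graph with $n=\Theta(N)$, in which $k=N/d$ parallel LSD queries correspond to $k$ distance queries on the same graph; (ii) the one-line certificate combining of Lemma~\ref{lemma-direct-sum} to bundle those $k$ queries into $(s,w,kt)$-certificates for the product problem; (iii) Theorem~\ref{theorem-2Blocked-LSD} applied to the resulting certificates for 2-Blocked-LSD.

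The gadget of~\cite{sommer2009distance} is essentially a weighted layered graph of depth $d=\Theta(\log n/\alpha)$: the depth is forced to scale as $1/\alpha$ because the distances for yes- and no-instances differ by only a factor of $\alpha$, which bounds how many levels of information can be distinguished. The graph has $\Theta(N)$ vertices and max degree $O(B\cdot\mathrm{poly}(\alpha))$. Its construction is deterministic graph surgery on the input LSD instance, so it naturally lifts to a pair of input-transformation functions $\phi_X,\phi_Y$ of the kind used elsewhere in this paper, making the reduction model-independent in the certificate sense.

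With the reduction in hand, the parameter chase proceeds exactly as in the reachability proof. Assuming $kt\le s$ (otherwise $t=\Omega(ds/N)=\Omega(d)$ already implies the claim), Theorem~\ref{theorem-2Blocked-LSD} gives either $kt=\Omega(NB^{1/2}/w)$ or $kt=\Omega(N\log B/\log(s/(kt)))=\Omega(N\log B/\log(sd/N))$; using $k=N/d$ these simplify to $t=\Omega(d\sqrt{B}/w)$ or $t=\Omega(d\log B/\log(sd/N))$. Setting $\log B=\Theta(\log(sdw/N))$ balances the two branches, and substituting $d=\Theta(\log n/\alpha)$ yields the claimed $t=\Omega(\log n/(\alpha\log(s\log n/n)))$. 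The max-degree bound $\mathrm{poly}(tw\alpha/\log n)$ then follows by plugging the resulting $B=\mathrm{poly}(sw\log n/n)$ back into $B\cdot\mathrm{poly}(\alpha)$ and expressing the result in terms of $t$.

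The main obstacle is less the parameter arithmetic and more the careful translation of the~\cite{sommer2009distance} reduction: it was originally phrased as an asymmetric communication argument, and one must verify that its underlying graph construction is a deterministic function of the LSD input (with no shared randomness or adaptive protocol) so that it composes with certificates rather than only with cell-probing schemes. The side condition $\alpha=o(\log n/\log(w\log n))$ enters at this stage to keep $B\ge 2$ after the $1/\alpha$ loss in $d$, so that the blocking structure is nontrivial and the lower bound in Theorem~\ref{theorem-2Blocked-LSD} remains meaningful.
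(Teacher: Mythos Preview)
Your proposal rests on a mischaracterization of the Sommer--Verbin--Yu construction. The gadget in~\cite{sommer2009distance} is \emph{not} a weighted layered graph analogous to the butterfly; it is a high-girth $r$-regular graph $G$ (Theorem~\ref{theorem-final-proof} in the paper), and the reduction does not go through 2-Blocked-LSD at all. The mechanism is girth-based: the database is a spanning subgraph $G'=(V,E\setminus f_i(T))$ of the fixed base graph $G$, the query set $S$ is mapped by a bijection $f_i$ to $k$ vertex-disjoint paths of length $\ell<g(G)/(\alpha+1)$, and the $\alpha$-gap in distance comes from the fact that deleting any edge of such a short path forces a detour of length at least $g(G)-\ell>\alpha\ell$. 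A layered graph of butterfly type does not give this gap, and your claimed ``deterministic graph surgery'' producing a depth-$\Theta(\log n/\alpha)$ layered instance is not what~\cite{sommer2009distance} does.

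Accordingly, the paper's proof does not invoke Theorem~\ref{theorem-2Blocked-LSD}. Instead it restricts LSD to the query set $\mathcal{X}$ of those $S$ for which $f_i(S)\in\mathcal{P}(G,\ell,k)$ (Claim~\ref{claim-LSD-distance-bijection}, the analogue of Claim~\ref{claim-LSD-2blockedLSD} but for disjoint paths rather than blocked structure), reduces LSD$_\mathcal{X}$ model-independently to $\bigotimes^k\ADO_G$, combines certificates to get $(s,w,kt)$-certificates for LSD$_\mathcal{X}$, and then applies the richness lemma (Lemma~\ref{lemma-richness-cert}) together with Proposition~\ref{proposition-LSD-rectangle} directly. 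This yields Lemma~\ref{lemma-distance-oracle-general}, a bound on $s$ in terms of $|\mathcal{P}(G,\ell,k)|$, $k$, $\ell$, $|E|$; the final parameter choice comes from the graph-theoretic properties in Theorem~\ref{theorem-final-proof} (in particular $\ell=\lfloor g(G)/2\alpha\rfloor$, $k=n/20\ell$, and $r^{g(G)}=n^{\Omega(1)}$), not from balancing two branches of a 2-Blocked-LSD bound. Your parameter chase, while internally consistent with the reachability template, is attached to a reduction that does not exist in the form you describe.
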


We use the following notations introduced in~\cite{sommer2009distance}.
For graph $G=(V,E)$ and any two positive integers $k,\ell$, let $\mathcal{P}(G,\ell,k)$ be the set whose elements are all possible sets $P\subseteq E$ where $P$ can be written as a union of $k$ vertex-disjoint paths in $G$, each of length exactly $\ell$.
Let $g(G)$ denote the girth of graph $G$. The following claim, which is quite similar to Claim~\ref{claim-LSD-2blockedLSD}, is proved in~\cite{sommer2009distance} by the same probabilistic argument.


\begin{Claim} [Claim 13 in \cite{sommer2009distance}]\label{claim-LSD-distance-bijection}
Let $k,\ell>0$ be two integers and $N=k\ell$. Let $G=(V,E)$ be a graph with $|E|=B\cdot N$ for a positive integer $B$, and $\mathcal{P}=\mathcal{P}(G,\ell,k)$.
There exist $m$ bijections $f_1,\dots,f_m:[NB]\rightarrow E$, where $m=\ln((\mathrm{e}B)^N)\cdot \frac{(\mathrm{e}B)^N}{|\mathcal{P}|}$, such that for any $S\subseteq [NB]$ with $|S|=N$, there is a bijection $f_i$ such that $f_i(S)\in\mathcal{P}(G,\ell,k)$.
\end{Claim}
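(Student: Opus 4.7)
The plan is to establish the existence of the family $f_1, \ldots, f_m$ by a direct application of the probabilistic method, following essentially the same template used in the proof of Claim~\ref{claim-LSD-2blockedLSD}. First I would sample $f_1, \ldots, f_m$ independently, each uniformly at random from the set of all bijections $[NB] \to E$. For any fixed $N$-subset $S \subseteq [NB]$, the image $f_i(S)$ is a uniformly random $N$-subset of $E$, because the action of the symmetric group on $[NB]$ is transitive on $N$-subsets. Consequently
$$\Pr[f_i(S) \in \mathcal{P}] \;=\; \frac{|\mathcal{P}|}{\binom{NB}{N}}.$$

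Next, by independence across the $m$ samples and using $(1-x)^m \le e^{-mx}$,
$$\Pr[\forall i,\; f_i(S) \notin \mathcal{P}] \;\le\; \left(1 - \frac{|\mathcal{P}|}{\binom{NB}{N}}\right)^m \;\le\; \exp\!\left(-\frac{m\,|\mathcal{P}|}{\binom{NB}{N}}\right).$$
I would then union-bound over all $\binom{NB}{N}$ choices of $S$, and use the standard estimate $\binom{NB}{N} \le (\mathrm{e}B)^N$ (which follows from $N! \ge (N/\mathrm{e})^N$) to obtain
$$\Pr\bigl[\exists S,\; \forall i,\; f_i(S) \notin \mathcal{P}\bigr] \;\le\; (\mathrm{e}B)^N \cdot \exp\!\left(-\frac{m\,|\mathcal{P}|}{(\mathrm{e}B)^N}\right).$$
Plugging in $m = \ln((\mathrm{e}B)^N) \cdot (\mathrm{e}B)^N/|\mathcal{P}|$ makes this bound at most $1$, and the strict inequality $\binom{NB}{N} < (\mathrm{e}B)^N$ shows it is in fact less than $1$. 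Hence some deterministic choice of $f_1, \ldots, f_m$ satisfies the covering property, completing the proof.

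The argument is essentially routine, so I do not anticipate a genuine obstacle. The only point that merits a sentence of justification is the first step — that a uniformly random bijection $[NB] \to E$ sends a fixed $N$-subset to a uniformly random $N$-subset of $E$ — and this is immediate from transitivity of the symmetric-group action. Everything else is a union bound plus the Stirling-type estimate on $\binom{NB}{N}$, mirroring the proof of Claim~\ref{claim-LSD-2blockedLSD} with $\mathcal{P}(G,\ell,k)$ in place of the collection of 2-Blocked-LSD instances.
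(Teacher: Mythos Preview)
Your proposal is correct and is precisely the standard probabilistic argument the paper alludes to: the paper does not spell out a proof of this claim but simply cites \cite{sommer2009distance} and notes that it is proved ``by the same probabilistic argument'' as Claim~\ref{claim-LSD-2blockedLSD}, which is exactly the random-bijection-plus-union-bound template you carry out. Your handling of the strict inequality $\binom{NB}{N} < (\mathrm{e}B)^N$ to get the failure probability strictly below $1$ is the right way to close the argument.
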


Consider the problem of $\alpha$-approximate distance oracle for base-graph $G$, in which
the $\alpha$-approximate distance queries are answered only for spanning subgraphs of $G$. The following lemma is the certificate version of a key theorem in~\cite{sommer2009distance}.



 \begin{lemma}\label{lemma-distance-oracle-general}
There exists a universal constant $C$ such that the following holds.
Let $G=(V,E)$ be a graph, such that $\alpha$-approximate distance oracle for the base-graph $G$ has $(s,w,t)$-certificates.
Let $k,\ell$ be two positive integers, such that $\ell<\frac{g(G)}{\alpha+1}$. Assume $|E|\geq k\ell(2tw/\ell)^{1/C}$. Then
 \begin{equation*}
s\geq \frac{k}{e}\left(\frac{|\mathcal{P}(G,\ell,k)|^{1/{k\ell}}}{e(|E|/k\ell)^{1-C}}\right)^{\frac{\ell}{t}}\left(e|E|\right)^{-\frac{1}{tk}}
\end{equation*}
 \end{lemma}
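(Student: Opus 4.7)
My plan mirrors the arguments above for $\RO_{N,B}$ and 2-Blocked-LSD: I reduce a structured variant of LSD to the $\alpha$-approximate distance oracle on $G$, apply the certificate richness lemma, and then invoke the LSD rectangle bound. Concretely, let $\mathrm{LSD}_G$ be the problem whose queries are sets $S\in\mathcal{P}(G,\ell,k)$, whose databases are arbitrary $T\subseteq E$, and whose answer is $1$ iff $S\cap T=\emptyset$. Each database $T$ is interpreted as the spanning subgraph $H=(V,E\setminus T)$ of $G$, and a query $S$ consisting of $k$ vertex-disjoint paths of length $\ell$ is answered by issuing $k$ $\alpha$-approximate distance queries to the oracle on $H$, one per path, each with threshold $\tilde d=\ell$.

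Correctness rests on the girth assumption. If $S\cap T=\emptyset$ then the paths of $S$ survive inside $H$ and every queried distance is at most $\ell$. Otherwise some edge $e\in S$ is missing from $H$; writing $P_i\subseteq S$ for the length-$\ell$ path containing $e$, with endpoints $u_i,v_i$, any $u_i$-$v_i$ path $Q$ in $H$ together with $P_i$ carries a cycle of $G$ through $e$ of length at least $g(G)$, hence $|Q|\geq g(G)-\ell>\alpha\ell$ since $\ell<g(G)/(\alpha+1)$. The $\alpha$-approximate oracle therefore separates the two cases, and concatenating the $k$ per-query certificates as in the proof of Lemma~\ref{lemma-direct-sum} shows that $\mathrm{LSD}_G$ has $(s,w,kt)$-certificates.

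A straightforward averaging over the $|\mathcal{P}(G,\ell,k)|\cdot 2^{|E|-N}$ total $1$-entries, where $N=k\ell$, makes $\mathrm{LSD}_G$ be $(|\mathcal{P}(G,\ell,k)|/2^{N+1},\,2^{|E|-1})$-rich. Lemma~\ref{lemma-richness-cert} applied to the $(s,w,kt)$-certificates then produces a monochromatic $1$-rectangle of size at least $\tfrac{|\mathcal{P}(G,\ell,k)|/2^{N+1}}{\binom{s}{kt}}\times\tfrac{2^{|E|-1}}{\binom{s}{kt}\cdot 2^{wkt}}$. Any such rectangle is also a $1$-rectangle of the generic LSD problem on universe $[|E|]$, so Proposition~\ref{proposition-LSD-rectangle} supplies an $M\geq N$ with $\tfrac{|\mathcal{P}(G,\ell,k)|/2^{N+1}}{\binom{s}{kt}}\leq\binom{M}{N}$ and $M\leq wkt+\log\binom{s}{kt}+1$.

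The remaining step is to eliminate $M$ between these two inequalities, bound $\binom{M}{N}\leq(eM/N)^N$ and $\binom{s}{kt}\leq(es/(kt))^{kt}$, take a $(kt)$-th root, and use $N/(kt)=\ell/t$ to arrive at the stated form. The assumption $|E|\geq k\ell(2tw/\ell)^{1/C}$ is exactly what lets the $wkt$ contribution to $M$ be absorbed into a power of $B=|E|/(k\ell)$, producing the factor $(|E|/k\ell)^{1-C}$; the residual $(e|E|)^{-1/(tk)}$ drops out of the $2^{N+1}$ slack and the binomial estimates. The main obstacle is precisely this book-keeping: balancing two rectangle constraints from Proposition~\ref{proposition-LSD-rectangle} against a rectangle dimension produced by the certificate richness lemma. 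This closely parallels the cell-probe argument of~\cite{sommer2009distance}, except that $\binom{s}{kt}$ now plays the role of $2^{kt\log s}$, giving a bound valid for nondeterministic cell probes.
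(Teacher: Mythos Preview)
Your approach is correct and follows the paper's overall structure: reduce a structured LSD to $k$ parallel $\alpha$-approximate distance queries via the girth argument, combine certificates to get $(s,w,kt)$-certificates, apply Lemma~\ref{lemma-richness-cert}, and confront the resulting rectangle with Proposition~\ref{proposition-LSD-rectangle}. The one genuine difference is that you work \emph{directly} with the query space $\mathcal{P}(G,\ell,k)\subseteq\binom{E}{N}$, whereas the paper first invokes Claim~\ref{claim-LSD-distance-bijection} to find a single bijection $f_i:[NB]\to E$ under which a $1/m$ fraction $\mathcal X$ of all size-$N$ subsets of $[NB]$ land in $\mathcal{P}(G,\ell,k)$, and then analyzes $\mathrm{LSD}_{\mathcal X}$. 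Since $|\mathcal X|\ge \binom{NB}{N}\cdot|\mathcal P|/(\ln((eB)^N)(eB)^N)$ and $\binom{NB}{N}\le(eB)^N$, the two row counts agree up to a $2^{O(N)}$ factor that is already present in the density slack, so your shortcut is legitimate and slightly cleaner; the bijection lemma is a vestige of the communication-model argument in~\cite{sommer2009distance} and is not needed once one reasons purely about certificates.

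One small slip: the ``straightforward averaging'' from density $2^{-N}$ yields $(\,|\mathcal P|/2^{N+1},\,2^{|E|}/2^{N+1}\,)$-richness, not $(\,|\mathcal P|/2^{N+1},\,2^{|E|-1}\,)$-richness (you cannot get half of all columns, only a $2^{-N-1}$ fraction). This only adds $N$ to your upper bound on $M$, i.e.\ replaces $M\le wkt+\log\binom{s}{kt}+1$ by $M\le wkt+\log\binom{s}{kt}+N+1$, which after dividing by $N$ contributes a harmless additive constant and leaves the stated inequality for $s$ unchanged.
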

 \begin{proof}
Suppose $N=k\ell$ and $B=|E|/N$.
Consider the LSD problem LSD$:X\times Y\to\{0,1\}$ defined on universe $[N\cdot B]$ such that each query set $S\subset[N\cdot B]$ is of size $|S|=N$ and each dataset $T\subseteq [N\cdot B]$ is of arbitrary size.
By Claim~\ref{claim-LSD-distance-bijection}, there exists $m$ bijections, $f_1,\dots,f_m:[NB]\rightarrow E$, where $m=\ln((eB)^N)\cdot \frac{(eB)^N}{|\mathcal{P}|}$ where $\mathcal{P}=\mathcal{P}(G,\ell,k)$, such that for any $S\subseteq [NB]$ with $|S|=N$, there exists a bijection $f_i$ such that $f_i(S)\in\mathcal{P}(G,\ell,k)$. By averaging principle, there exists an $f_i$ such that for at least $|X|/m$ many sets $S$, it holds that $f_i(S)\in\mathcal{P}(G,l,k)$. Denote the set of such $S$ as $\mathcal{X}$.
Restrict LSD to the domain $\mathcal{X}\times Y$ and denote this subproblem as LSD$_\mathcal{X}$.
Next we prove LSD$_\mathcal{X}$ can be solved by a composition of $\alpha$-approximate distance oracles.

Let $f_i$ be the bijection such that $f_i(S)\in\mathcal{P}(G,\ell,k)$ for all  $S\in \mathcal{X}$.
For any $S\in \mathcal{X}, T\subseteq[N\cdot B]$, an instance for approximate distance oracle for the base graph $G=(V,E)$ is constructed  as follows. The database graph for distance oracle  is the spanning subgraph $G'=(V,E')$ where $E'=E\setminus f_i(T)$.
Due to the property of bijection $f_i$, it holds that $P=f_i(S)$ contains $k$ vertex-disjoint paths $p_1,p_2,\ldots, p_k$, each of length $\ell$. Let $(u_1,v_1),\dots,(u_k,v_k)$ denote the pairs of end-vertices of these paths.
Since $f_i$ is a bijection, the disjointness of $S$ and $T$ translates to the disjointness of $f_i(S)$ and $f_i(T)$, i.e.~all these $k$ vertex-disjoint paths are intact by removing edges in $f_i(T)$ from the graph $G$.


\newcommand{\ADO}{\alpha\mbox{-}\mathrm{Dist}}

Consider the $\alpha$-approximate distance oracle problem $\ADO_G$ for the base-graph $G$. 
We then observe that LSD$_\mathcal{X}$ can be solved by the problem $\bigotimes^k \ADO_G$ of answering $k$ parallel approximate distance queries, where $\bigotimes^k f$ of a problem $f$ is as defined in last section.
Consider the $k$ vertex pairs $(u_i,v_i), i=1,2,\ldots, k$ connected by vertex-disjoint paths $p_i$ constructed above. We have $d_{G}(u_i,v_i)=\ell$ for every $1\le i\le k$.
For $\ADO_G$, if all edges in $p_i$ are in $E'$, then $d_{G'}(u_i,v_i)\leq \ell$, so $\ADO_G((u_i,v_i, \ell), G')$ will return ``yes'', and if there is an edge in $p_i$ is not in $E'$, since graph $G$ has girth $g(G)>(\alpha+1)\ell$,
we must have $d_{G'}(u_i,v_i)\ge g(G)-\ell>\alpha \ell$, so $\ADO_G((u_i,v_i, \ell), G')$ will return ``no''.
By above discussion, if $\ADO_G((u_i,v_i, \ell),G')$ returns ``yes'' for all $k$ queries then it must hold $S\cap T=\emptyset$, and if $\ADO_G((u_i,v_i, \ell),G')$ returns ``no'' for  some $i$, then $S\cap T\ne \emptyset$, i.e.~we have a model-independent reduction from LSD$_\mathcal{X}$ to $\bigotimes^k \ADO_G$.

If the $\alpha$-approximate distance oracle problem $\ADO_G$ has $(s,w,t)$-certificates, then by directly combining $k$ certificates for $k$ parallel queries, the problem $\bigotimes^k \ADO_G$ has $(s,w,kt)$-certificates, and hence LSD$_\mathcal{X}$ has $(s,w,kt)$-certificates.
For every $S\in\mathcal{X}$, there are $2^{NB-N}$ many $T$ disjoint with $S$, so the density of LSD$_\mathcal{X}$ is at least $2^{-N}$. By a standard averaging argument, this means LSD$_\mathcal{X}$ is $(\frac{1}{2^{N+1}}|\mathcal{X}|,\frac{1}{2^{N+1}}|Y|)$-rich.
By Lemma~\ref{lemma-richness-cert}, there exist universal constants $C_1,C_2>0$ such that LSD$_\mathcal{X}$ has monochromatic 1-rectangle of size $|\mathcal{X}|/2^{O(N+kt\log \frac{s}{kt})}\times|Y|/2^{O(N+kt\log \frac{s}{kt}+ktw)}$, which is also 1-rectangle of LSD.
Note that $|\mathcal{X}|\geq|X|/m=|X|/\ln((eB)^N)\cdot \frac{(eB)^N}{|\mathcal{P}|}$, so the rectangle is of size at least
\begin{equation*}
 |X|/2^{O(N\log(eB)+\log(eBN)-\log(|\mathcal{P}|)+kt\log \frac{s}{kt})}\times|Y|/2^{O(N+kt\log \frac{s}{kt}+ktw)},
\end{equation*}
where the big-O notations hide only universal constants. And for LSD, $|X|={NB\choose N}$ and $|Y|=2^{NB}$.
Due to Proposition~\ref{proposition-LSD-rectangle}, for any $M\geq N$, LSD has no 1-rectangle of size greater than $\binom{M}{N}\times 2^{NB-M}$.
By a calculation, there exist a universal constant $C>0$ such that by considering an $M=\Theta(NB^C)$, we have either $tk\log(s/k)+N\log(eB)+\log(eBN)-\log(|\mathcal{P}|)\geq CN\log B$ or $ktw\geq NB^{C}$.
Since the lemma assumes $|E|\ge k\ell(2tw/\ell)^{1/C}$, we have  $B\geq (2tw/\ell)^{1/C}$, thus $NB^C\geq k\ell\cdot2tw/\ell=2ktw$. The second branch can never be satisfied. And by a calculation, the first branch gives us the bound of the lemma.
 \end{proof}


The following graph-theoretical theorem is proved in~\cite{sommer2009distance}.
\begin{theorem}[combining Lemma 14, Theorem 9, 17, and 18 of~\cite{sommer2009distance}]\label{theorem-final-proof}
Let $n$ be sufficiently large.
For any constant $C>0$, any $t=t(n)$ and any $\alpha=\alpha(n), w=w(n)$ satisfying $w=n^{o(1)}$ and $\alpha=o\left(\frac{\log n}{\log(w\log n)}\right)$. There exist $r=r(n)$ and $r$-regular graph $G=G_n$ of $n$ vertices, such that
\begin{itemize}
\item $r\geq (4tw\alpha/g(G))^{1/C}$;
\item $2\alpha\leq g(G)\leq\log n$;
\item $|\mathcal{P}(G,\ell,k)|^{1/k\ell}=\Omega(r)$ for $\ell=\left\lfloor g(G)/2\alpha \right\rfloor$ and $k=n/20\ell$;
\item $r^{g(G)}=n^{\Omega(1)}$.
\end{itemize}
\end{theorem}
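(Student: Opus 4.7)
The plan is to establish the theorem by combining a standard construction of high-girth regular graphs with a greedy counting argument for collections of vertex-disjoint paths, and then verifying the parameter constraints. This mirrors the decomposition into Lemma~14 and Theorems~9, 17, 18 of~\cite{sommer2009distance}.

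First, I would choose parameters. Take $r$ to be the smallest integer with $r \geq (4tw\alpha/g)^{1/C}$ for a target girth $g$, and set $g$ close to the Moore bound, namely $g = \Theta(\log n/\log r)$, so that $r^g = n^{\Theta(1)}$ automatically. Then verify the window $2\alpha \leq g \leq \log n$. The upper bound can be imposed directly; for the lower bound note that $\log r = O(\log(tw\alpha))$, so $g = \Theta(\log n/\log r) = \omega(\alpha)$. This is exactly where the hypothesis $\alpha = o(\log n/\log(w\log n))$ together with $w = n^{o(1)}$ is used: these force $\log(tw\alpha)$ to be $o(\log n/\alpha)$ for any polynomially bounded $t$ of interest, leaving room for $g \geq 2\alpha$.

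Second, realize an $r$-regular graph on $n$ vertices of girth at least $g$. This is classical: in the regime $r^{g/2} = o(n)$, a random $r$-regular graph has $o(1)$ expected number of cycles of length less than $g$ (a first-moment calculation on the configuration model), so such a graph exists after deletion or conditioning. Alternatively, explicit constructions via Ramanujan or LPS Cayley graphs suffice when the number-theoretic parameters cooperate, and the existence version is all that is needed.

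Third, lower bound $|\mathcal{P}(G,\ell,k)|$ with $\ell = \lfloor g/2\alpha\rfloor$ and $k = n/(20\ell)$. The key observation is that $2\ell < g$, so any walk of length $\ell$ in $G$ must be a simple path (a repeated vertex would close a cycle of length at most $2\ell$). Hence there are exactly $n \cdot r(r-1)^{\ell-1}$ ordered length-$\ell$ paths. Build $k$ vertex-disjoint paths greedily: after $i < k$ paths have been chosen, the forbidden vertex set has size $i(\ell+1) \leq n/10$, and each forbidden vertex lies on at most $O(\ell \cdot r^{\ell-1})$ length-$\ell$ paths (using regularity and girth to split a path through a forbidden vertex into a backward and forward segment). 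The total number of forbidden paths is thus $O(i \ell r^{\ell-1}) = O(n r^{\ell-1}/20)$, a negligible fraction of the $\Theta(nr^\ell)$ total when $r$ is not tiny, so $\Omega(nr^\ell)$ paths remain to choose at every greedy step. After $k$ steps and dividing by $k!$ to remove ordering, $|\mathcal{P}(G,\ell,k)| \geq (n r^\ell)^k / k!$ up to constant factors. Taking the $(k\ell)$-th root and using Stirling with $n/k = 20\ell$ gives $|\mathcal{P}|^{1/k\ell} \geq r \cdot (O(\ell))^{1/\ell} = \Omega(r)$, which is the third condition.

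The main obstacle is the simultaneous consistency of the four quantitative conditions: the lower bound $r \geq (4tw\alpha/g)^{1/C}$ pushes $r$ up, the Moore bound caps $r^{g/2} \lesssim n$, and we still need $g \geq 2\alpha$ and $r^g = n^{\Omega(1)}$. Threading this needle is exactly what the hypothesis $\alpha = o(\log n/\log(w \log n))$ is engineered for, and the bulk of the work is careful logarithmic arithmetic reconciling these constraints, together with a minor but essential check that the greedy counting bound degrades by only a constant factor at every step so that the $n$ and $k!$ factors contribute only $n^{O(1/\ell)} = O(1)$ under the $(k\ell)$-th root.
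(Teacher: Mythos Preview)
The paper does not prove this theorem at all: it is stated as a black-box import from Sommer--Verbin--Yu~\cite{sommer2009distance}, with the theorem header explicitly saying it combines Lemma~14 and Theorems~9, 17, 18 of that paper. There is no proof to compare your proposal against; the paper simply invokes the result and moves on to derive Theorem~\ref{theorem-distance-oracle} from it.

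Your sketch is a plausible reconstruction of the cited arguments and is broadly on the right track: the high-girth regular graph comes from a Moore-bound-tight construction (random or Ramanujan), and the path-counting bound $|\mathcal{P}(G,\ell,k)|^{1/k\ell}=\Omega(r)$ follows from a greedy argument using $2\ell<g(G)$ to force simplicity of walks. One point to be careful with: the theorem as stated quantifies over \emph{all} $t=t(n)$ with no upper constraint, yet the first bullet $r\geq(4tw\alpha/g(G))^{1/C}$ becomes unsatisfiable if $t$ is, say, super-polynomial in $n$ (since $r\leq n$). Your parenthetical ``for any polynomially bounded $t$ of interest'' quietly inserts a restriction that the theorem statement does not carry; in the intended application $t$ is the certificate size one is lower-bounding, so one may assume $t$ is small without loss, but strictly speaking this is a hypothesis the statement omits rather than something your proof can manufacture. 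Other than this (which is an artifact of how the paper phrased the imported result), your outline matches the standard route.
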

Now we prove Theorem~\ref{theorem-distance-oracle} by applying Lemma~\ref{lemma-distance-oracle-general} to the sequence of regular graphs $G_n$ constructed in Theorem~\ref{theorem-final-proof}. Note that in $G_n$, we have $|E|=n\cdot r/2=10k\ell\cdot r\geq 10k\ell\cdot(2tw/\ell)^{1/C}\geq k\ell(2tw/\ell)^{1/C}$, so the assumption of Lemma~\ref{lemma-distance-oracle-general} is satisfied. On the other hand, we have $|\mathcal{P}(G,\ell,k)|^{1/k\ell}=\Omega(r)$ and $e(|E|/k\ell)^{1-C}=\Theta(r^{1-C})$. Since $|E|\leq n^2 \leq (\ell k)^4\leq k^8$, we have $(e|E|)^{1/tk}=\Theta(1)$. And it holds that $k=\frac{n}{20\ell}=\Omega(\frac{n}{\log n})$. Ignoring constant factors, the bound in Lemma~\ref{lemma-distance-oracle-general} implies:
\begin{equation*}
s\geq \frac{n}{\log n}\left(\frac{r}{r^{1-C}}\right)^{\Omega(\ell/t)}=\frac{n}{\log n}r^{\Omega(\ell/t)}=\frac{n}{\log n}r^{\Omega(g(G)/\alpha t)}=\frac{n^{\Omega(1+1/\alpha t)}}{\log n}
\end{equation*}
Translating this to a lower bound of $t$, we have $t=\Omega\left(\frac{\log n}{\alpha \log(s\log n/n)}\right)$.
}

\ifabs{
The rest of this section can be found in Appendix~\ref{appendix-LSD}.
}{



\SectionLSD
}

\paragraph{Acknowledgment.}
We are deeply grateful to Kasper Green Larsen for helpful discussions about lower bound techniques in cell-probe model.


\begin{thebibliography}{10}

\bibitem{patrascu06eps2n}
A.~Andoni, P.~Indyk, and M.~P{\v a}tra{\c s}cu.
\newblock On the optimality of the dimensionality reduction method.
\newblock In {\em Proc. 47th IEEE Symposium on Foundations of Computer Science
  (FOCS)}, pages 449--458, 2006.

\bibitem{barkol2000tighter}
O.~Barkol and Y.~Rabani.
\newblock Tighter lower bounds for nearest neighbor search and related problems
  in the cell probe model.
\newblock {\em Journal of Computer and System Sciences}, 64(4):873--896, 2002.

\bibitem{borodin1999lower}
A.~Borodin, R.~Ostrovsky, and Y.~Rabani.
\newblock Lower bounds for high dimensional nearest neighbor search and related
  problems.
\newblock In {\em Proc. 31st ACM Symposium on Theory of Computing (STOC)},
  pages 312--321, 1999.

\bibitem{buhrman2002complexity}
H.~Buhrman and R.~De~Wolf.
\newblock Complexity measures and decision tree complexity: a survey.
\newblock {\em Theoretical Computer Science}, 288(1):21--43, 2002.

\bibitem{husfeldt1998hardness}
T.~Husfeldt and T.~Rauhe.
\newblock Hardness results for dynamic problems by extensions of fredman and
  saks' chronogram method.
\newblock In {\em Proc. 25th International Colloquium on Automata, Languages
  and Programming}, pages 67--78, 1998.

\bibitem{indyk2001approximate}
P.~Indyk.
\newblock On approximate nearest neighbors under $\ell_\infty$ norm.
\newblock {\em Journal of Computer and System Sciences}, 63(4):627--638, 2001.

\bibitem{jayram2003cell}
T.~Jayram, S.~Khot, R.~Kumar, and Y.~Rabani.
\newblock Cell-probe lower bounds for the partial match problem.
\newblock In {\em Proc. 35th ACM Symposium on Theory of Computing (STOC)},
  pages 667--672, 2003.

\bibitem{larsen2012higher}
K.~G. Larsen.
\newblock Higher cell probe lower bounds for evaluating polynomials.
\newblock In {\em Proc. 53rd IEEE Symposium on Foundations of Computer Science
  (FOCS)}, pages 293--301, 2012.

\bibitem{liu2004strong}
D.~Liu.
\newblock A strong lower bound for approximate nearest neighbor searching.
\newblock {\em Information Processing Letters}, 92(1):23--29, 2004.

\bibitem{miltersen1998data}
P.~B. Miltersen, N.~Nisan, S.~Safra, and A.~Wigderson.
\newblock On data structures and asymmetric communication complexity.
\newblock {\em Journal of Computer and System Sciences}, 57(1):37--49, 1998.

\bibitem{panigrahy2008geometric}
R.~Panigrahy, K.~Talwar, and U.~Wieder.
\newblock A geometric approach to lower bounds for approximate near-neighbor
  search and partial match.
\newblock In {\em Proc. 49th IEEE Symposium on Foundations of Computer Science
  (FOCS)}, pages 414--423, 2008.

\bibitem{panigrahy2010lower}
R.~Panigrahy, K.~Talwar, and U.~Wieder.
\newblock Lower bounds on near neighbor search via metric expansion.
\newblock In {\em Proc. 51th IEEE Symposium on Foundations of Computer Science
  (FOCS)}, pages 805--814, 2010.

\bibitem{patrascu2007lower}
M.~P{\v a}tra{\c s}cu.
\newblock Lower bounds for 2-dimensional range counting.
\newblock In {\em Proc. 30th ACM Symposium on Theory of Computing (STOC)},
  pages 40--46, 2007.

\bibitem{patrascu2008data}
M.~P{\v a}tra{\c s}cu.
\newblock (data) structures.
\newblock In {\em Proc. 49th IEEE Symposium on Foundations of Computer Science
  (FOCS)}, pages 434--443, 2008.

\bibitem{patrascu11structures}
M.~P{\v a}tra{\c s}cu.
\newblock Unifying the landscape of cell-probe lower bounds.
\newblock {\em SIAM Journal on Computing}, 40(3):827--847, 2011.
\newblock See also FOCS'08.

\bibitem{patrascu06pred}
M.~P{\v a}tra{\c s}cu and M.~Thorup.
\newblock Time-space trade-offs for predecessor search.
\newblock In {\em Proc. 38th ACM Symposium on Theory of Computing (STOC)},
  pages 232--240, 2006.

\bibitem{patrascu2010higher}
M.~P{\v a}tra{\c s}cu and M.~Thorup.
\newblock Higher lower bounds for near-neighbor and further rich problems.
\newblock {\em SIAM Journal on Computing}, 39(2):730--741, 2010.
\newblock See also FOCS'06.

\bibitem{sommer2009distance}
C.~Sommer, E.~Verbin, and W.~Yu.
\newblock Distance oracles for sparse graphs.
\newblock In {\em Proc. 50th IEEE Symposium on Foundations of Computer Science
  (FOCS)}, pages 703--712, 2009.

\bibitem{yin2010cell}
Y.~Yin.
\newblock Cell-probe proofs.
\newblock {\em ACM Transactions on Computation Theory (TOCT)}, 2(1):1, 2010.

\end{thebibliography}

\ifabs{
\section*{Appendix}
\appendix
\input{appendix}
}
{}

\end{document}